\documentclass[12pt,a4paper,reqno]{revtex4-1}
\usepackage{graphicx}
\usepackage{dcolumn}
\usepackage{bm}
\usepackage{amsmath, amsthm, amscd, amsfonts, amssymb, graphicx, color}
\usepackage{url}
\usepackage{amsthm}
\usepackage{layout}
\usepackage{epsfig}
\usepackage{graphicx}
\usepackage{epstopdf}
\usepackage{booktabs}
\usepackage{float}
\usepackage{subfigure}
\usepackage[hyperindex,breaklinks]{hyperref}
\usepackage{xcolor}
\usepackage{enumitem}
\usepackage{setspace}
\usepackage{times}

\usepackage{extarrows}
\usepackage{bbm}
\usepackage{threeparttable}

\def\Tr{{\rm Tr}}
\newtheorem{theorem}{Theorem}[section]

\newtheorem{corollary}[theorem]{Corollary}

\newtheorem{proposition}[theorem]{Proposition}
\newtheorem{definition}[theorem]{Definition}

\numberwithin{equation}{section}

\allowdisplaybreaks[4]
\hypersetup{colorlinks=true,linkcolor=blue, anchorcolor=green, citecolor=cyan, urlcolor=red, filecolor=magenta, pdftoolbar=true}

\usepackage{cleveref}

\crefformat{section}{#2#1#3}
\Crefformat{section}{#2#1#3}
\crefformat{subsection}{#2#1#3}
\Crefformat{subsection}{#2#1#3}
\crefformat{subsubsection}{#2#1#3}
\Crefformat{subsubsection}{#2#1#3}

\crefname{section}{sec.}{secs.}
\Crefname{section}{Sec.}{Secs.}
\crefname{subsection}{subsec.}{subsecs.}
\Crefname{subsection}{Subsec.}{Subsecs.}
\crefname{subsubsection}{subsubsec.}{subsubsecs.}
\Crefname{subsubsection}{Subsubsec.}{Subsubsecs.}

\makeatletter
\renewcommand{\p@subsection}{}
\renewcommand{\p@subsubsection}{}
\makeatother

\usepackage{titlesec} 

\titleformat{\section}
    {\centering\large\bfseries} 
    {\thesection} 
    {1em} 
    {} 
    [\vspace{0.5ex}] 

\titleformat{\subsection}
    {\centering\normalsize\bfseries}  
    {\thesubsection}
    {1em}
    {}
    [\vspace{0.3ex}]

    \titleformat{\subsubsection}
    {\centering\normalsize\bfseries}  
    {\thesubsubsection}
    {1em}
    {}
    [\vspace{0.1ex}]

\titlespacing*{\section}{0pt}{3ex}{2ex} 
\titlespacing*{\subsection}{0pt}{2.5ex}{1.5ex}

\renewcommand{\thesection}{\arabic{section}}           
\renewcommand{\thesubsection}{\thesection.\arabic{subsection}} 
\renewcommand{\thesubsubsection}{\thesubsection.\arabic{subsubsection}} 

\titlespacing*{\section}{0pt}{3ex}{2ex} 
\titlespacing*{\subsection}{0pt}{2.5ex}{1.5ex}

\renewcommand{\thesection}{\arabic{section}}           
\renewcommand{\thesubsection}{\thesection.\arabic{subsection}} 
\renewcommand{\thesubsubsection}{\thesubsection.\arabic{subsubsection}} 

\titlespacing*{\section}{0pt}{12pt}{6pt} 

\usepackage{appendix}

\usepackage{xcolor}
\definecolor{blue}{RGB}{0, 0, 255}

\newcommand{\yaqi}[1]{\textcolor{black}{#1}}

\usepackage{chngcntr}
\counterwithout{equation}{section}

\begin{document}

\title[]{An Operator Limit Model of Entanglement Percolation in Series-Parallel Quantum Networks}

\author{Kan He}
\email{hekanquantum@163.com}
\affiliation{College of Mathematics, Taiyuan University of Technology, Taiyuan, 030024, China; School of Mathematics and Statistics, Shaanxi Normal University, Xi'an, 710119, China}

\author{Jinchuan Hou}
\email{jinchuanhou@aliyun.com}
\affiliation{College of Mathematics, Taiyuan University of Technology, Taiyuan, 030024, China}

\author{Yaqi Zhao}
\email{205892@xaut.edu.cn}
\affiliation{School of Mathematics, Xi’an University of Technology, Xi’an, 710054, China}

\thanks{{The authors are listed in alphabetical order by surname. All authors contributed equally to this work.} \yaqi{We thank Professor Jun Shu for help and discussions.} This work was supported by the National Natural Science Foundation of China under Grants No.~12271394 {and No.~12571138}.}


\begin{abstract}

 \textbf{Abstract:} The realization of entanglement distribution (ED) in large-scale quantum networks (QNs), as a fundamental theoretical framework of quantum communication, faces significant challenges. Percolation theory, drawn from statistical physics, has been identified as a potential solution to this problem. Nevertheless, when addressing ED, percolation theory primarily relies on numerical simulations and approximate algorithms. This approach often lacks interpretability and controllability for many phenomena occurring during the ED process in large-scale networks, including emergent phenomena near the threshold. To address these limitations, in this paper, we {focus on} leveraging operator theory to construct a mathematical framework for entanglement percolation in series-parallel QNs. Through the application of operator-theoretic methodologies, we analyze several critical issues within the entanglement percolation process. These issues include the sequence of operations and the existence of a percolation threshold. We represent the three categories of operations in entanglement percolation using super-operators. Consequently, the entire percolation process can be depicted as the limit process of a superoperator sequence, briefly as $${\rm lim}_{N\rightarrow +\infty}{\rm \Lambda}_N\circ \Lambda_{N-1}\circ \dots \Lambda_1(\rho(N)),$$ where $\{{\rm \Lambda}_i\}$ is a  sequence of  superoperators, $\rho(N)$ is a density operator corresponding to the QN. We introduce the concept of order independence in the sequence of various percolation operations and provide criteria for determining this property. Utilizing an operator limit model, we analyze the conditions for the existence of a percolation threshold, which are equivalent to the conditions for the success of percolation. Unlike existing numerical approximation methods for handling entanglement percolation, we provide a more analytical operator-theoretic solution framework. This framework enables us to gain a more precise understanding of the entanglement percolation process and the mechanism behind the special phenomena involved.

\vspace{10pt}

\noindent\textbf{Keywords:} Superoperators, quantum networks, entanglement percolation, quantum entanglement.
\end{abstract}

\pacs{03.67.Mn, 03.65.Ud, 03.67.-a}
\maketitle

\tableofcontents \newpage

\section{Introduction}

The objective of entanglement distribution is to facilitate the sharing of quantum entanglement between two long-distance parties. However, the implementation of the long-distance quantum communication necessitates the involvement of multiple repeaters because of the constraints imposed by {current devices and technologies}~\cite{pan4600, repeater1}.
Large-scale and long-distance quantum communications give rise to quantum networks (QNs) \cite{Quantum_network1997, Qinter,Network_2023,Network_2020}. Consequently, a new challenge emerges {naturally}: the investigation and practical realization of entanglement distribution within the context of large-scale quantum networks. {Percolation, a theory derived from {statistical physics}, offers a feasible approach to addressing this problem.}

Percolation theory~\cite{Percolation_Shklovskii1984} studies the movement and connectivity of substances through disordered media, such as liquids filtering through porous materials or information spreading in networks. It originated in statistical physics and mathematics to model phase transitions and critical phenomena, where systems exhibit abrupt changes in macroscopic behavior (e.g., from isolated clusters to a connected ``infinite'' cluster) depending on local connectivity rules. The term ``percolation'' was formally introduced in 1957 by mathematicians Broadbent and Hammersley~\cite{Broadbent_Hammersley_1957}, who modeled fluid flow through random lattices. Their work laid the groundwork for analyzing probabilistic connectivity in systems with random structures, such as crystals or mazes.
Percolation theory mathematically evolved through foundational contributions by Broadbent, Hammersley, Sykes, Essam, Fisher, and others, advancing critical thresholds, phase transitions, and high-dimensional models, while Ukrainian mathematicians (e.g., Fil'chakov and Lavrik)  expanded its frameworks, collectively shaping modern statistical mechanics and probability~\cite{SykesEssam1964,M_E_Fisher_1967,Percolation_filchakov1984,Kirkpatrick1973}. Hugo Duminil-Copin's breakthrough work rigorously resolved longstanding probabilistic challenges in understanding phase transitions, particularly in three- and four-dimensional systems~\cite{Hugo2021}, and thereby significantly advancing the mathematical foundations of statistical physics and earning him the 2022 Fields Medal.
Percolation theory has propelled research in statistical physics regarding system connectivity and critical phenomena, investigated connectivity and robustness in communication and transportation networks, and explored in geometric group theory the relationship between group properties and the non-amenability of percolation. Moreover, it has raised many natural questions and conjectures in probability theory and combinatorics, providing impetus for the development of new methods.

The breakthroughs integrate percolation theory into quantum networks to enhance entanglement distribution and error correction.
The percolation theory posits that within the realm of random graph theory, a critical probability threshold exists. Once this threshold is surpassed, a giant node cluster will propagate across the entire network~\cite{Classical_percolation1999_2,Classical_percolation2006_2}. This theory primarily investigates the pathways that can initiate at one extremity and conclude at the opposite end, effectively permeating the entire network. In 2007, this theory is utilized to analyze the process of entanglement distribution between two long-distance parties through repeaters~\cite{Percolation_CEP2007}, for which, the concept of the classical entanglement percolation and the quantum entanglement percolation were first introduced which  demonstrates that quantum entanglement swapping protocol~\cite{MAMA1993,Enetanglement_swapping_1999} can reduce the threshold of the entanglement in specific regular lattice, thereby improving the robustness of entanglement distribution in this QN. This distinctive phenomenon highlights the quantum advantage over classical bond percolation~\cite{Classical_percolation1980}. Nowadays the entanglement percolation has become a fascinating concept that lies at the intersection of quantum information theory~\cite{Nielsen2010}, statistical mechanics~\cite{Statistical_mechanics1956}, and network science~\cite{Statistical_ComplexNet2002,Complex_net_percolation2010}. It examines the propagation and distribution of quantum entanglement across complex networks~\cite{Complex_network2006,Complex_net_2011}, drawing parallels with classical percolation theory~\cite{Classical_percolation1980,Classical_percolation1980_2,Classical_percolation1982,Classical_percolation1994,Classical_percolation1999,Classical_percolation2015}, which studies the connectivity of networks under random occupation of their links or nodes.

{
Since then, the research {on} entanglement percolation is generalized to mixed-state QNs~\cite{Percolation_mixed2010,Percolation_noisy}, multipartite-state QNs~\cite{Percolation_multi2010} and complex network models~\cite{Percolation2008,Percolation2009,Percolation_complex2009,Percolation_cluster2011,Percolation_LargeScale2022,path-percolation_UVflower2024}. Entanglement percolation theory has made great progress. However, the network percolation protocols in these studies remained probabilistic. A more effective deterministic entanglement transmission (DET) scheme~\cite{DET2023} was found to correspond to a new
variant of percolation, known as concurrence percolation theory (ConPT)~\cite{Percolation_ConPT2021,Percolation_LargeScale2022}.
In the DET scheme, each local operation and classical communication (LOCC) step succeeds deterministically (with a probability of one), thus eliminating interference from classical randomness and {in contrast to} its probabilistic predecessors~\cite{Percolation_CEP2007}; studying the corresponding ConPT subsequently reveals that large-scale DET fundamentally differs from network percolation under classical noise~\cite{netw-percolation_ceah00} and bears a distinct quantum interpretation. More recently, new correspondences have been identified, e.g.,~between quantum memories and continuum percolation~\cite{Percolation_QMemo2024}, as well as between entanglement routing and path percolation~\cite{path-percolation_mhrk24}. These correspondences further underscore the profound changes in percolation landscapes introduced by additional quantum facilities such as repeaters. It has also revealed the correspondence between quantum memories and continuum percolation~\cite{Percolation_QMemo2024}, as well as between entanglement routing and path percolation~\cite{path-percolation_mhrk24}.

The theory of entanglement percolation uses statistical and numerical methods to investigate the pathways to achieve entanglement distribution within quantum networks. However, existing studies still rely substantially on numerical simulations and approximation methods, which limits the analytical characterization of the underlying mechanisms and the asymptotic behavior of large-scale entanglement percolation. These limitations are not necessarily intrinsic to percolation problems. Recently, some amazing works have been obtained that explore mathematical  explanations for the emergent phenomana of complex large models (\cite{math-model-bigmodel1, math-model-bigmodel2, Operator_model_emergent}). Specially in ~\cite{Operator_model_emergent}, Shu, Jia, Meng and Xu have developed operator-theoretic approaches to provide mathematical descriptions of emergent behavior and scaling properties in other complex systems. In particular, a limit-theoretic framework for foundation models describes a large-scale system through the composition of basic operators and characterizes its asymptotic behavior through the limit of the resulting operator sequence. This perspective suggests that the emergence of macroscopic behavior in a large-scale system can, in suitable settings, be studied through the composition and limiting behavior of its underlying operators. Inspired by aforementioned researches, we establish an operator-theoretic model for entanglement percolation in quantum networks. Within this framework, the elementary operations involved in entanglement percolation are represented by superoperators, and the overall percolation process is formulated through their composition. This enables us to analytically characterize the ordering of percolation operations, the existence of percolation thresholds, and the associated saturation behavior.

\section{Preliminary}

In this section, we {introduce} the operator-theoretic representation of the fundamental concepts of quantum mechanics and outline the fundamental topics of  percolation theory.

\subsection{Operator and Quantum theory}

Since the mathematical framework of quantum mechanics is based on Hilbert spaces and the theory of operators on them \cite{vonNeumann1996}, this {provides a natural framework for studying} the entanglement percolation problem in quantum networks {using the operator theory}.
Here we briefly review the fundamental concepts and properties of quantum systems and quantum states in terms of operator theory, and introduce the notation used throughout this paper.

{\bf Quantum system.} A quantum system is described by a separable complex Hilbert space $\mathcal{H}$, with vectors denoted by $|\psi\rangle\in\mathcal{H}$ and inner product of $|\psi\rangle$ and $|\phi\rangle$ by $\langle\phi|\psi\rangle$. In this paper, a system is called discrete-variable
(DV) if $\dim \mathcal{H}<\infty$; is single-mode continuous-variable (CV) if $\dim \mathcal{H}=\infty$ with a given orthonormal basis---the Fock basis $\{|n\rangle\}_{n=0}^\infty$---as well as the annihilation operator $\hat{a}$ and its adjoint $\hat{a}^\dag$---the creation operator---determined by
\begin{eqnarray*}
    \hat{a}|n\rangle=\sqrt{n}|n-1\rangle, \quad \hat{a}^\dagger |n\rangle=\sqrt{n+1}|n+1\rangle.
\end{eqnarray*}
In a CV system $\mathcal{H}$, the position operator $\hat{x}$ and the momentum operator $\hat{p}$ are defined as
\begin{eqnarray*}
    \hat{x}=\sqrt{\frac{\hbar}{2\omega}}(\hat{a}+\hat{a}^\dagger),\quad\hat{p}=-i\sqrt{\frac{\hbar\omega}{2}}(\hat{a}-\hat{a}^\dagger).
\end{eqnarray*}
As usual, we set $\hbar=\omega=1$, then $\hat{x}$ and $\hat{p}$ become
\begin{eqnarray*}
    \hat{x}=\frac{1}{\sqrt{2}}(\hat{a}+\hat{a}^\dagger),\quad\hat{p}={\frac{-i}{\sqrt{2}}}(\hat{a}-\hat{a}^\dagger).
\end{eqnarray*}
Let two
systems A and B correspond to Hilbert spaces $\mathcal{H}_A$ and $\mathcal{H}_B$, respectively. Their composite
system AB is a bipartite system described by the tensor-product space $\mathcal{H}_{AB} := \mathcal{H}_A\otimes \mathcal{H}_B$.
If $\mathcal{H}_1$ and $\mathcal{H}_2$ are single-mode continuous-variable (CV) systems, then their tensor product $\mathcal{H}_{12} = \mathcal{H}_1\otimes \mathcal{H}_2$ is called a two-mode CV system. The annihilation and creation operators acting on $\mathcal{H}_{12}$ satisfy the canonical commutation relations (CCRs):
\begin{eqnarray*}
    \left[\hat{a}_k,\hat{a}^\dagger_l\right]=\delta_{kl}I,\quad\left[\hat{a}_k,\hat{a}_l\right]=\left[\hat{a}^\dagger_k,\hat{a}^\dagger_l\right]=0,\quad k,l=1,2.
\end{eqnarray*}
As is standard in the physics literature, we identify $\hat{a}_1$ with $\hat{a}_1 \otimes I_2$ and $\hat{a}_2$ with $I_1 \otimes \hat{a}_2$, where $I_k$ is the identity operator on $\mathcal{H}_k$, respectively. We adopt this identification whenever no confusion arises. For any positive integer $n$, the $n$-mode CV system is defined similarly.

{\bf Quantum state.} For a quantum system described by Hilbert space $\mathcal{H}$, a quantum state can be
described as a density operator $\rho$, which is a positive operator with unit trace acting on $\mathcal{H}$. $\rho$ is
called a pure state if $\rho$ is a rank-one projection, otherwise, $\rho$ is called a mixed state. As the range
of a pure state $\rho$ is one-dimensional, which is spanned by a unit vector $|\psi\rangle\in\mathcal{H}$, thus $\rho$ can be
written as $\rho=|\psi\rangle\langle\psi|$, the rank-1 projection with range spanned by $|\psi\rangle$ , and we also say that every
unit vector is a pure state. Moreover, every mixed state $\rho$ can be written as a generalized convex
combination of pure states:
$$ \rho=\sum_{j}p_j|\psi_j\rangle\langle\psi_j|, \quad p_j\geq 0, \sum_jp_j=1.
$$
Denote by $\mathbb S(\mathcal{H})$ the set of all density operators on $\mathcal{H}$.
An important case in quantum information theory is the qubit system, where $\mathcal{H}=\mathbb{C}^2$ and its orthonormal basis may be denoted by
$\{|0\rangle,|1\rangle\}$. In particular, when these basis vectors are represented as $|0\rangle=\begin{pmatrix}
    1\\0
\end{pmatrix}$
and
$|1\rangle=\begin{pmatrix}
    0\\1
\end{pmatrix}$,
$\{|0\rangle,|1\rangle\}$ is called the computational basis. If $\dim\mathcal{H}=d$,  $\rho\in\mathbb{S}(\mathcal {H})$ is called a qudit state.

In the case of CV systems, an important subclass in $\mathbb S(\mathcal{H})$ consists of Gaussian states. Consider $n$-mode CV system $\mathcal{H}=\mathcal{H}_{12\dots n}$ with $\hat{x}_j, \hat{p}_j$, the position operator and momentum operator, respectively for $j$-th mode, $j=1,2,\dots,n$.
For a state $\rho\in \mathbb S(\mathcal{H})$, its characteristic function is defined
as $\chi(\xi):=\Tr\left[\mathcal W(\xi)\rho\right]$ for all $2n$-dimensional real vector $\xi\in\mathbb{R}^{2n}$, where ${\mathcal W}(\xi):=\exp{(i\xi^{T}\hat{R})}$ is the {Weyl operator} and
$$\hat{R}=(\hat{R}_1,\hat{R}_2,\hat{R}_3,\hat{R}_4,\dots,\hat{R}_{2n-1},\hat{R}_{2n})^T=(\hat{x}_1,\hat{p}_1,\hat{x}_2,\hat{p}_2,\dots,\hat{x}_n,\hat{p}_n)^T.$$
Here,
$$(f_1,f_2,\dots,f_k)^T=\begin{pmatrix}
    f_1\\f_2\\\vdots\\f_k
\end{pmatrix},\quad \begin{pmatrix}
    f_1\\f_2\\\vdots\\f_k
\end{pmatrix}^T=(f_1,f_2,\dots,f_k).$$
The state $\rho$ is called an $n$-mode Gaussian state if its characteristic function has the form
$$\chi(\xi)=\exp{\left(-\frac{1}{4}\xi^{T}\Gamma\xi+id^{T}\xi\right)},$$
where
$$d=(\langle\hat{R}_1\rangle,\langle\hat{R}_2\rangle,\ldots,\langle\hat{R}_{2n}\rangle)^T\in\mathbb{R}^{2n}$$
is called the mean vector of $\rho$ and $\Gamma=(\gamma_{kl})\in\mathcal{M}_{2n}(\mathbb{R})$, the algebra of all $2n\times 2n$ matrices over the real field $\mathbb{R}$, is the covariance matrix of $\rho$ defined by
$$\gamma_{kl}=\Tr\left[\rho(\Delta\hat{R}_k\Delta\hat{R}_l+\Delta\hat{R}_l\Delta\hat{R}_k)\right]$$
where $\langle\hat{R}_k\rangle=\Tr\left[\rho\hat{R}_k\right]$ and $\Delta\hat{R}_k=\hat{R}_k-\langle\hat{R}_k\rangle$. Note that a matrix $\Gamma\in\mathcal{M}_{2n}(\mathbb{R})$ is a covariance matrix for some $n$-mode Gaussian state if and only if $\Gamma$ is real symmetric and satisfies the condition $\Gamma+iJ_{n}\geq 0$, where $\underbrace{J_n=J\oplus J\oplus\cdots \oplus J}_{n}\in\mathcal{M}_{2n}(\mathbb{R})$ with
$J=\begin{pmatrix}
    0 & 1\\
    -1 & 0
\end{pmatrix}$.
In this work, we {mainly} treat two-mode Gaussian states when discuss CV systems.

\textbf{Quantum operation}. The most general quantum operation $\mathcal{E}$ is a trace-nonincreasing completely positive (CP) linear map transforming one state into another state, which can be described as a probabilistic (stochastic) physical process
\begin{eqnarray*}
    \rho\to \frac{\mathcal{E}(\rho)}{\Tr[\mathcal{E}(\rho)]}.
\end{eqnarray*}
where $$\Tr[\mathcal{E}(\rho)]\leq 1.$$
Such a map can be expressed via the Kraus operator decomposition~\cite{Kraus1983}
\begin{eqnarray}\label{eq-Kraus}
    \mathcal{E}(\rho)=\sum_i V_i\rho V_i^\dagger,
\end{eqnarray}
for some $V_i\in\mathcal{B}(\mathcal{H})$ (the von Neumann algebra of all bounded linear operators acting on $\mathcal{H}$) with the constraint $\sum_i V_i^\dagger V_i\leq I$.
The success probability $p$ of this operation is $\Tr[\mathcal{E}(\rho)]$, which depends on the input state $\rho$.
A CP map is deterministic (i.e., a quantum channel) if and only if it is trace-preserving ($\Tr[\mathcal{E}(\rho)]=1$ for all $\rho$), corresponding to the equality $\sum V_i^\dagger V_i=I$ in Eq.~\eqref{eq-Kraus}.
One example is the quantum gate, which is a unitary operator $U:\mathcal{H}\to\mathcal{H}$ acting on states as
\begin{eqnarray*}
    |\psi\rangle\mapsto U|\psi\rangle \quad {\rm or }\  \rho\mapsto U\rho U^\dagger.
\end{eqnarray*}
Another example is the quantum measurement which is a collection $\left\{M_m\right\}$ of measurement operators satisfying the completeness equation $\sum_m M_m^\dagger M_m=I$. Here, the index $m$ refers to the measurement outcomes that
may occur with the probability $p_m=\Tr\left[M_m\rho M_m^\dagger\right]$ in the experiment. The state of the system after the measurement of outcome $m$ is $\rho_m=p_m^{-1}M_m\rho M_m^\dagger$.

For two spatially separated systems $A$ and $B$, quantum operations between them are confined to LOCCs.
In this bipartite system, the mathematical structure of LOCC maps consist of complicated operations that can be composed out of local operations in the form of $\mathcal{E}_{AB}=\mathcal{E}_A\otimes\mathcal{E}_B$ with local quantum channels $\mathcal{E}_A$ and $\mathcal{E}_B$, and the following adaptive strategies enabled by classical feedback~\cite{LOCC2007}
\begin{eqnarray*}
    \mathcal{E}'(\rho)=\sum_{i}(A_i\otimes I_B\rho_{AB}A_i^\dagger \otimes I_B)\otimes |i\rangle_{B'}\langle i|,
\end{eqnarray*}
and
\begin{eqnarray*}
    \mathcal{E}''(\rho)=\sum_{i}|i\rangle_{A'}\langle i|\otimes (I_A\otimes B_i\rho_{AB} I_A \otimes B_i^\dagger),
\end{eqnarray*}
where $\sum_{i}A_i^\dagger A_i=I_A$, $\sum_{i}B_i^\dagger B_i=I_B$, and $A'$ and $B'$ are auxiliary systems of parties $A$ and $B$, respectively~\cite{Axiomatic_def_measure2009}.

\textbf{Quantum entanglement}. Entanglement is a quantum correlation between subsystems of a composite quantum system.
A bipartite state $\rho\in\mathcal{H}_{AB}$ is separable if it admits a convex decomposition into product states
\begin{eqnarray*}
    \rho_{AB}=\sum_j p_j \rho_A^{(j)} \otimes \rho_B^{(j)},\quad p_j>0,\quad \sum_j p_j=1,
\end{eqnarray*}
where $\rho_A^{(j)}$ and $\rho_B^{(j)}$ are states of $\mathcal{H}_A$ and $\mathcal{H}_B$, respectively, or,  trace-norm limit of such combinations if $\dim \mathcal{H}_{AB}=\infty$.  Otherwise, $\rho_{AB}$ is called {\it entangled}. Particularly, a pure state $|\psi\rangle\in\mathcal{H}_{AB}$ is separable if and only if there are $|\psi_{A}\rangle\in\mathcal{H}_{A}$ and $|\psi_{B}\rangle\in\mathcal{H}_{B}$ such that $|\psi\rangle=|\psi_{A}\psi_{B}\rangle=|\psi_{A}\rangle\otimes|\psi_{B}\rangle$.

\textbf{Schmidt decomposition and majorization~\cite{Matrix_analysis_Bhatia,VG2010,Nielsen1999,MRN}.}\label{sec-majorizaiton}
{For an arbitrary pure state $|\psi\rangle$ in the bipartite quantum system $H_{AB}$, there always exists a product orthonormal sequence of the form $\left\{|i_n j_n\rangle\right\}$ such that the state is expressed in its \emph{Schmidt decomposition}
\begin{eqnarray*}\label{eq-Schmidt_decomposition}
    |\psi\rangle=\sum_{n} \sqrt{\lambda_n} |i_n j_n\rangle
\end{eqnarray*}
with $\lambda_n>0$, $\lambda_1 \geq \lambda_2 \geq \cdots$, and $\sum_n \lambda_n=1$.
Moreover, these Schmidt coefficients $\{\lambda_n\}_n$ correspond to the nonzero elements of the spectrum of either reduced state $\rho_{A}=\Tr_B(|\psi\rangle\langle\psi|)\in\mathbb{S}(\mathcal{H}_A)$ and $\rho_{B}=\Tr_A(|\psi\rangle\langle\psi|)\in\mathbb{S}(\mathcal{H}_B)$. Where the partial trace $\Tr_A$ is the linear super-operator defined by $\Tr_A(A\otimes B)=\Tr(A)B$.
Note that, $|\psi\rangle$ is entangled if and only if its maximal Schmidt coefficient $\lambda_1<1$.
Particularly, the Schmidt decomposition of a two-qubit pure state always takes the Schmidt coefficients $\{\lambda,1-\lambda\}$ with $\lambda\in[1/2,1]$, such that
\begin{eqnarray}\label{eq-2qubit}
    |\psi_{\lambda}\rangle=\sqrt{\lambda}|00\rangle+\sqrt{1-\lambda}|11\rangle.
\end{eqnarray}
In the case $\lambda=1/2$, $|\psi_{\lambda}\rangle$ is called a singlet state or maximally entangled two-qubit state.
Therefore, every bipartite pure state is completely characterized by the vector formed by the descending order of Schmidt coefficients which we term the \emph{Schmidt-value vector}.
Moreover, in the infinite-dimensional bipartite system, {\it a two-mode squeezed vacuum state}  (TMSVS)---common two-mode pure Gaussian states---with a single parameter $r>0$, its Schmidt decomposition is in form of
\begin{eqnarray}\label{eq-TMSVS}
    |\psi^r\rangle=\sqrt{1-\chi^2}\sum_{n=0}^{+\infty}\chi^n|nn\rangle.
\end{eqnarray}
}

Let $\ell^1$ denote the Lebesgue space composed of discrete absolutely summable sequences, $\ell^1=\{(x_1,x_2,\ldots):\ \sum_{n}|x_{n}|<\infty,\ x_{n}\in\mathbb{R}\}$.
For given positive vectors $\vec{x},\ \vec{y}\in(\ell^{1})_{+}$, $\vec{x}$ is said to be {\it majorized} by $\vec{y}$, denoted by $\vec{x}\prec \vec{y}$ or $\vec{y}\succ \vec{x}$, if $\sum_{j=1}^{k}x_{j}^{\downarrow}\leq\sum_{j=1}^{k}y_{j}^{\downarrow}$ for all positive integers $k$ and $\sum_{j}x_{j}^{\downarrow}=\sum_{j}y_{j}^{\downarrow}$. The notation $v_{j}^{\downarrow}$ denotes the $j$-th largest component of the positive vector $\vec{v}\in \ell^1$.
{For two arbitrary bipartite states $|\phi'\rangle$ and $|\psi''\rangle$ with Schmidt-value vectors $\vec{\lambda'}$ and $\vec{\lambda''}$, $|\phi'\rangle$ can be converted into $|\psi''\rangle$ via deterministic LOCC if and only if $\vec{\lambda'}\prec\vec{\lambda''}$~\cite{Nielsen1999,MRN}.
}

\textbf{Singlet conversion probability}.
For an entangled bipartite pure state $|\varphi\rangle_{AB}$ in a finite-dimensional system, denoted by $\vec{\mu}=(\mu
_{1},\mu_{2},\cdots,\mu_{d})^{\mathrm{T}}$ $(\sum_{j}\mu_{j}=1)$ the Schmidt-value vector consisting of Schmidt coefficients of $|\varphi\rangle_{AB}$ in descending order.
Then the singlet conversion probability (SCP) $P_{\mathrm{SCP}}$~\cite{Percolation_CEP2007} of $|\varphi\rangle_{AB}$---the maximum probability to convert the state $|\varphi\rangle_{AB}$ into a singlet state
\begin{eqnarray}\label{eq-singlet}
    |\psi^{+}\rangle=\frac{|00\rangle+|11\rangle}{\sqrt{2}}
\end{eqnarray}
via LOCC---is
\begin{eqnarray*}\label{eq-SCP_pure}
P_{\mathrm{SCP}}(|\varphi\rangle_{AB})=\min\left\{1,2(1-\mu_{1})\right\}
\end{eqnarray*}
which is verified in Ref.~\cite{SCP1999}.

\textbf{Entanglement measure}.
An entanglement measure is a nonnegative function $E$ on bipartite states quantifying the entanglement. It meets two fundamental requirements at least: (i) zero on separable states and (ii) non-increasing under any LOCC $\Phi$, that is, $E(\Phi(\rho))\leq E(\rho)$~\cite{EM2000,measure2007,Axiomatic_def_measure2009}.
Experimentally, for a LOCC $\Phi$, $\Phi(\rho)$ is often given by an ensemble $\{p_j,\rho_j'\}_j$ made by all possible outcome states $\rho_j'$ with corresponding probabilities $p_j$. If the entanglement measure $E$ also satisfies the following condition
\begin{eqnarray*}
\label{eq_monotone}
  E(\rho)\geq\sum_j q_j E(\rho_j')
\end{eqnarray*}
for every $\rho$, then $E$ is called an \emph{entanglement monotone}~\cite{Axiomatic_def_measure2009}.
We introduce three entanglement measures as follows.

\emph{Concurrence.}
For a bipartite pure state $\rho=|\psi\rangle\langle\psi|$ of a DV system with Schmidt coefficients $\lambda_0$, $\lambda_1$,..., $\lambda_{d-1}$, the concurrence~\cite{Hill1997,Concurrence2001,Concurrence_convex_roof2003_2,Concurrence2013} is
\begin{eqnarray}\label{eq-concurrence}
    c(\rho)=\sqrt{2\sum_{k\neq s}\lambda_k^2\lambda_s^2}.
\end{eqnarray}
When $|\psi\rangle=|\psi_{\lambda}\rangle$, the concurrence in Eq.~\eqref{eq-concurrence} becomes $c(\rho)=\sqrt{2\lambda(1-\lambda)}$.
For a mixed state, the concurrence can be defined using the generalized convex roof construction~\cite{Concurrence_convex_roof2003_2}.

{\it $G$-concurrence.}
The \emph{$G$-concurrence}~\cite{G_concurrence2005} of the aforementioned pure state $\rho=|\psi\rangle\langle\psi|$ with Schmidt coefficients $\lambda_0$, $\lambda_1$,..., $\lambda_{d-1}$ is
\begin{eqnarray}\label{G-lebal}
    C_G(\rho)=d(\lambda_0\lambda_1\cdots\lambda_{d-1})^{1/d}.
\end{eqnarray}
The $G$-concurrence also can be extended to mixed states via convex roof construction.
When $d=2$, the $G$-concurrence reduces to the concurrence measure.

{\it $f$-negativity.}
For a bipartite CV system $AB$, a bipartite state $\rho$ of $AB$ expressed in the Fock basis $\{|ks\rangle\}_{k,s}$, its partial transpose $\rho^{T_A}$ on subsystem $A$ is defined as $\langle j,k|\rho^{T_{A}}|l,s\rangle=\langle l,k|\rho|j,s\rangle$.
The trace norm $\|\cdot\|_1$ of an operator $W$ is defined by $\|W\|_{1}:=\Tr\sqrt{W^{\dagger}W}$.
Given a real function $f$ which satisfies two conditions: (i) $f(0)=0$, and (ii) $f$ is strictly increasing in $[0,+\infty)$, the $f$-negativity $f_\mathcal{N}$ is defined as
\begin{eqnarray*}\label{eq-f_negativity}
   f_{\mathcal{N}}(\rho)= f(\mathcal{N}(\rho)),
\end{eqnarray*}
where $\mathcal{N}$ is the entanglement negativity defined by
\begin{eqnarray*}
    \mathcal{N} (\rho)=\frac{\|\rho^{T_A}\|_1-1}{2}.
\end{eqnarray*}
All $f$-negativity measures are entanglement measures.

\emph{Ratio negativity~\cite{Ratio_negativity2024}.}
The ratio negativity $\chi_{\mathcal{N}}$ is an entanglement measure defined by
\begin{eqnarray}\label{eq-ratio_negativity}
    \chi_{\mathcal{N}}(\rho)=\frac{\|{\rho}^{T_{A}}\|_{1}-1}{\|{\rho}^{T_{A}}\|_{1}+1}
\end{eqnarray}
for a state $\rho$ of $AB$.
Particularly, considering two-mode CV system, the ratio negativity of the TMSVS $|\psi^r\rangle$ is $\chi=\tanh r$.

\textbf{Quantum network}. A quantum network (QN) consists of several separated parties and several multipartite quantum states shared by some of the parties. These quantum states in a QN are called
source states. If a QN has $N$ parties, the network state of this QN is the tensor product of all source states in this QN, which is an $N$-partite state regardless of how many parties that every source state shares. In the present paper, we focus on such quantum networks that every source state is shared by two parties, that is, every source state is a bipartite state. In this case, every QN can be represented as an undirected graph with nodes being parties and links being source states

\subsection{Percolation theory and entanglement percolation}

In this subsection, we present the basic principles and notions of percolation theory and introduce the fundamental methods for studying entanglement \yaqi{distribution} in quantum networks using percolation theory.

\textbf{Percolation theory}. Percolation theory~\cite{Classical_percolation1999_2,Classical_percolation2006_2,Percolation_review2021} can be viewed as a theory of connectivity in a graph.
Consider a network connecting two defined boundaries, comprising nodes and links connecting nodes. Here, each link is weighted by its connection probability.
The sponge-crossing probability, denoted $P_{\mathrm{SC}}$, is defined as the sum of path probabilities over all connecting paths between two boundaries. This quantity was central to early studies of bond percolation on 2-dimensional (and higher-dimensional) lattices (Fig.\ref{fig-lattice})~\cite{SykesEssam1964,cross-probab-square_k80,Wierman1981,Classical_percolation1999_2,correlation_length_2000}.
For infinite-size networks with uniform link probability $p$, there exists $p_{\rm th}>0$  such that $P_{\rm SC}=0$ when $p\leq p_{\rm th}$ and $P_{\rm SC}>0$ when $p>p_{\rm th}$. This number $p_{\rm th}$ is called the critical threshold. The thresholds for several regular lattices are summarized in Table~\ref{table_threshold}.

\begin{table}[t!]
\begin{threeparttable}
  \caption{Series-parallel rules for DV DET scheme.}
  \label{table_threshold}
  \begin{tabular}{cc}
    \hline\hline\noalign{\smallskip}
     Lattice type & Classical critical threshold $p_{\mathrm{th}}$ \\
     \noalign{\smallskip}\hline\noalign{\smallskip}
    Square      & $1/2$~\cite{cross-probab-square_k80}\\
     Triangular  & $2\sin{(\pi/18)}$~\cite{SykesEssam1964}\\
     Honeycomb   & $1-2\sin{(\pi/18)}$~\cite{SykesEssam1964}\\
     Bethe (degree $k$)      & $1/(k-1)$~\cite{correlation_length_2000}\\
     \noalign{\smallskip}\hline
  \end{tabular}
  \end{threeparttable}
\end{table}

\begin{figure}[h!]
    \centering
     \includegraphics[width=120pt]{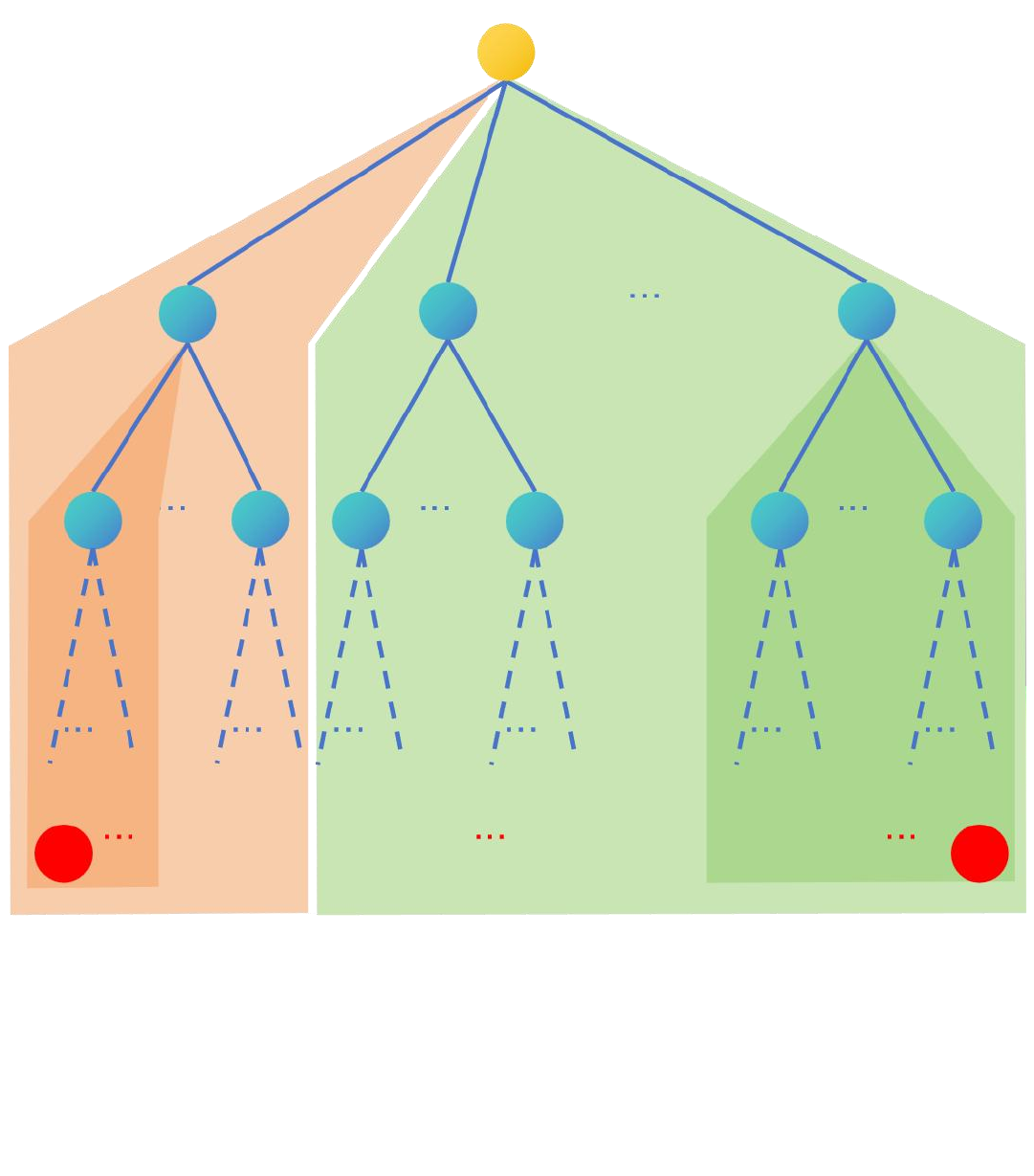}
     \vspace{-30pt}
    \caption{Bethe lattice (infinite homogeneous Cayley tree) of degree $k$ (i.e., each node is incident to $k$ links) and network depth $l$ (the path length from the yellow node to the red nodes).}\label{fig-lattice}
\end{figure}

\textbf{Classical entanglement percolation (CEP)}.
Ac\'{i}n et al. proposed the concept of \emph{classical entanglement percolation} for entanglement \yaqi{distribution} in QNs composed of pure states, as described in Ref.~\cite{Percolation_CEP2007}. In this framework, a QN is represented as a graph with nodes and links, where the nodes correspond to spatially separated parties, and each link represents a pure two-qubit state
\begin{eqnarray*}
    |\psi_{\lambda}\rangle=\sqrt{\lambda}|00\rangle+\sqrt{1-\lambda}|11\rangle,\ \lambda\geq 1/2
\end{eqnarray*}
shared by adjacent nodes connected by this link. The weight of the link is defined as the singlet conversion probability of $|\psi_{\lambda}\rangle$, $p=P_{\rm{SCP}}(|\psi_{\lambda}\rangle)=2(1-\lambda)$.

Given such a QN, the probability of establishing a singlet state between any two nodes $S$ and $T$ can be calculated using the classical bond percolation method by summing the probabilities over all possible paths connecting $S$ and $T$.

\textbf{Quantum entanglement percolation~\cite{Percolation_CEP2007}}.
{The researchers discover that entanglement swapping based {on} the Bell-state measurement (BSM) can modify the topology of the QN, resulting in the quantum entanglement percolation (QEP) combining entanglement swapping and CEP. These processes reduce the entanglement threshold for specific regular lattice structures. Specifically, given a honeycomb lattice where each node is connected by two copies $|\psi\rangle=|\psi_{\lambda}\rangle^{\otimes^2}$ of the same two-qubit state $|\psi_{\lambda}\rangle$ [Eq.~\eqref{eq-2qubit}], resulting in the SCP $P_{\rm{SCP}}(|\psi\rangle)=2(1-\lambda^2)$. Then the threshold of $\lambda$ is $\lambda_{\rm{th}}=\sqrt{1/2+\sin{(\pi/18)}}\approx 0.82$.
The BSM converts this honeycomb lattice into a triangular lattice with each link being connected with a probability $p=P_{\rm{SCP}}(|\psi_{\lambda}\rangle)=2\lambda$. The thresholds of two lattice shown in Table~\ref{table_threshold} yield the thresholds of the concurrence of $|\psi_{\lambda}\rangle$ are $c_{\rm{th}}\approx 0.7671$ for the honeycomb lattice and $c_{\rm{th}}\approx 0.7576$ for the triangular lattice, respectively. This implies the quantum advantage.}

\textbf{Deterministic entanglement percolation}.
In a QN, each link represents an entangled state, with its weight assigned as the entanglement value quantified under a chosen entanglement measure. Deterministic entanglement percolation relies on deterministic entanglement transmission (DET) schemes, primarily comprising entanglement swapping and entanglement concentration protocols.
We introduce two DET schemes for QNs, designed for DV and CV systems, respectively. Each scheme comprises a deterministic entanglement swapping protocol and a deterministic entanglement concentration protocol.

(1) \textbf{DV-based QN of two-qubit states}.
In 2021, Meng et al. proposed a deterministic entanglement transmission (DET) scheme for DV QNs distributing pure two-qubit states. This scheme consists of an entanglement swapping protocol for series networks and an entanglement concentration protocol for parallel networks which are all based on deterministic LOCC.

\emph{DV-based entanglement swapping protocol}. For a 1D chain of $N$ pure two-qubit states $|\psi_{\lambda_1}\rangle, |\psi_{\lambda_2}\rangle,\ldots,|\psi_{\lambda_N}\rangle$ (see Fig.~\ref{fig-seriesN_DV}) shared by $S$ and $R_1$, $R_1$ and $R_2$,..., and $R_{N-1}$ and $T$, respectively.
After employing the optimal deterministic swapping protocol proposed in Ref.~\cite{Percolation2008}, the final state created between $S$ and $T$ is $|\psi_{\lambda}\rangle$ with
\begin{eqnarray*}\label{eq-DET_seriesDV}
    \lambda=\frac{1+\sqrt{1-\prod_{n=1}^{N}c_n^2}}{2}
\end{eqnarray*}
where $c_n=2\sqrt{\lambda_n(1-\lambda_n)}$ is the concurrence of $|\psi_{\lambda_n}\rangle$ ($n=1,2,\ldots,N$), and the concurrence of $|\psi_{\lambda}\rangle$ is exactly $c=c_1c_2\cdots c_N$.

\begin{figure}
    \centering
    \includegraphics[width=3.5in]{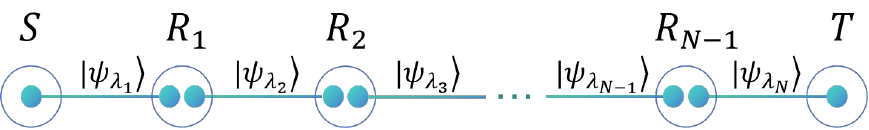}
    \caption{A series DV QN with $N+1$ nodes and $N$ states.}
    \label{fig-seriesN_DV}
\end{figure}

\emph{DV-based entanglement concentration protocol}. For a parallel QN composed {of} two distinct nodes $S$ and $T$ and $K$ pure two-qubit states $|\psi_{\lambda_1}\rangle, |\psi_{\lambda_2}\rangle,\ldots,|\psi_{\lambda_K}\rangle$ (see Fig.~\ref{fig-parallelK_DV}) shared between these nodes, the concentration protocol proposed in Ref.~\cite{Percolation_ConPT2021} converts these $K$ states into one two-qubit state $|\psi_{\lambda}\rangle$ with
\begin{eqnarray*}\label{eq-DET_parallelDV}
   \lambda=\max\left\{\frac{1}{2},\prod_{k=1}^{K}\lambda_{k}\right\}.
\end{eqnarray*}

(2) \textbf{CV-based QN of TMSVSs}.
Zhao et al. proposed a CV DET scheme consisting of an entanglement swapping protocol and an entanglement concentration based on deterministic LOCC~{\cite{NegPT}}.

\emph{CV-based entanglement swapping protocol}. Consider a one-dimensional chain as in Fig.~\ref{fig-seriesN_DV} but with $N$ TMSVSs [Eq.~\eqref{eq-TMSVS}] $|\psi^{r_1}
\rangle,|\psi^{r_2}
\rangle,\ldots,|\psi^{r_N}
\rangle$ as source states. After performing the optimal deterministic swapping protocol proposed in Ref.~\cite{P2002}, the final TMSVS established between the terminals $S$ and $T$ is $|\psi^{r}\rangle$ with
\begin{eqnarray*}\label{eq-DET_seriesCV}
    \tanh r=\prod_{n=1}^{N}\tanh r_n.
\end{eqnarray*}

\emph{CV-based entanglement concentration protocol}. Consider a parallel CV QN as in Fig.~\ref{fig-parallelK_DV} with source states being $K$ TMSVSs $|\psi^{r_1}
\rangle,|\psi^{r_2}
\rangle,\ldots,|\psi^{r_K}
\rangle$ of squeezing parameters $r_1,r_2,\ldots,r_K$ where $r_1\geq r_2\geq \ldots\geq r_K$. The concentration protocol converts these $K$ states into one TMSVS $|\psi^{r}\rangle$ with
\begin{eqnarray*}\label{eq-DET_parallelCV}
    \sinh r=\sinh r_1\prod_{k=2}^{K}\cosh r_k.
\end{eqnarray*}

\begin{figure}
    \centering
    \includegraphics[width=1.5in]{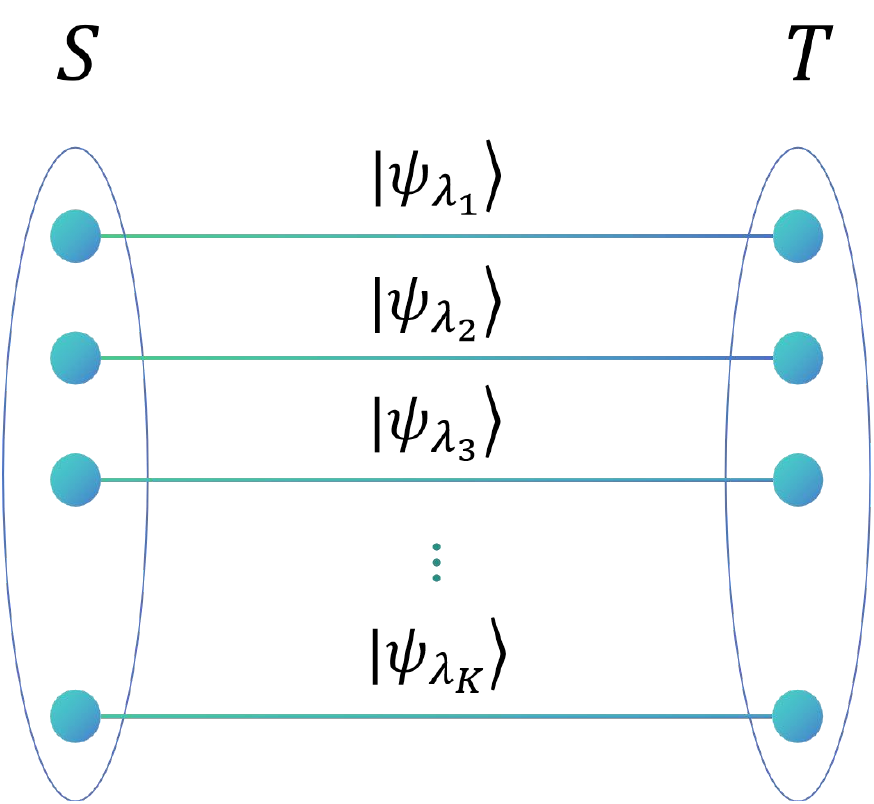}
    \caption{A parallel DV QN with $K$ states.}
    \label{fig-parallelK_DV}
\end{figure}

\section{Series-parallel networks revisited}\label{sec-network}

In this section, we {revisit} series-parallel networks through graph theory. This is a very common network model for studying percolation.

A network is an undirected graph composed of vertices (nodes) and {edges} (links) that connect them.
When two nodes can be connected through multiple links and nodes, they are connected by a path.
If no path exists between two sets of nodes, then these two sets are not considered to belong to the same network.
The length of a self-avoiding path{~\cite{self_avoiding1996}---a path that visits each node at most once---}between two nodes is defined as the number of links constituting the path. The distance between the two nodes is defined as the minimal length among all self-avoiding paths connecting the two nodes.
Two nodes are called \emph{adjacent} if the distance between them is 1.
If a node has $k$ adjacent nodes, then the degree of that node is $k$, meaning it has $k$ neighbors.

Two links $a$ and $b$ in a network are said to be {\it confluent} if there do not exist two distinct circuits $C_1$ and $C_2$ such that $C_1$ meets $a$ and $b$ in the same sense but $C_2$ meets $a$ and $b$ in opposite sense.
Ref.~\cite{Series_parallel1965} gives the necessary and sufficient conditions for series-parallel networks, i.e., a network is of series-parallel type if and only if every pair of links is confluent.
Thus, if there is a pair of links that is not confluent, then this network is not series-parallel. For example, the links between $R_1$ and $R_4$, and between $R_2$ and $R_3$ in the Wheatstone bridge network model shown in Fig.~\ref{bridge} are not confluent since the directions of links $(R_1,R_4)$ and $(R_2,R_3)$ have the same sense relative to the circuit $(R_1,R_2,R_3,R_4,R_1)$ (orange arrows), while the two links have opposite sense relative to the circuit $(R_1,R_3,R_2,R_4,R_1)$ (red arrows). It means that the Wheatstone bridge is not
a series-parallel network. However, when considering the entanglement percolation in quantum networks (QNs), we only need to consider the connected paths after giving two terminals $S$ and $T$. Therefore, we use the definition of two-terminal series-parallel graph.

\begin{figure}[h!]
{\includegraphics[width=120pt]{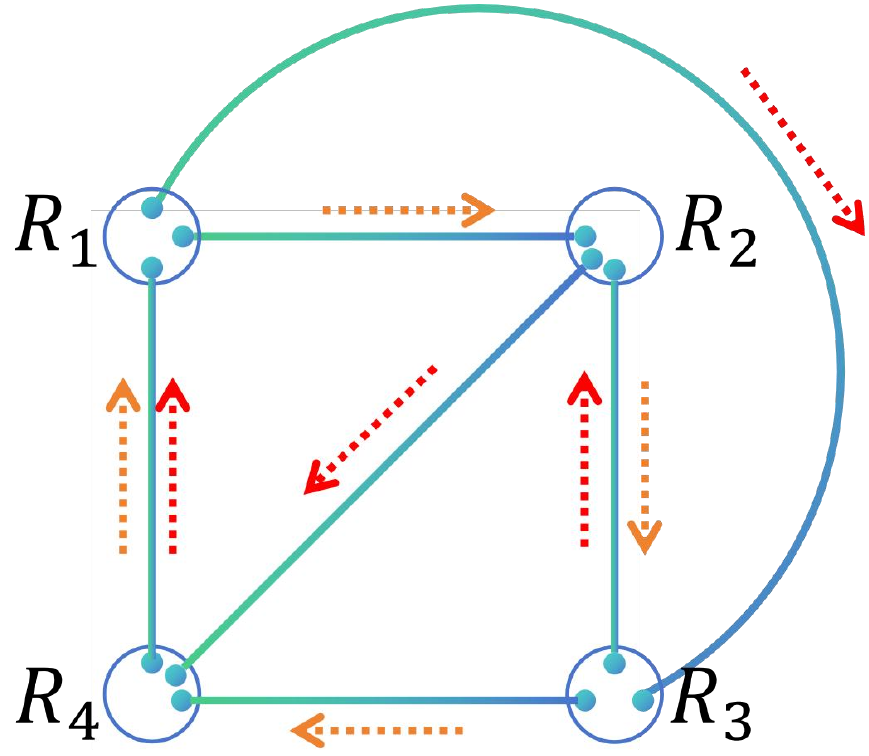}}
\caption{\quad The Wheatstone bridge. }
\label{bridge}
\end{figure}

A graph is a two-terminal series-parallel graph, with terminals $S$ and $T$, if it can be transformed into a single-link graph \yaqi{$\mathcal{K}_{ST}$} by a sequence of the following operations:

(${\rm I}$). {\emph{Simplification for two parallel links.} {If multiple parallel links share the same terminals denoted} by $R'$ and $R''$, perform the simplification: Merge these links into a single link connecting $R'$ and $R''$. Repeat until no such links remain, then proceed to operation (${\rm II}$).}

(${\rm II}$). {\emph{Simplification for two series links.}
If there are three distinct nodes denoted by $R_1$, $R_2$ and $R_3$ where node $R_2$ of degree 2 is exclusively adjacent to $R_1$ and $R_3$, perform the simplification: Replace the two series links connecting $R_2$ to $R_1$ and $R_3$ with a single direct link connecting $R_1$ and $R_3$. This process can equivalently be interpreted as the elimination of node $R_2$ within the network topology. Repeat until no such series links can be further simplified, then return to operation (${\rm I}$).}

If the graph used to represent the structure of a QN is series-parallel, then the QN is referred to as a series-parallel network. To facilitate the demonstration that a network is a series-parallel network, we present the following proposition.

\begin{proposition}
     Given a two-terminal network $\mathcal{N}_{ST}$ with two terminals $S$ and $T$, it is a series-parallel network if and only if no two connected paths from $S$ to $T$ contain a reverse sub-path consisting of two vertices.
\end{proposition}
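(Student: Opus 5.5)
Throughout, I read ``two connected paths contain a reverse sub-path consisting of two vertices'' as follows. There are self-avoiding $S$--$T$ paths $P_1,P_2$ and vertices $u\neq v$, both on $P_1$ and on $P_2$, such that $P_1$ visits $u$ before $v$ while $P_2$ visits $v$ before $u$. This is the path-level analogue of the confluence condition of Ref.~\cite{Series_parallel1965}. I also assume, as is implicit for QNs, that every node and link lies on some $S$--$T$ path; otherwise we first prune the irrelevant parts. I would prove the two directions separately, both by induction on the number of links. The key observation is that operations (I) and (II) preserve the relative order in which surviving vertices are visited along $S$--$T$ paths.

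\emph{Necessity.} Suppose $\mathcal{N}_{ST}$ reduces to $\mathcal{K}_{ST}$ by operations (I) and (II). Reversing the reduction, $\mathcal{N}_{ST}$ is obtained from a single link by repeatedly replacing a link with two parallel links or with two series links through a new degree-2 vertex. I induct along this construction and define a linear order on the vertices that every $S$--$T$ path respects.
\begin{itemize}
\item For $\mathcal{K}_{ST}$, take the order $S<T$.
\item For a parallel expansion, the vertex set and the paths (up to which of the parallel links is used) are unchanged, so the order still works.
\item For a series expansion of a link $(R_1,R_3)$ with $R_1<R_3$, insert the new vertex $R_2$ between $R_1$ and $R_3$. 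Every new path through $R_2$ comes from an old path through $(R_1,R_3)$ with $R_2$ inserted, so it still respects the order.
\end{itemize}
Since all $S$--$T$ paths are monotone in one common order, no two of them can traverse a pair $u,v$ in opposite orders.

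\emph{Sufficiency.} Assume no reversed pair exists and induct on the number of links. If two links are parallel, merge them with (I). Merged links play identical roles in paths, so no reversal is created. Otherwise the graph is simple, and I claim some non-terminal vertex has degree 2, so (II) applies. The reduced graph's $S$--$T$ paths correspond bijectively to paths of the original graph, with the removed $R_2$ suppressed, so the no-reversal property is inherited and induction finishes the proof.

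To find the degree-2 vertex, define a relation $u\prec v$ if some $S$--$T$ path visits $u$ before $v$. The hypothesis makes $\prec$ antisymmetric. I would then show it extends to an ear-like structure: take a shortest $S$--$T$ path $P$ and a ``bridge'' subpath $Q$ leaving $P$ at $a$ and rejoining it at $b$ with $a\prec b$. If $Q$ has an internal vertex, choose a minimal such $Q$ (nested bridges are forbidden, since a nested one would force a reversal). Its internal vertices then have degree 2. If every bridge is a single link, then $P$'s interior vertices of degree greater than 2 create crossing chords. Two crossing chords $a<c<b<d$ yield paths $S\to a\to b\to c\to d\to T$, which traverses $b$ before $c$, and $S\to\cdots\to c\to\cdots\to b\to\cdots$ along $P$, which traverses $c$ before $b$. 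That is a reversal, a contradiction. Non-crossing chords give nesting, and the innermost chord spans a vertex of $P$ of degree 2.

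\emph{Main obstacle.} The hard step is this existence argument, exactly the Wheatstone-bridge-type configuration of Fig.~\ref{bridge}. I expect it to need a careful case analysis on how bridges attach to $P$, including bridges that attach to other bridges. I would first try to reduce to the classical Duffin characterization (no $K_4$-subdivision relative to $S,T$ together with the added edge $ST$) and show that a $K_4$-subdivision produces a reversal, as in the bridge example. Only if that fails would I use the direct chord argument above.
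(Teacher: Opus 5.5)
Your necessity direction is correct, and more rigorous than the paper's sketch. Running the reduction backwards and inserting each new degree-$2$ vertex between its two neighbours gives one linear order that every self-avoiding $S$--$T$ path respects.

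The gap is in sufficiency, at the step ``choose a minimal such $Q$ \dots\ its internal vertices then have degree 2.'' This step fails. Take $P=(S,a,b,T)$, the bridge $Q=(a,x,y,b)$, and an extra ear $(x,z,y)$ attached to $Q$. This network is free of redundancy and series-parallel: remove $z$, merge the two $x$--$y$ links, remove $x$ and $y$, merge the two $a$--$b$ links, then remove $a$ and $b$. So by your own necessity argument it has no reversal, and $P$ is a shortest $S$--$T$ path. Yet every bridge of $P$ runs through $x$ and $y$, which both have degree $3$. The only degree-$2$ interior vertex, $z$, cannot be seen from $P$.

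Your parenthetical justification is also wrong if ``nested'' refers to spans on $P$. For example, $P=(S,a,c,d,b,T)$ with bridges $(a,x_1,x_2,x_3,b)$ and $(c,y,d)$ is series-parallel, $P$ is shortest, and the spans are nested. Moreover, since $P$ is shortest and the graph is simple, $P$ has no chords at all, so the crossing-chord case never occurs. The whole burden therefore rests on the bridge case, which is exactly where the argument breaks and which you flag as unresolved. The paper does not resolve it either; it simply asserts that a reversal-free network reduces to a single link.

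The missing idea is to extract a reversal from a network that does not decompose, instead of hunting for a degree-$2$ vertex. Let $G'$ be the network with any direct $S$--$T$ links deleted; it is still free of redundancy. Then split into cases:
\begin{itemize}
\item If an interior vertex $c$ separates $S$ from $T$ in $G'$, then $G'$ is a series composition at $c$.
\item If $G'-\{S,T\}$ is disconnected, then $G'$ is a parallel composition of the subnetworks spanned by its components.
\item Otherwise, Menger's theorem gives internally disjoint $S$--$T$ paths $P_1,P_2$, each of length at least $2$.
\end{itemize}
In the first two cases the pieces have fewer links and inherit both reversal-freeness and non-redundancy. Their reductions are also legal moves in the whole network, so induction applies.

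In the third case, let $R$ be a shortest path in $G'-\{S,T\}$ from an interior vertex $u$ of $P_1$ to an interior vertex $v$ of $P_2$. Then $R$ is internally disjoint from $P_1\cup P_2$. The paths $P_1[S,u]\,R\,P_2[v,T]$ and $P_2[S,v]\,R^{-1}\,P_1[u,T]$ are self-avoiding and visit $u,v$ in opposite orders, which is a subdivided Wheatstone bridge.

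Your Duffin fallback can also be made to work, but the proposal does not carry it out. It would need the two-terminal form of Duffin's theorem, and a separate argument that a $K_4$-subdivision in $G+ST$ avoiding $S$ and $T$ still yields a reversal.
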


\begin{proposition}\label{prop_s_p}
     Given a network $\mathcal{N}_{ST}$ with two terminals $S$ and $T$, it is a series-parallel network if and only if {there do not exist two distinct} paths $\mathcal{P}=(S, R_{i_1},R_{i_2},...,T)$ and $\mathcal{Q}=(S, R_{j_1}, R_{j_2},...,T)$ from $S$ to $T$, represented as node sequences, that contain sub-paths $(R_{i_k},...,R_{i_{k+n}})$ in $\mathcal{P}$ and $(R_{j_{m}},...,R_{j_{m+n}})$ in $\mathcal{Q}$ forming reverse-oriented links satisfying $R_{i_k}=R_{j_{m+n}}$ and $R_{i_{k+n}}=R_{j_m}$ for some $k$, $m$, and $n$.
\end{proposition}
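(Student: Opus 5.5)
We argue through the classical characterization of two-terminal series-parallel graphs by forbidden configurations. The reduction by operations (I) and (II) fails exactly when the network contains a subdivision of the Wheatstone bridge of Fig.~\ref{bridge} with its two terminals placed at $S$ and $T$. Here we assume, as is implicit in the definition, that every link lies on some self-avoiding $S$--$T$ path; dangling parts are discarded first. The task is then to show that such a bridge subdivision exists if and only if two $S$--$T$ paths $\mathcal{P}$ and $\mathcal{Q}$ traverse a common sub-path in opposite orientations, in the sense of the statement. This parallels the confluence criterion of Ref.~\cite{Series_parallel1965}, specialized to paths anchored at the terminals.

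\emph{Necessity (reverse sub-paths $\Rightarrow$ not series-parallel).} Suppose $\mathcal{P}$ traverses a segment from $A=R_{i_k}$ to $B=R_{i_{k+n}}$ and $\mathcal{Q}$ traverses it from $B$ to $A$. Then $S$ reaches $A$ along $\mathcal{P}$ and reaches $B$ along $\mathcal{Q}$. Likewise $B$ reaches $T$ along $\mathcal{P}$ and $A$ reaches $T$ along $\mathcal{Q}$. After shortcutting repeated vertices in these four arcs, we obtain a subdivided bridge with terminals $S,T$, side vertices $A,B$, and the common segment as the bridge edge.

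We then show that operations (I) and (II) cannot destroy this structure. Each operation replaces a set of parallel links or a series chain by a single link, and the image of a subdivided bridge under such a replacement is again a subdivided bridge. The argument is by induction on the number of operations. The base case is that the bridge itself is irreducible, because all its nodes have degree 3 or are terminals. Hence the network never reduces to $\mathcal{K}_{ST}$.

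\emph{Sufficiency (no reverse sub-paths $\Rightarrow$ series-parallel).} Here we induct on the number of links. If $S$ is a cut-vertex separating $T$ from some block, the extra part lies on no $S$--$T$ path and is discarded. Otherwise we look for a node $R\neq S,T$ of degree 2, or for a pair of parallel links, apply (II) or (I), and check that the hypothesis is preserved. Preservation holds because paths in the reduced graph lift to paths in the original graph with the same orientations.

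The real content, and the main obstacle, is showing that a reducible configuration always exists when the hypothesis holds. We will argue by contradiction. Suppose the graph is $2$-connected relative to $\{S,T\}$, has no parallel links, and has no internal degree-2 nodes. Take a shortest $S$--$T$ path $\mathcal{P}$ and a bridge (ear) $\mathcal{E}$ attached to $\mathcal{P}$ at two vertices. Minimality, together with Menger-type arguments using $2$-connectivity, gives another $S$--$T$ path $\mathcal{Q}$ that uses a segment of $\mathcal{P}$ in the opposite direction. This contradicts the hypothesis.

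The delicate point is handling ears that attach to $\mathcal{P}$ in nested versus crossing fashion, and showing that the non-crossing case forces a degree-2 vertex or a parallel pair. For this step we would adapt Duffin's proof of the confluence theorem.
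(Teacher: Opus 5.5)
Your plan has the same shape as the paper's proof. Necessity uses a bridge that survives (I)--(II), and sufficiency claims reducibility when no reversal exists, which the paper states without argument. However, your concrete step in each direction fails on a small example. Take links $SX_1$, $X_1X_2$, $X_2X_3$, $X_3T$, and a node $Y$ joined to $X_1,X_2,X_3$. Every link lies on a self-avoiding $S$--$T$ path, and there are no parallel links and no node of degree $2$, so the network is not series-parallel. Indeed, $\mathcal{P}=(S,X_1,X_2,Y,X_3,T)$ and $\mathcal{Q}=(S,X_1,Y,X_2,X_3,T)$ reverse $(X_2,Y)$.

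But $S$ has degree $1$, so no subdivided Wheatstone bridge has a terminal at $S$. Your opening characterization is therefore false. The shortcutting step also cannot work, since $\mathcal{P}[S,X_2]$ and $\mathcal{Q}[S,Y]$ share the interior node $X_1$. What is true is that the bridge's terminals are joined to $S$ and $T$ by disjoint tails, i.e.\ there is a subdivided $K_4$ through an added link $ST$. Extracting that from $\mathcal{P}\cup\mathcal{Q}$ takes more than shortcutting.

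Necessity actually needs no bridge, because the property ``two $S$--$T$ paths traverse some link in opposite directions'' is itself invariant under the reductions. Under (I), node sequences are unchanged. Under (II), a self-avoiding $S$--$T$ path through the degree-$2$ node $R_2$ contains $R_1,R_2,R_3$ consecutively in one order. Deleting $R_2$ keeps it self-avoiding, and a reversed pair involving $R_2$ becomes the reversed pair $(R_1,R_3)$, $(R_3,R_1)$. Since $\mathcal{K}_{ST}$ has a single $S$--$T$ path, the network never reduces to it.

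For sufficiency, the step you flag as the main obstacle is not merely unproved; as formulated it is false. The example has no parallel links and no internal node of degree $2$, and it is $2$-connected once $ST$ is added. Yet no $S$--$T$ path reverses a segment of either shortest path, $(S,X_1,X_2,X_3,T)$ or $(S,X_1,Y,X_3,T)$. Every $S$--$T$ path starts $S,X_1$ and ends $X_3,T$, since these are the only neighbours of $S$ and $T$. Hence no link at $X_1$ or $X_3$ can be reversed, and every link of a shortest path is of this kind. The only reversible link is $X_2Y$, so both members of a forbidden pair must leave every shortest path, and minimality plus ears cannot be the mechanism.

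The missing ingredient is the $K_4$ theorem of Dirac and Duffin, applied to $G'=\mathcal{N}_{ST}+ST$, which is $2$-connected under your no-redundancy assumption.
\begin{itemize}
\item If $G'$ contains no subdivided $K_4$ (and has at least four nodes), its underlying simple graph has two non-adjacent nodes of degree $2$. They cannot be $S,T$, which are adjacent in $G'$, so some internal node has at most two distinct neighbours, and (I) or (II) applies.
\item Otherwise, joining $S,T$ to a subdivided $K_4$ by two disjoint paths and rerouting gives one that contains the link $ST$. Let its branch nodes be $u,v,w,z$, with $ST$ on the $u$--$v$ branch. Then the paths $S\to u\to w\to z\to v\to T$ and $S\to u\to z\to w\to v\to T$ (along branches) traverse the $w$--$z$ branch in opposite directions.
\end{itemize}
In the example, $u=X_1$, $v=X_3$, and $\{w,z\}=\{X_2,Y\}$.
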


\begin{proof}
 Assume there are two paths between $S$ and $T$, ($S$, $R_{i_1}$,$R_{i_2}$,...,$T$) and ($S$, $R_{j_1}$, $R_{j_2}$,...,$T$), their two sub-paths $(R_{i_k},...,R_{i_{k+n}})$ and $(R_{j_{m}},...,R_{j_{m+n}})$ satisfy $R_{i_k}=R_{j_{m+n}}$ and $R_{i_{k+n}}=R_{j_m}$. Then, after applying operations (${\rm I}$) and (${\rm II}$), the network $\mathcal{N}_{ST}$ can always be simplified to a network containing a sub-network as Fig.~\ref{fig-non_series_parallel} with a bridge (link $R_{i_k}R_{i_{k+n}}$) after operations (${\rm I}$) and (${\rm II}$). At this point, no further link or node removal can be done. The graph cannot {be reduced} to a single-link graph $S\text{-}T$ with two terminals $S$ and $T$, indicating that this network is not series-parallel.

\begin{figure}[]
\includegraphics[width=120pt]{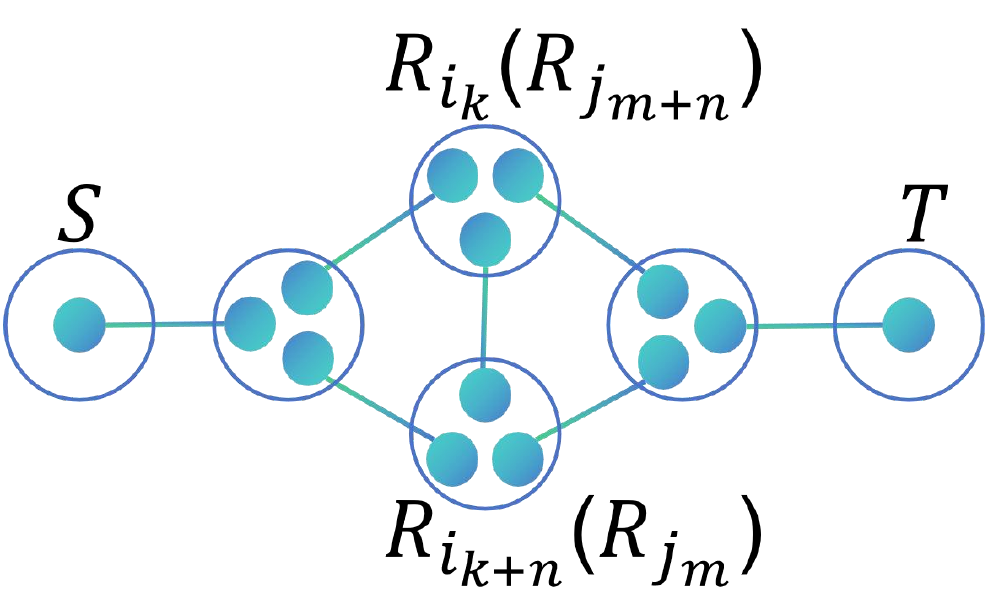}
\caption{\quad  {An example of simplified non-series-parallel network.}}
\label{fig-non_series_parallel}
\end{figure}

 If there do not exist two connected paths from $S$ to $T$ {contains} a reverse sub-path consisting of two nodes, then the network $\mathcal{N}_{ST}$ can be simplified to a network where $S$ and $T$ are connected only by one path of length 1 through sequentially applying operations (${\rm I}$) and (${\rm II}$), indicating that the network is a series-parallel network.

\end{proof}

\begin{figure}[]
    \centering
    \subfigure[]{
       \includegraphics[width=133pt]{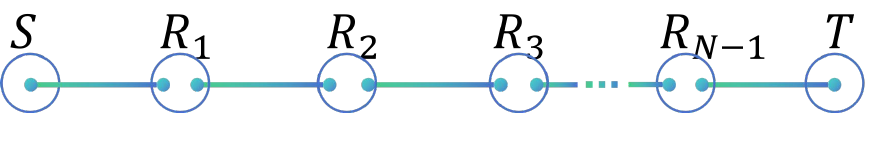}
        \label{fig-s}
    }
    \subfigure[]{
       \includegraphics[width=133pt]{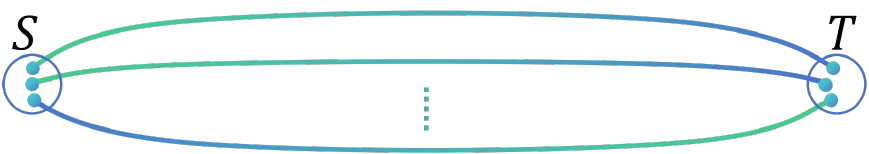}
        \label{fig-p}
    }
    \quad
    \subfigure[]{
       \includegraphics[width=133pt]{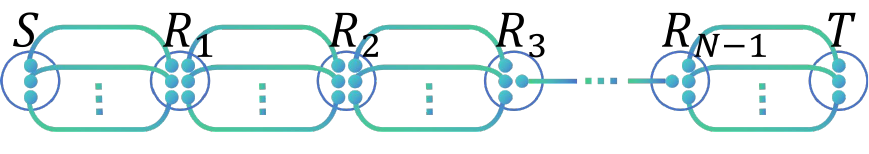}
        \label{fig-p_t_s}
    }
    \subfigure[]{
       \includegraphics[width=133pt]{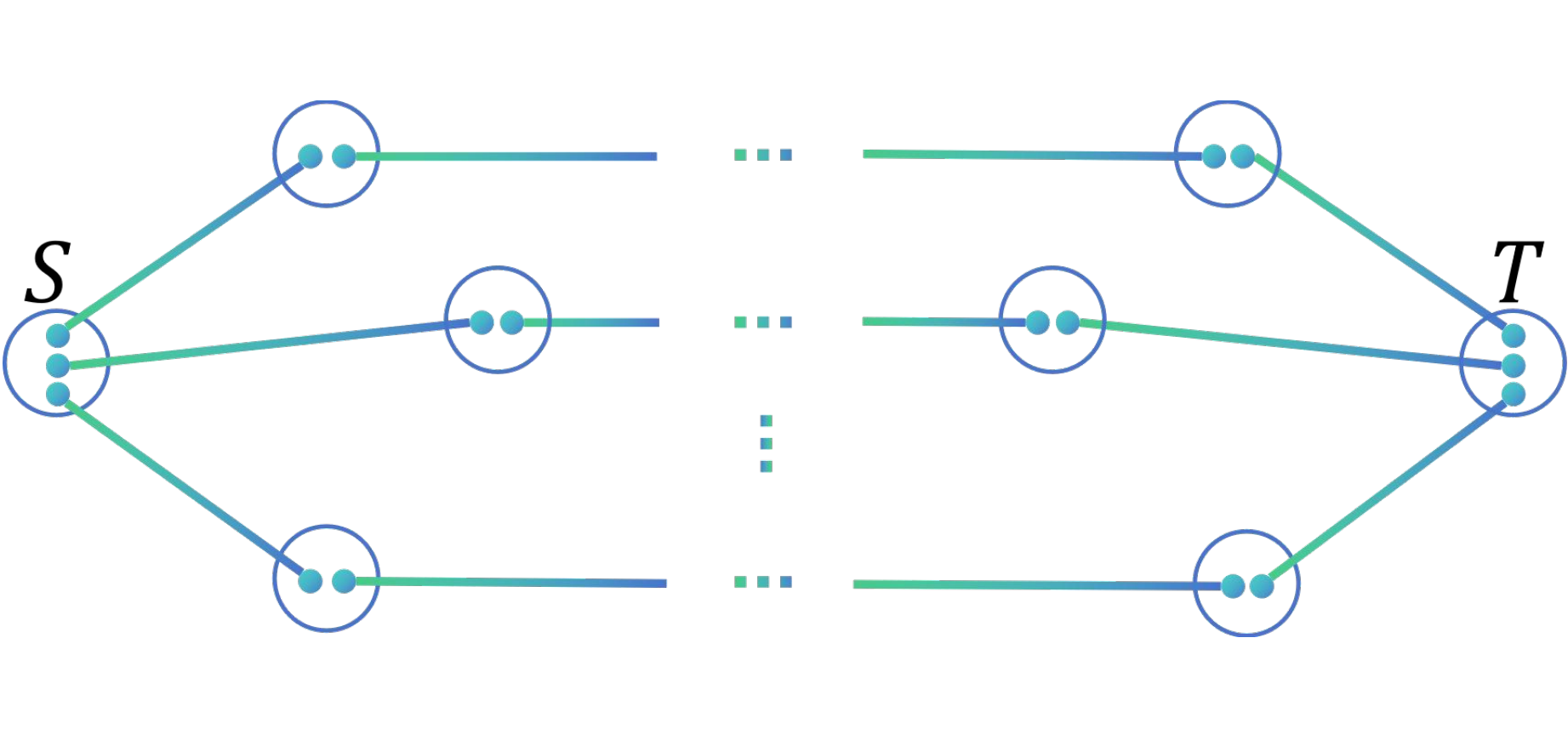}
        \label{fig-s_t_p}
    }
    \quad
    \subfigure[]{
       \includegraphics[width=133pt]{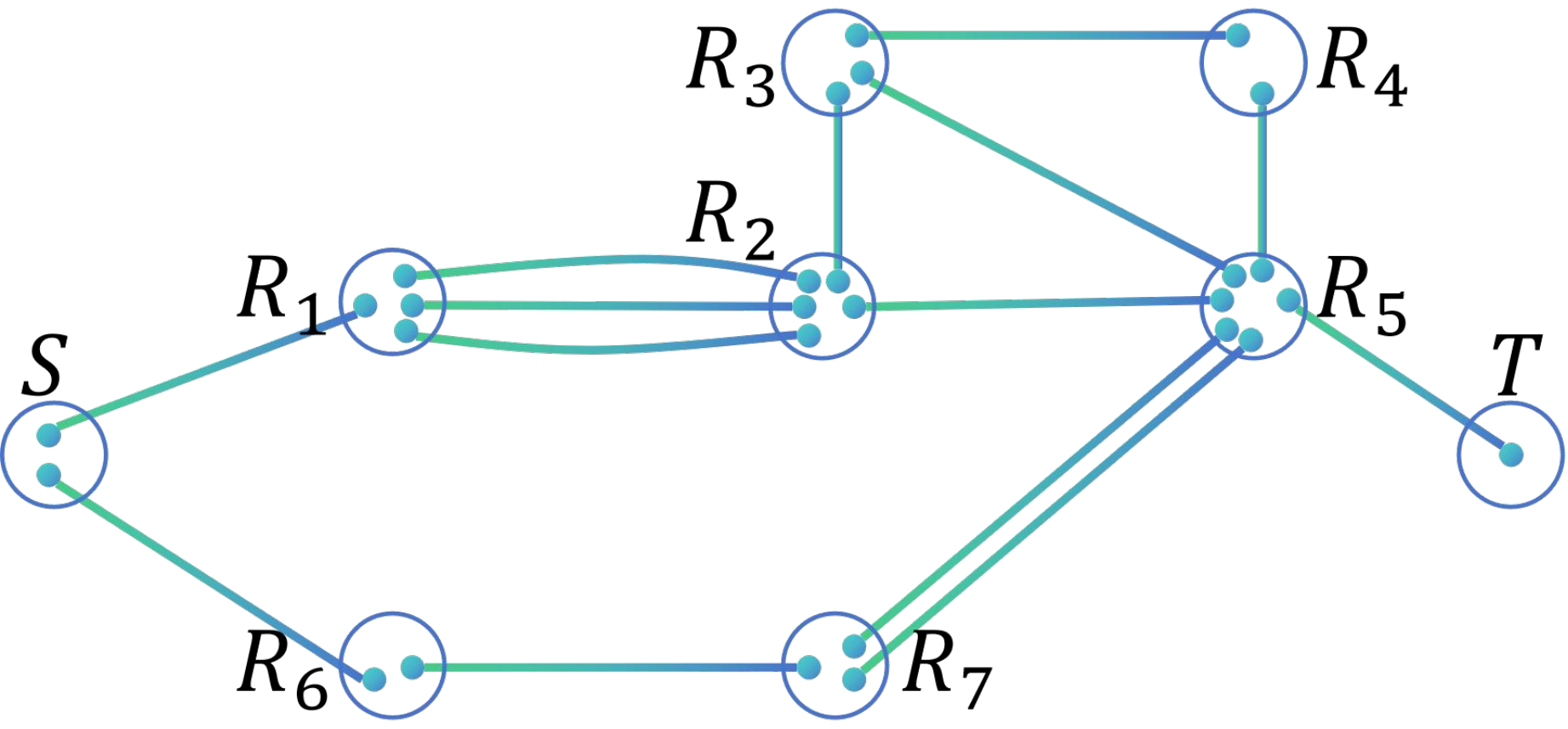}
        \label{fig-s_p}
    }
    \subfigure[]{
       \includegraphics[width=133pt]{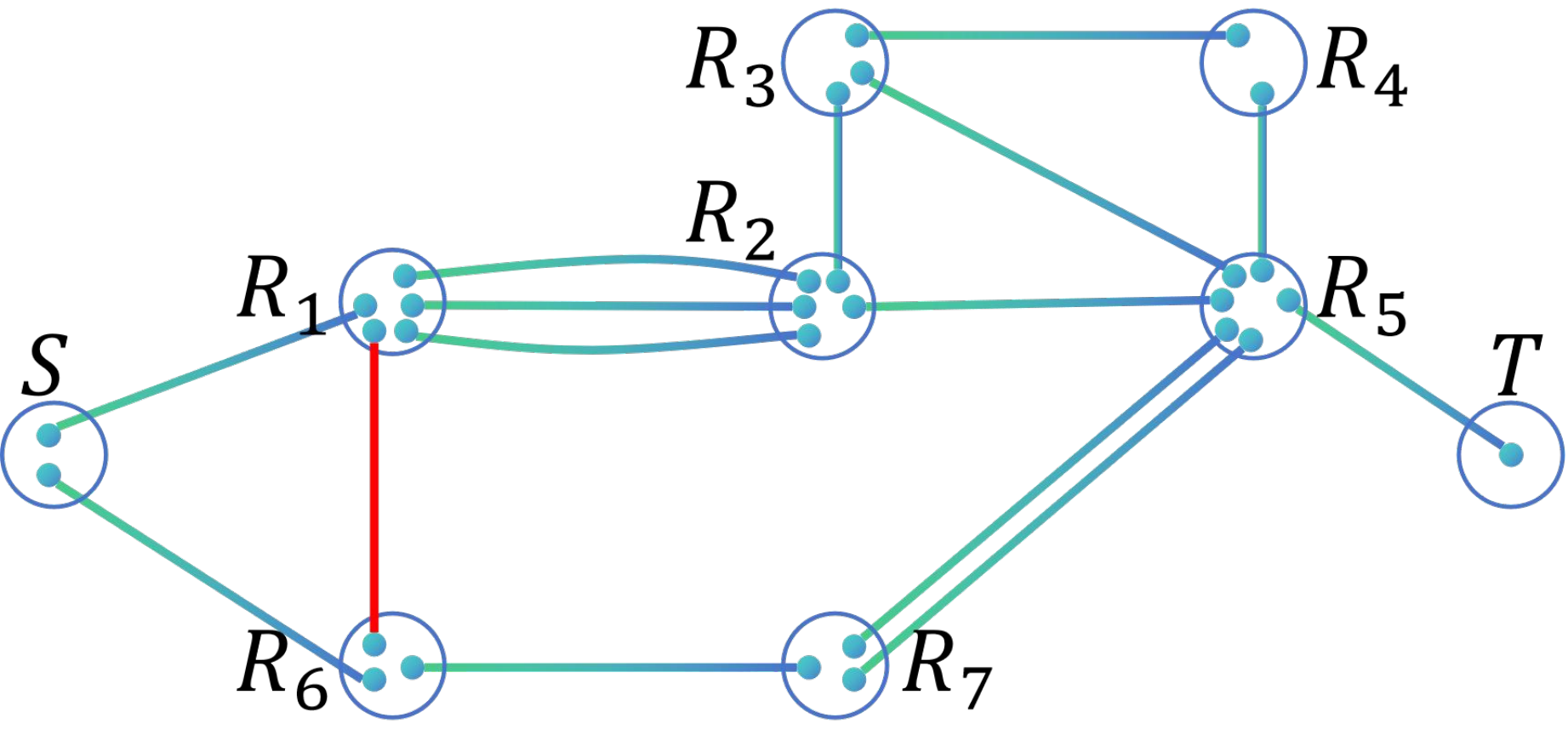}
        \label{fig-general}
    }
	\caption{Different network topologies between two terminals $S$ and $T$.~\subref{fig-s}~Series network. In this configuration, all nodes are pairwise distinct, and there is exactly one link between each pair of adjacent nodes.~\subref{fig-p}~Parallel network. In this configuration, the terminals $S$ and $T$ are connected by more than two parallel links.~\subref{fig-p_t_s}~Parallel-then-series network. This network consists of multiple two-terminal parallel sub-networks connected in series in a head-to-tail fashion.~\subref{fig-s_t_p}~Series-then-parallel network. This network can be reduced to a single link connecting the two terminals $S$ and $T$.~\subref{fig-general}~Non-series-parallel network.}
	\label{fig-series_parallel}
\end{figure}

\begin{figure}
    \centering
    \subfigure[]{
       \includegraphics[width=100pt]{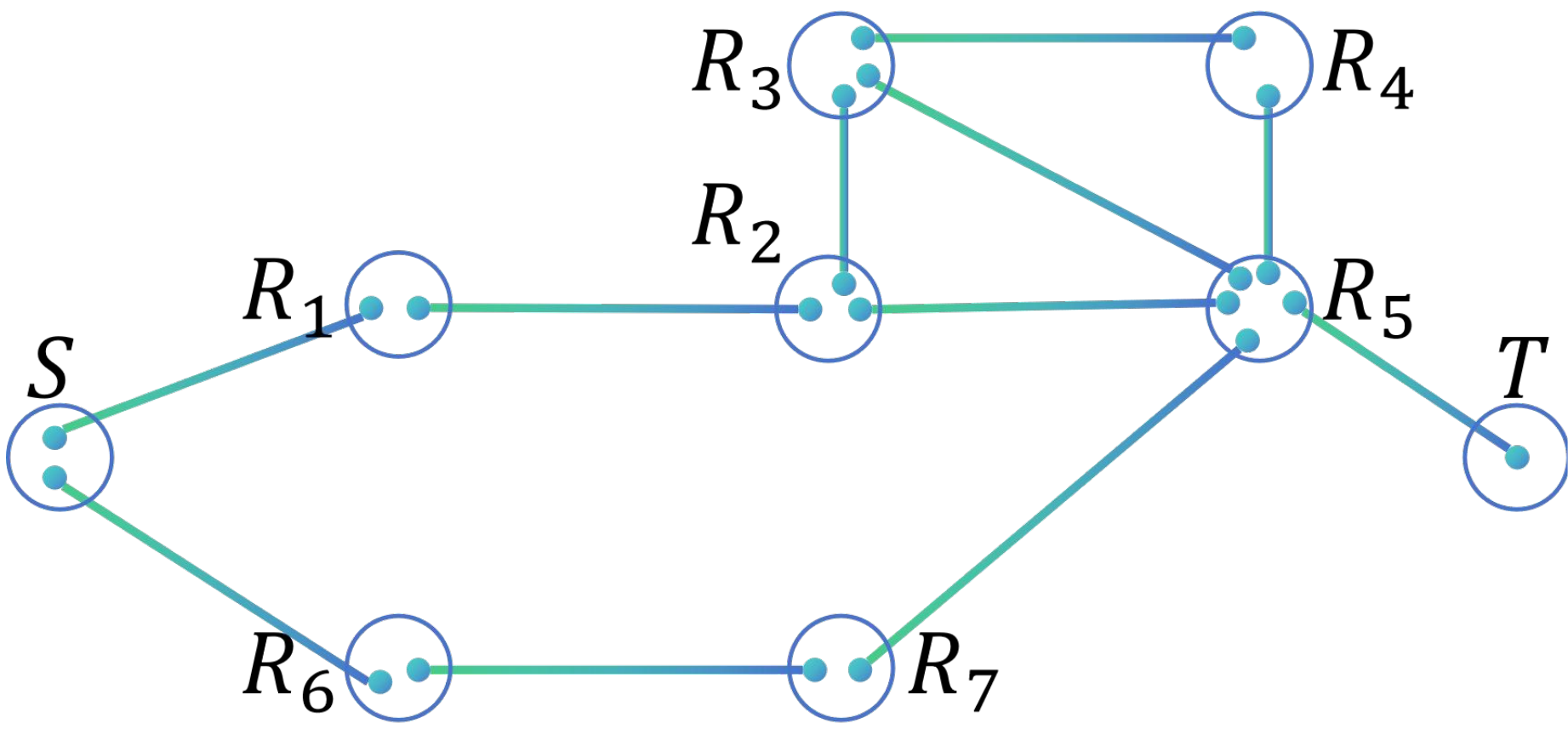}
        \label{fig-step1}
    }
    \subfigure[]{
       \includegraphics[width=100pt]{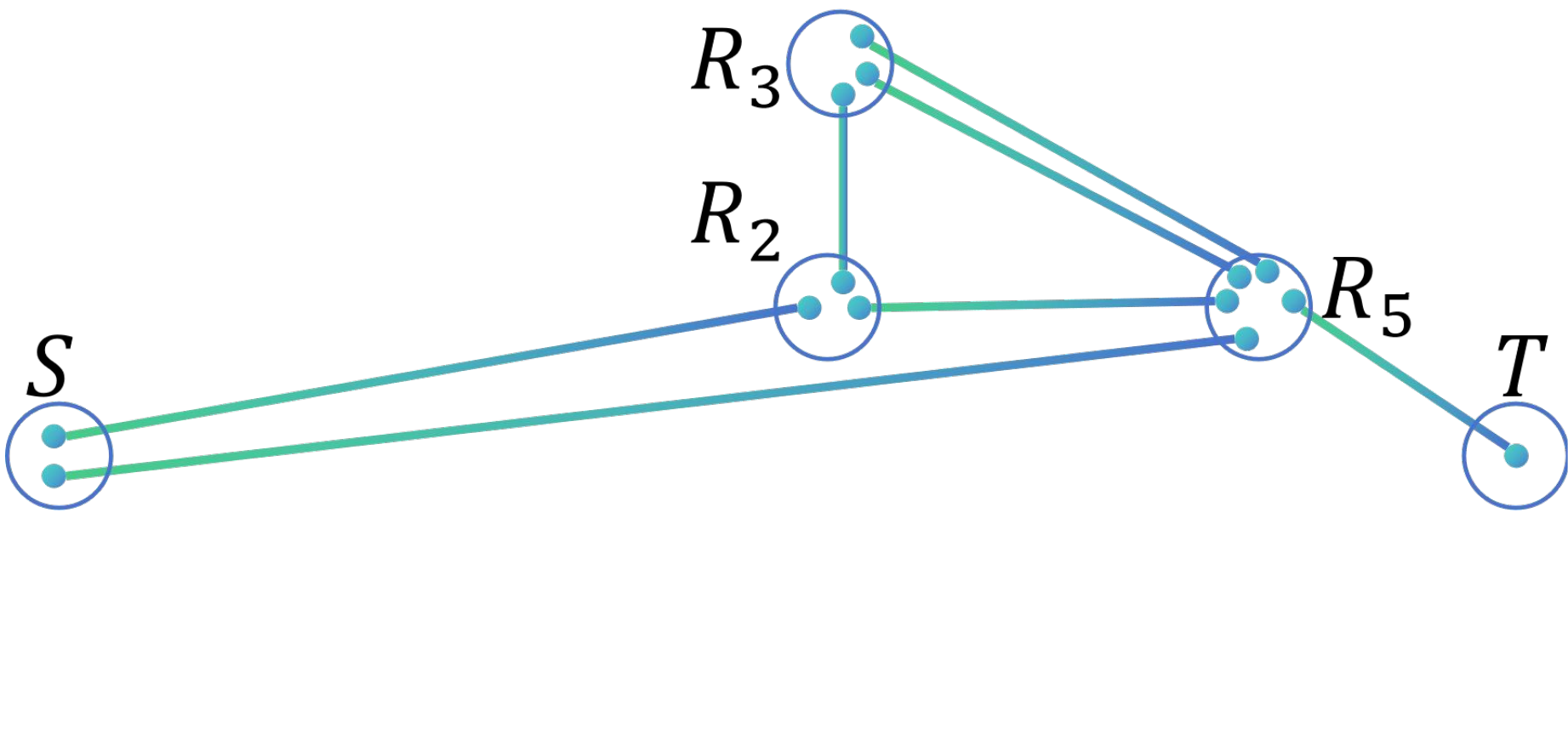}
        \label{fig-step2}
    }
    \subfigure[]{
       \includegraphics[width=100pt]{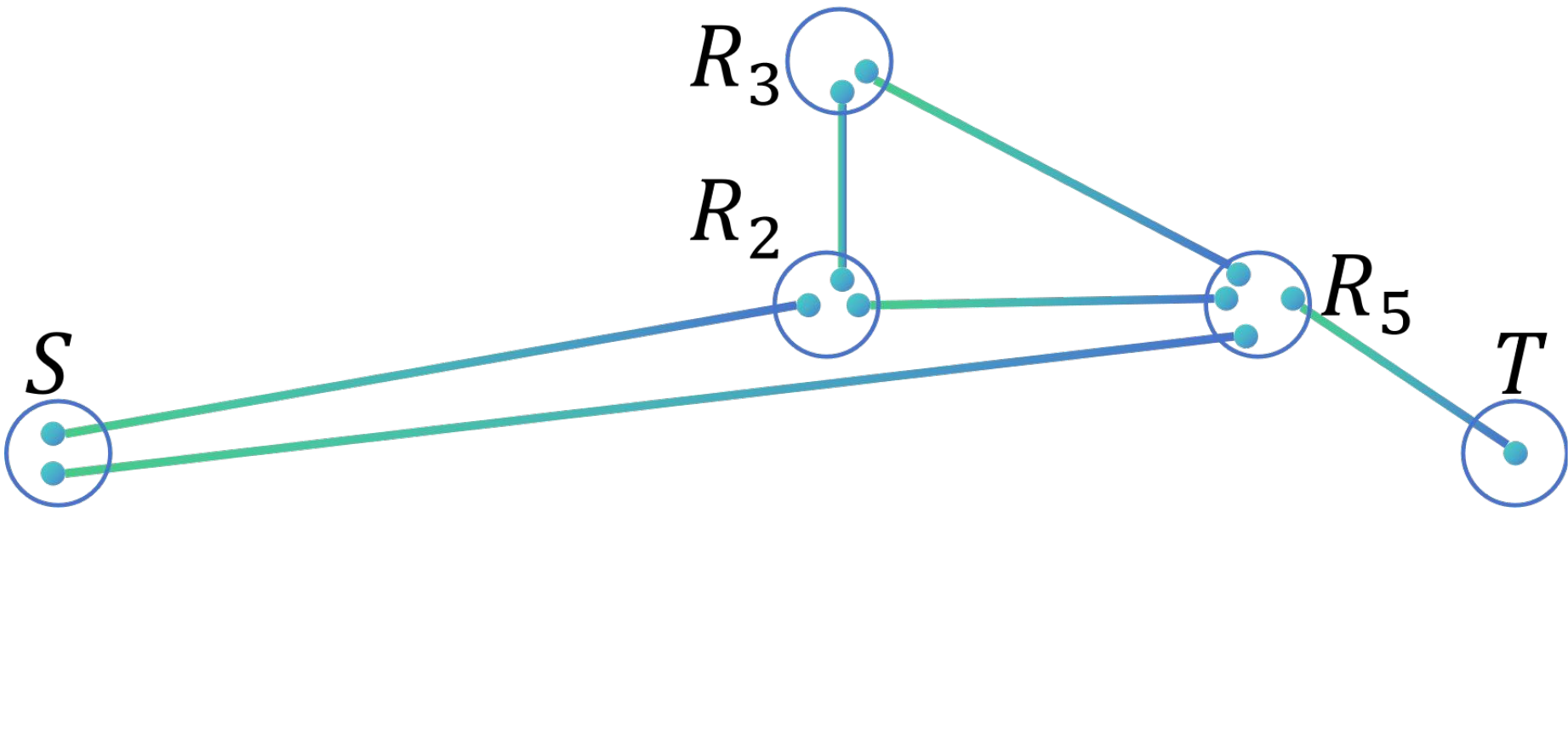}
        \label{fig-step3}
    }
    \subfigure[]{
       \includegraphics[width=100pt]{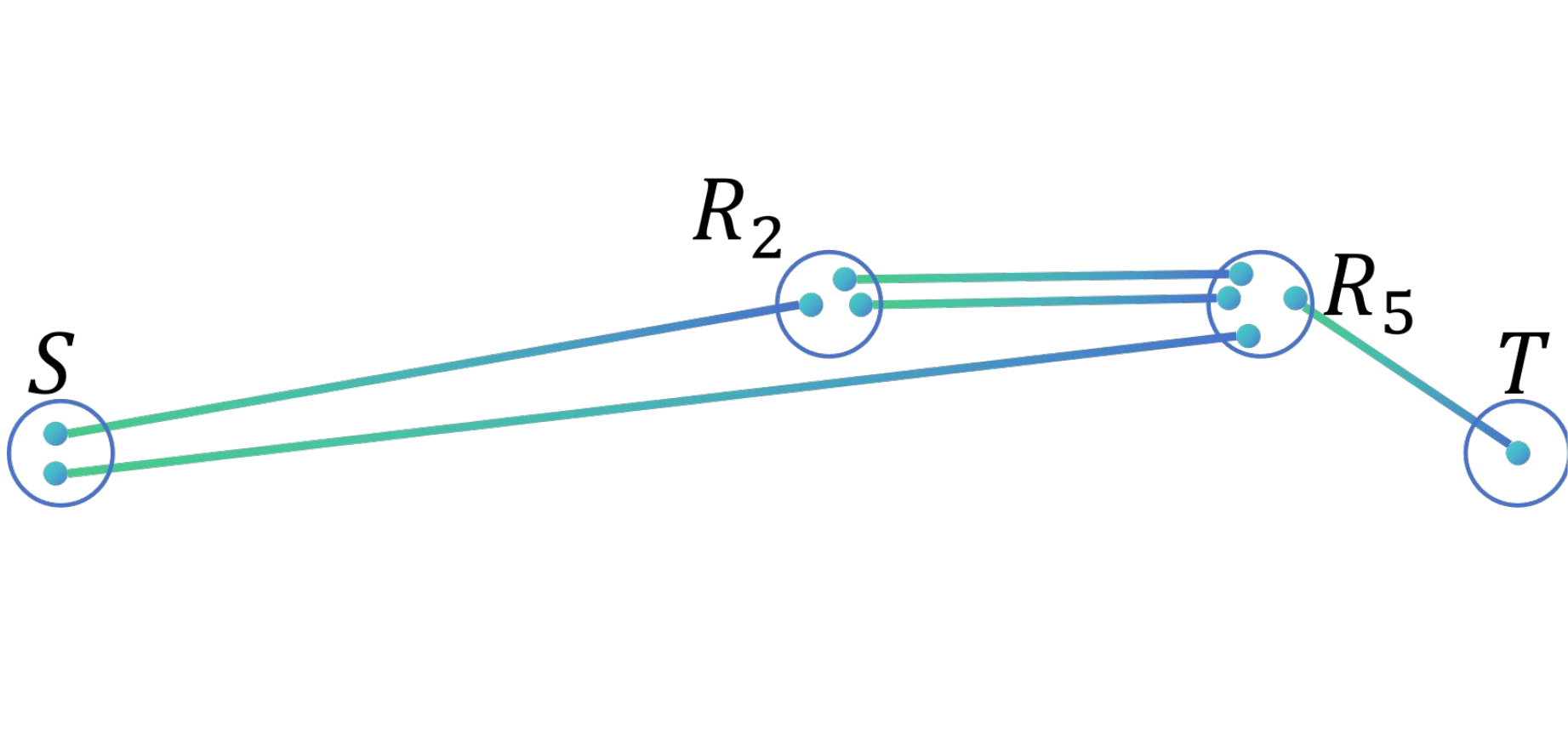}
        \label{fig-step4}
    }\\
    \subfigure[]{
       \includegraphics[width=100pt]{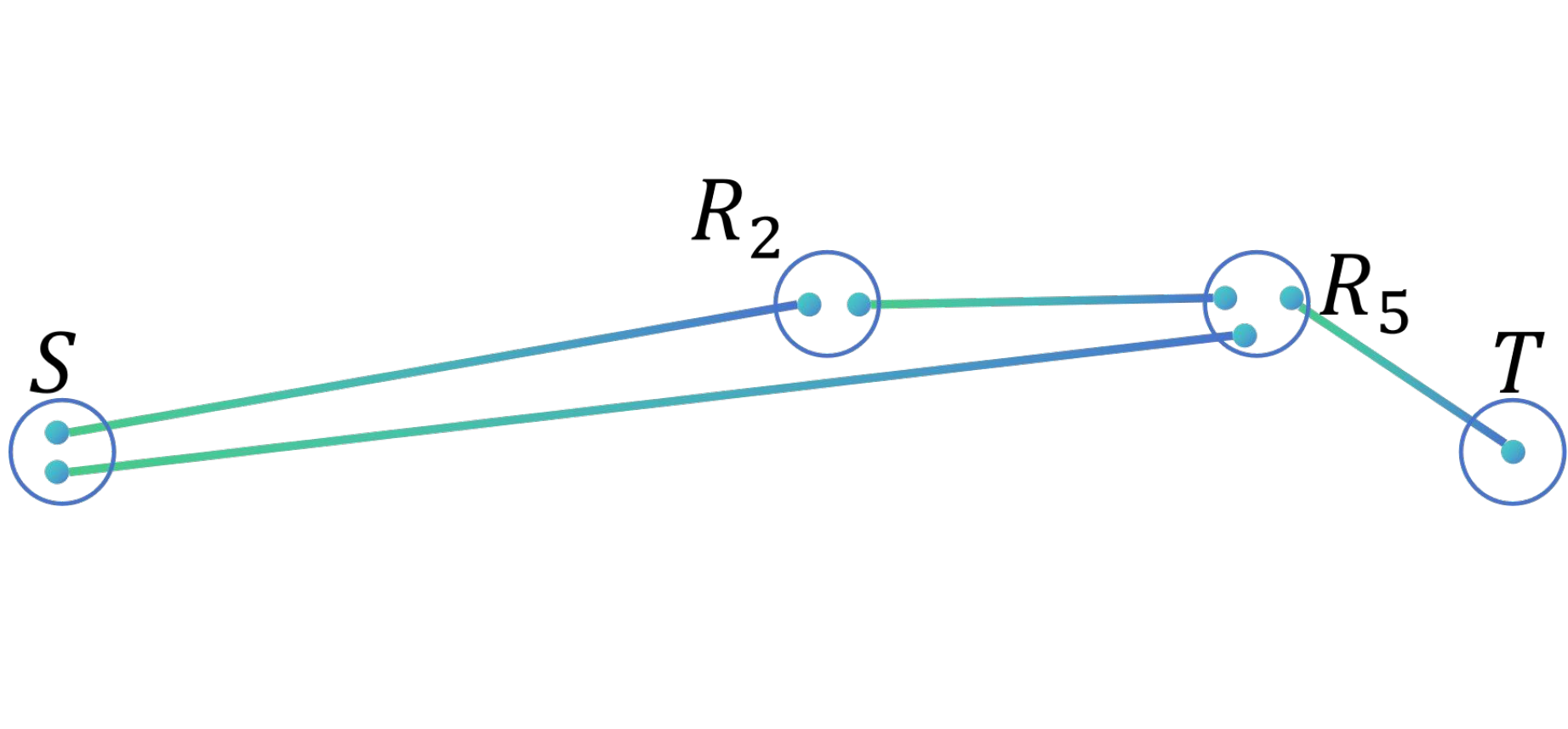}
        \label{fig-step5}
    }
    \subfigure[]{
       \includegraphics[width=100pt]{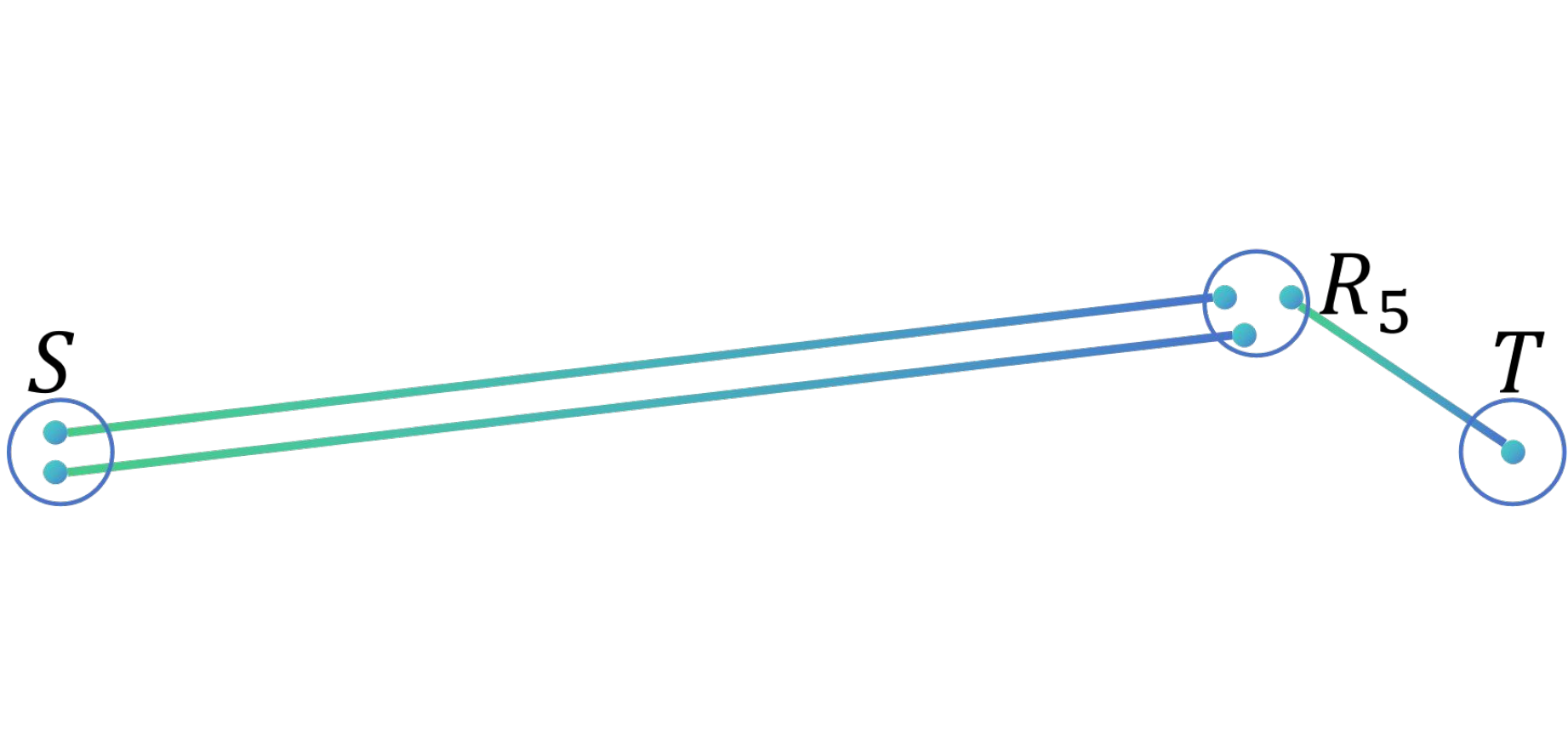}
        \label{fig-step6}
    }
    \subfigure[]{
       \includegraphics[width=100pt]{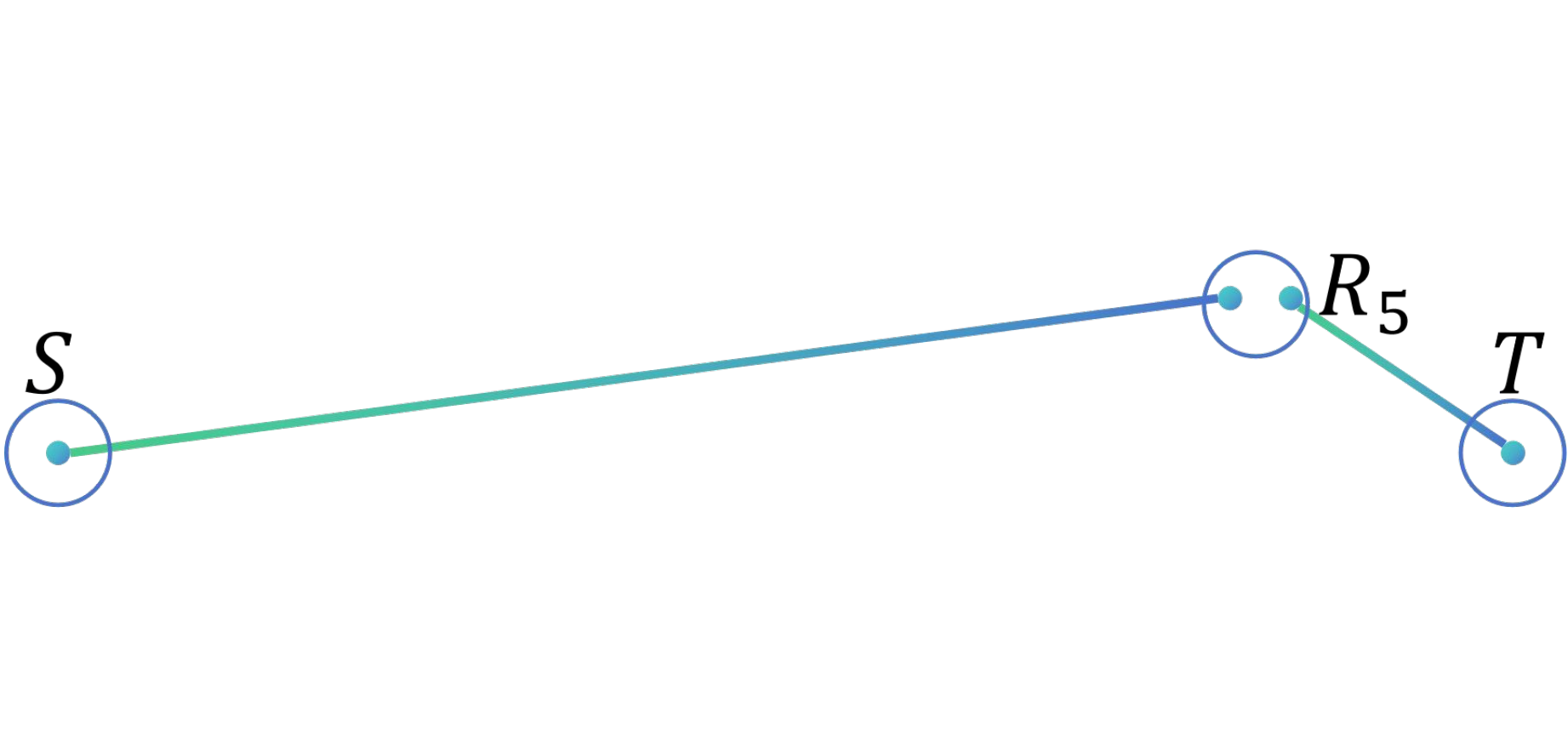}
        \label{fig-step7}
    }
    \subfigure[]{
       \includegraphics[width=100pt]{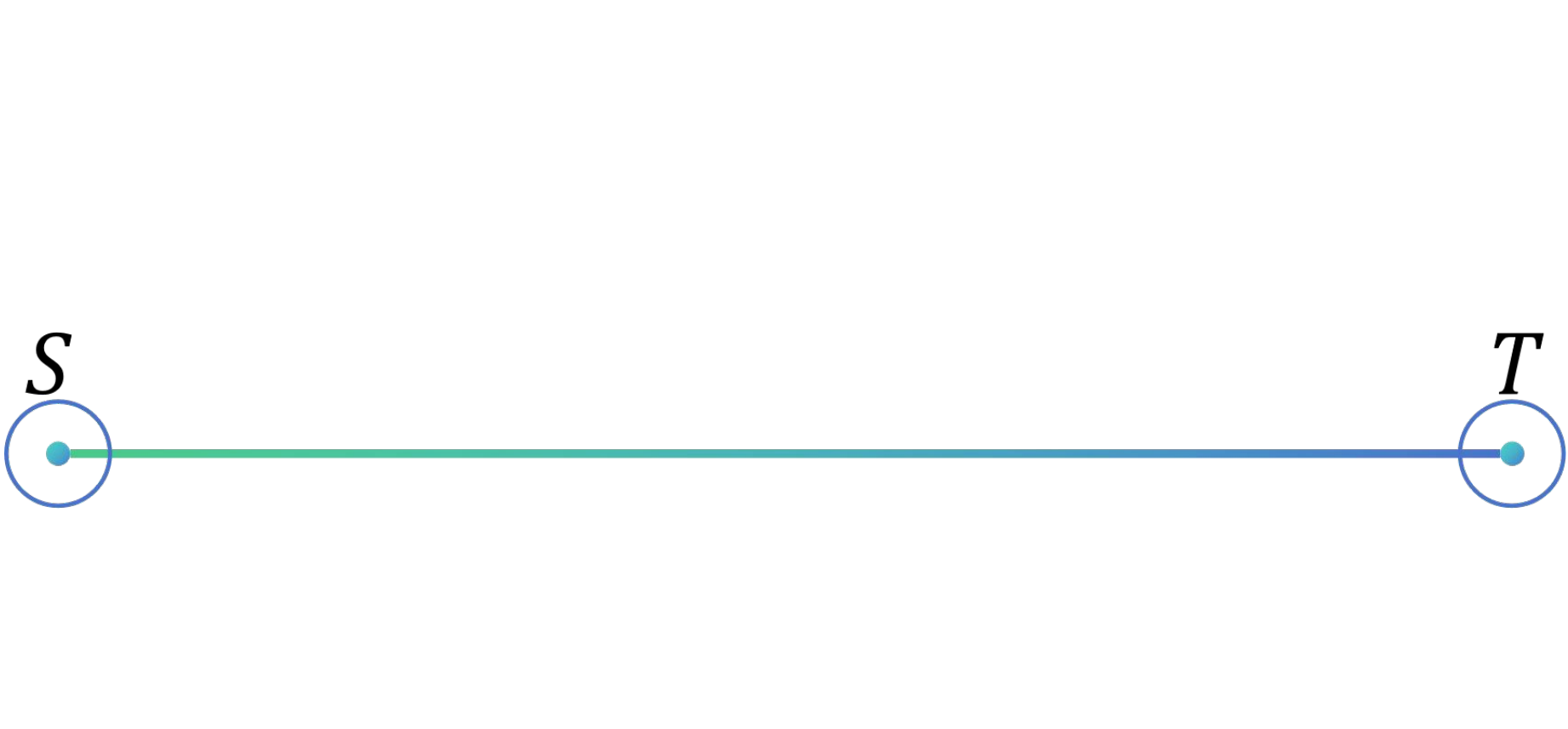}
        \label{fig-step8}
    }
    \caption{Steps of simplifying the network with terminals $S$ and $T$ shown in Fig.~\ref{fig-s_p}. }
    \label{fig-simplify_s_p}
\end{figure}

{Several typical series-parallel and non-series-parallel networks will be presented. The network topology can be characterized by distinct topological categories [Fig.~\ref{fig-series_parallel}]. As established by Proposition~\ref{prop_s_p}, all topologies between $S$ and $T$ in Figs.~\ref{fig-s}--\ref{fig-s_p} are series-parallel. For Fig.~\ref{fig-s_p}, we illustrate its simplification processes under criteria (${\rm I}$) and (${\rm II}$) in Fig.~\ref{fig-simplify_s_p}.
First, apply operation (${\rm I}$) to the parallel links (between $R_1$ and $R_2$, and $R_5$ and $R_7$) in Fig.~\ref{fig-s_p} to obtain Fig.~\ref{fig-step1}, followed by operation (${\rm II}$) on the series links in three paths $(S,R_1,R_2)$, $(S,R_6,R_7,R_5)$ and $(R_3,R_4,R_5)$, respectively, to generate Fig.~\ref{fig-step2}. Subsequently, iteratively execute operation (${\rm I}$) on the parallel links between $R_3$ and $R_5$ and operation (${\rm II}$) on path $(R_2,R_3,R_4)$, producing Figs~\ref{fig-step4}~and~\ref{fig-step5} in sequence.
Proceed by implementing (${\rm I}$) on links connecting $R_2$ and $R_5$ [Fig,~\ref{fig-step5}], then eliminate node $R_2$ via (${\rm II}$) [Fig.~\ref{fig-step6}]. Finally, simplify the parallel links between $S$ and $R_5$ through (${\rm I}$) to form the series network in Fig.~\ref{fig-step7}, and further estimate node $R_5$ to achieve the simplification process of initial network.
}

{The introduction of a `bridge' (red line) to Fig.~\ref{fig-s_p} yields a non-series-parallel topology as shown in Fig.~\ref{fig-general} which can be simplified, via processes (${\rm I}$) and (${\rm II}$), into the form of Fig.~\ref{fig-simplify_general} with $R_{i_{k-1}}=S$, $i_k=1$, $i_{k+1}=6$ and $i_{k+2}=5$.
Notably, certain network topologies inherently contain non-essential components that play no role in the connectivity between two designated terminals. Formally, in a network with terminal nodes $S$ and $T$, a sub-network is said to be \emph{topologically redundant} with respect to the connectivity between $S$ and $T$ if it is not contained in any self-avoiding path between $S$ and $T$. For example, in the network shown in Fig.~\ref{fig-s-p}, the sub-network formed by the orange nodes ($A_1$ and $A_2$) and the corresponding orange links is topologically redundant. This implies that, with respect to $S$-$T$ connectivity, the network is effectively equivalent to the topology shown in Fig.~\ref{fig-s_p}.
}
\begin{figure}
    \centering
    \subfigure[]{
    \includegraphics[width=1.59in]{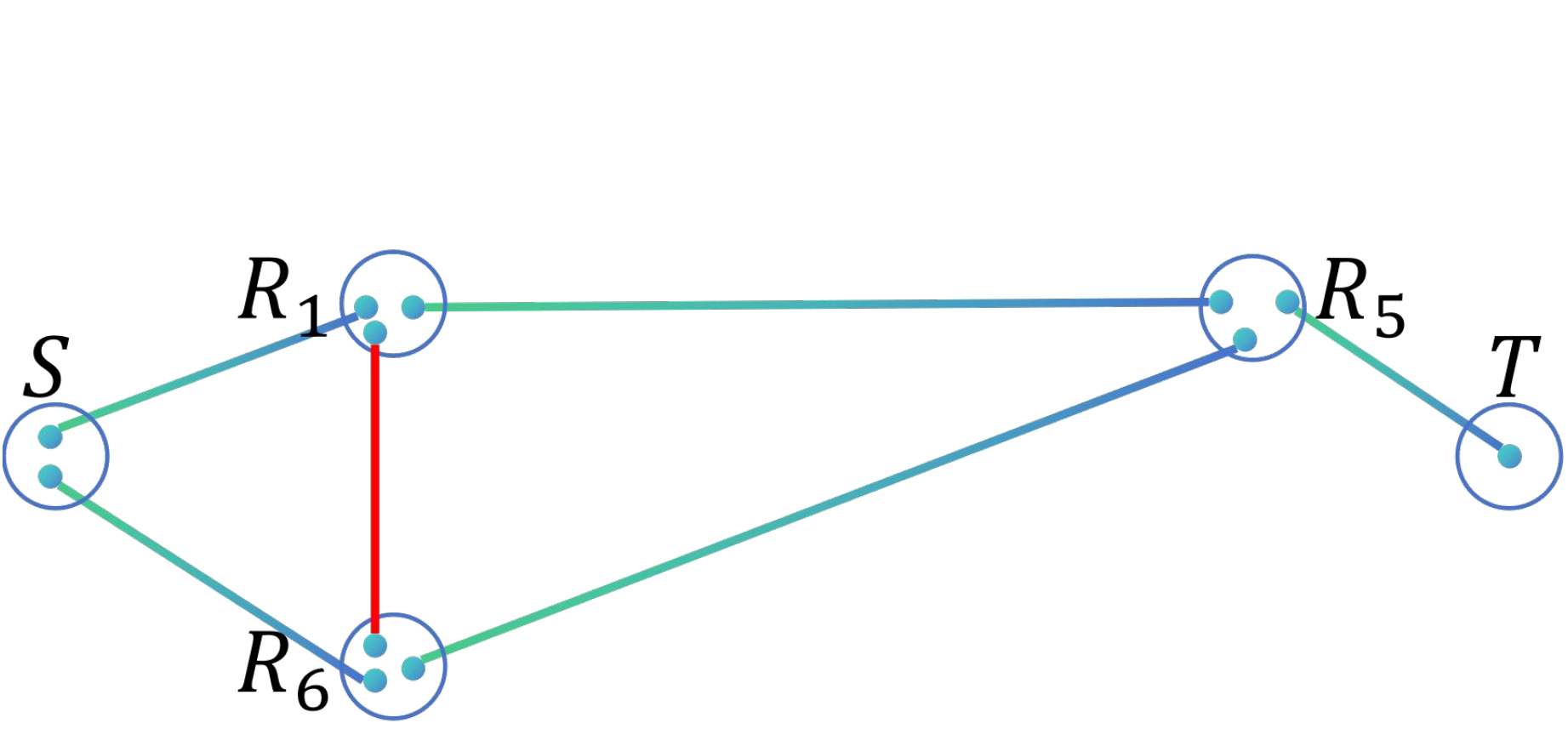}
    \label{fig-simplify_general}
    }
    \hspace{10mm}
    \subfigure[]{
    \includegraphics[width=1.59in]{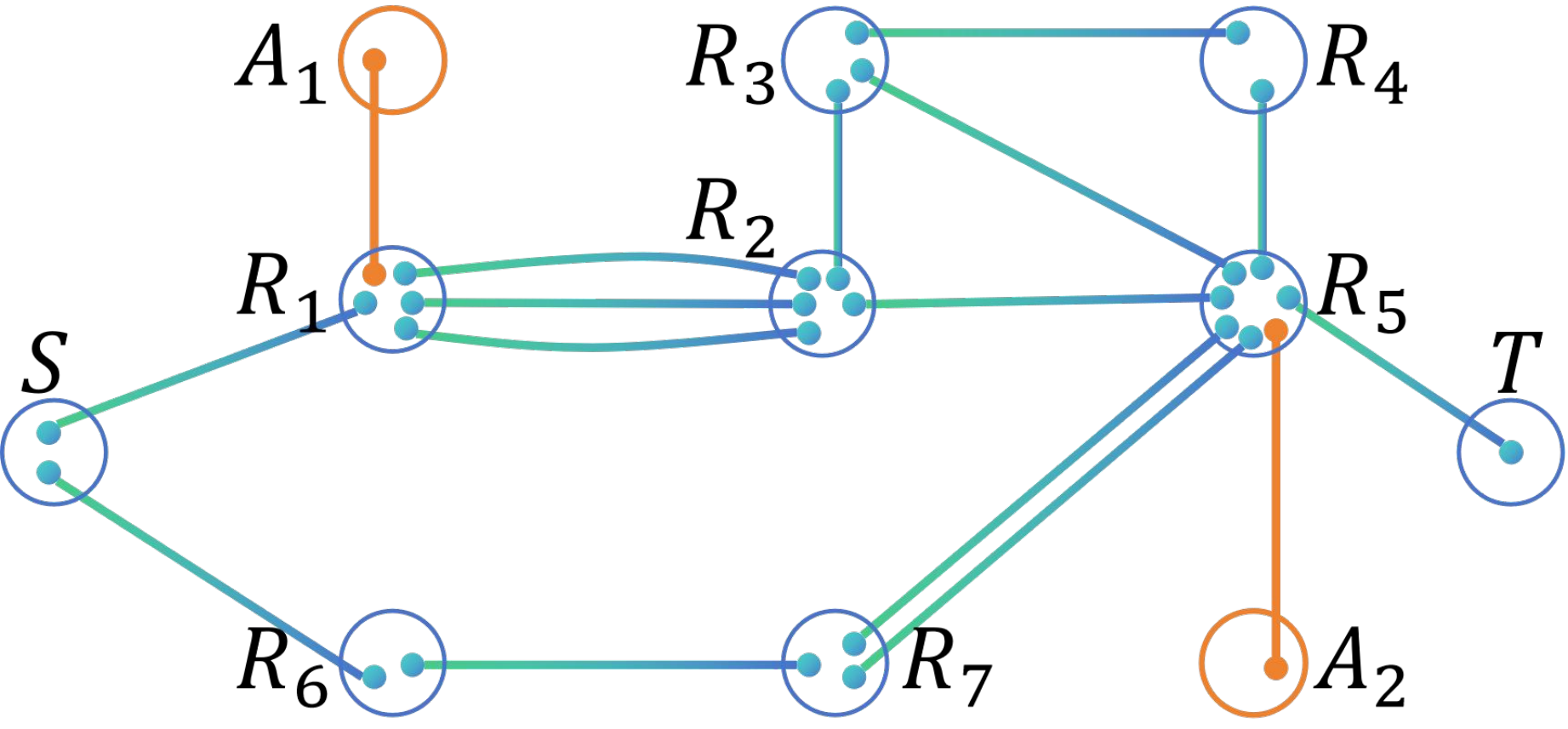}
    \label{fig-s-p}
    }
    \caption{\subref{fig-simplify_general}~Simplified form of network in Fig.~\ref{fig-general}.
    \subref{fig-s-p}~A equivalent network of Fig.~\ref{fig-s_p} with two terminals $S$ and $T$. Here, the sub-network consisting the orange nodes $A_1$ and $A_2$}
\end{figure}

\section{Operator representation of basic operations in entanglement percolation}

In this section, we describe the QN model and the three basic operations needed for entanglement percolation in the language of operator theory. These operations are entanglement swapping, entanglement concentration, and star transformation.

We first formalize a QN based on operator theory. Consider Fig.~\ref{fig-QN6nodes}, where different nodes represent spatially separated systems (for example, nodes $R_{1}$ and $R_{2}$), and each link connecting two nodes (e.g., the cyan line $R_{1}R_{2}$) represents shared quantum source states (such as $\rho_{R_{1}^{(1)}R_{2}^{(1)}}$).
When two nodes $S$ and $T$ share no state, no link exists between them. When the nodes of a QN are identified with subscripted labels, $R_j$, we denote the source state $\rho_{R_{j_1}^{(l_{j_1,j_2})}R_{j_2}^{(l_{j_2,j_1})}}$ shared between $R_{j_1}$ and $R_{j_2}$ ($j_1<j_2$) according to the following rules. Let $K_n$ be the number of subsystems in $R_{j_n}$, and label the nodes adjacent to $R_{j_n}$ in increasing order of their subscripts as $R_{j_{n1}},R_{j_{n2}},..., R_{j_{nK_n}}$ where $j_{n1}\leq j_{n2}\leq \cdots\leq j_{nK_n}$. When $R_{j_2}=R_{j_{1m}}$, we set $l_{j_1,j_2}=m$.
The number $l_{j_2,j_1}$ is obtained similarly.
The state of the entire network is denoted as $\rho_{\rm net}$, which is the tensor product of all source states.
An example of a QN is shown in Fig.~\ref{fig-QN6nodes}.

{In this paper, we focus on QNs with two-terminal series–parallel structures that are free of topological redundancy. For such QNs, we denote the terminals by $S$ and $T$, and label the internal nodes as $R_1, R_2, \dots$ [Figs.~\ref{fig-s}--\ref{fig-s_p}]. When applying the indexing rule $l$ to the source states, the terminals $S$ and $T$ are treated as $R_0$ and $R_{N+1}$, respectively.}

\begin{figure}[h]
    \centering
    \includegraphics[width=150pt]{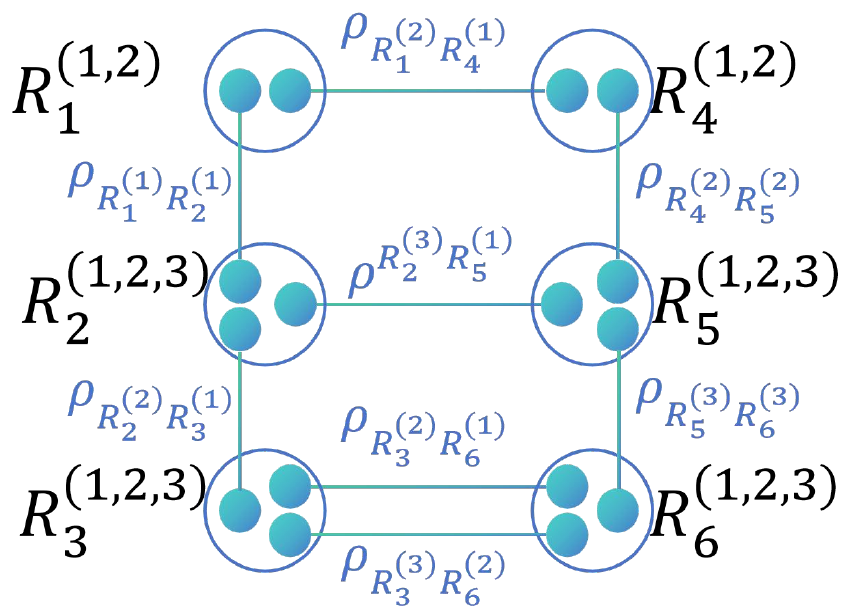}
    \caption{
    A quantum network with six nodes. The nodes are denoted as $R_{1},R_{2},\cdots,R_{6}$, respectively. The notation rules for states are described in the main text.}
    \label{fig-QN6nodes}
\end{figure}

\subsection{Entanglement swapping}\label{sec-swapping}

Entanglement swapping, as a class of local operations and classical communication (LOCC), is the most fundamental operation in entanglement percolation protocols. While prior research on deterministic entanglement swapping has provided concrete operational schemes~\cite{DET2023, P2002}, it has not yet abstracted a unified operator representation. Here, we {aim to provide} a more detailed operator expression for it acting on a QN.

\begin{figure}[h!]
    \centering
    \includegraphics[width=120pt]{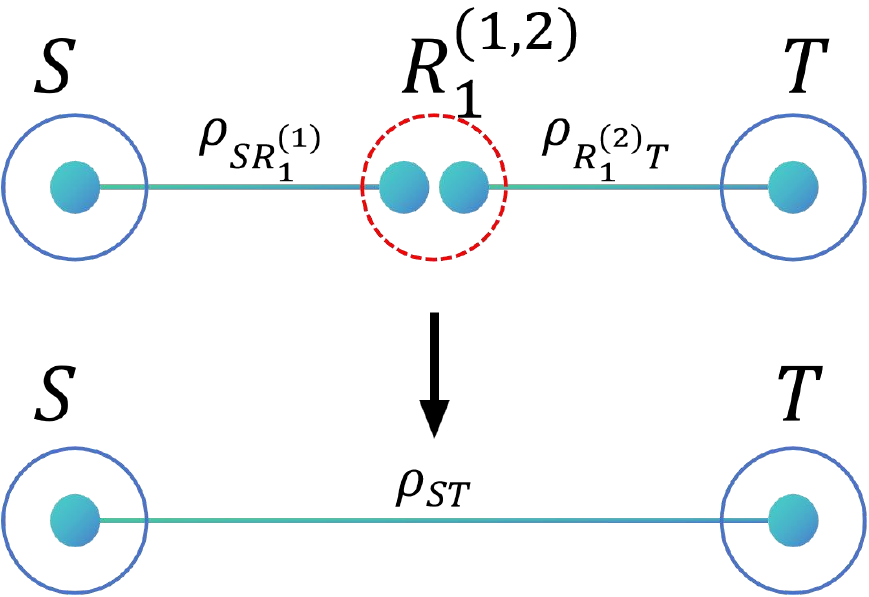}
    \caption{Entanglement swapping model.}
    \label{fig-seri2}
\end{figure}

\subsubsection{Operator  representation for entanglement swapping}

Let us start with the simplest network structure to gradually establish the operator representation of entanglement swapping.
Given a serial QN with only one intermediate node illustrated in Fig.~\ref{fig-seri2}. It consists of three nodes: two terminals $S$ and $T$, and a repeater node $R_1$ (or $R_1^{(1,2)}$).
Two states $\rho_{SR_1^{(1)}},\rho_{R_1^{(2)}T}$ are shared between $S$ and the first subsystem $R_1^{(1)}$ of $R$, and the second subsystem $R_1^{(2)}$ of $R$ and $T$, respectively.
We next introduce the LOCC-based entanglement swapping protocol that establishes long-distance entanglement between $S$ and $T$ via local measurements performed on the intermediate node $R_1$.

Given a measurement operation $\mathcal{M}_{R_1 }$ with locally measurement collection $\left\{M_{R_1 }^{(m)}\right\}_m$ performed on the state $\sigma_{R_1 }$ of bipartite system $R_1 $ defined by
\begin{eqnarray}\label{eq-DV_sum}
    \mathcal{M}_{R_1 }(\sigma_{R_1 })\equiv\sum_{m}M_{R_1 }^{(m)}\sigma_{R_1 }\left(M_{R_1 }^{(m)}\right)^{\dagger}.
\end{eqnarray}
The following LOCC protocol on the network state in Fig. ~\ref{fig-seri2},
\begin{eqnarray}\label{eq-Lambda}
  \mathcal{S}_{R_1 }:=I_{S}\otimes \mathcal{M}_{R_1 }\otimes I_{T}
\end{eqnarray}
yields a new entangled state
\begin{eqnarray*}
    \rho_{ST}={\rm Tr}_{R_1 }\left[ \mathcal{S}_{R_1 }\left(\rho_{SR_1^{(1)}}\otimes\rho_{R_1^{(2)}T}\right)\right]
\end{eqnarray*}
between $S$ and $T$.
The protocol results in the complete elimination of entanglement between $R$ and the remaining nodes $S$ and $T$---effectively rendering $R$ disconnected from the QN---this process is regarded as a {\it removal-node entanglement swapping}
\begin{eqnarray*}
    \Phi:\mathcal{D}_{\mathcal{S}}\otimes\mathcal{D}_{\mathcal{S}}\to\mathcal{D}_{\mathcal{S}}
\end{eqnarray*}
with
\begin{eqnarray}\label{eq-swapping_def}
    \Phi\left(\rho_{SR_1^{(1)}}\otimes\rho_{R_1^{(2)}T}\right)=\rho_{ST}.
\end{eqnarray}
where $\mathcal{D}_{\mathcal{S}}$ is a subset of states that can be chosen as source states in the QN. Thus, explicitly, the removal-node entanglement swapping is defined as
\begin{eqnarray}\label{eq-SwapPhi}
\Phi_{R_1 }^{ST}\left(\rho_{SR_1^{(1)}}\otimes\rho_{R_1^{(2)}T}\right)
    ={\rm Tr}_{R_1 }
    \left[ \mathcal{S}_{R_1 }\left(\rho_{SR_1^{(1)}}\otimes\rho_{R_1^{(2)}T}\right)\right],
\end{eqnarray}
where the subscript $R_1 $ indicates the node being measured. Note that, $\Phi_{R_1}^{ST}$ in Eq.~\eqref{eq-SwapPhi} is linear.
In certain protocols, local unitary operations $U_S^{(m)}$, $U_{R_1 }^{(m)}$ and $U_T^{(m)}$ depend on both the QN state $\rho_{SR_1^{(1)}}\otimes\rho_{R_1^{(2)}T}$ and the measurement outcome $m$, and are applied to the three nodes following the operation $M_{R_1 }^{m}$. In such implementations, the overall swapping operation takes the form with replacing $\mathcal{S}_{R_1 }$ in Eq.~\eqref{eq-Lambda} by
\begin{eqnarray}\label{eq-Lambda2}
    \mathcal{S}_{R_1 }'(\rho_{SR_1^{(1)}}\otimes\rho_{R_1^{(2)}T})=\sum\limits_{(m)}
    Q^{(m)}_{R_1 }
    \rho_{SR_1^{(1)}}\otimes\rho_{R_1^{(2)}T}
    \left(Q^{(m)}_{R_1 }\right)^{\dagger},
\end{eqnarray}
with $Q^{(m)}_{R_1 }=U_S^{(m)}\otimes U_{R_1 }^{(m)}M_{R_1 }^{(m)}\otimes U_T^{(m)}$ where the unitaries $U_S^{(m)}$ and $U_{R_1 }^{(m)}$ depend on the QN state $\rho_{SR_1^{(1)}}\otimes\rho_{R_1^{(2)}T}$.
In this case, the removal-node entanglement swapping becomes
\begin{eqnarray}\label{eq_swap2_unitary}
\Phi_{R_1 }^{ST}(\rho_{SR_1^{(1)}}\otimes\rho_{R_1^{(2)}T})
    ={\rm Tr}_{R_1 }
    \left[ \mathcal{S}_{R_1 }'\left(\rho_{SR_1^{(1)}}\otimes\rho_{R_1^{(2)}T}\right)\right].
\end{eqnarray}
Note that, as the unitary operators may be related to the source state, $\Phi_{R_1}^{ST}$ in Eq.~\eqref{eq_swap2_unitary} is in general nonlinear.
The map $\Phi_{R_1 }^{ST}$ given in Eq.~\eqref{eq-SwapPhi} is just the special case of the general form in Eq.~\eqref{eq_swap2_unitary} obtained by taking every unitary to be the identity.
The above description presents the operator representation of entanglement swapping on the simplest network structure. For more complex network structures with additional nodes, the protocol is successively applied to establish entanglement between distant nodes.

It is also mentioned that when $R_1 $ is a CV quantum system, the local operation $\mathcal{M}$ becomes an integral in continuous form as
\begin{eqnarray}\label{eq-cV_int}
    \mathcal{M}_{R_1 }(\sigma_{R_1 })\equiv\int_{\alpha}M_{R_1 }^{(\alpha)}\sigma_{R_1 }\left(M_{R_1 }^{(\alpha)}\right)^{\dagger}d\alpha.
\end{eqnarray}
In the subsequent discussion, we focus on the discrete form of the entanglement swapping operation, as shown in Eq.~\eqref{eq-DV_sum}, since the continuous integral form in Eq.~\eqref{eq-cV_int} can be generalized straightforwardly.

Now let us return to the issue of entanglement percolation. Recall that the purpose of the entanglement percolation protocol is to establish entanglement between two distant nodes. We use the entanglement measure to quantify the presence of entanglement. With the mathematical representation of the entanglement swapping protocol established above, we are able to formulate the entanglement percolation for the simple case of a series network.
Given an entanglement measure $E$ for bipartite states, in the series  QN of Fig.~\ref{fig-seri2}, to determine whether the entanglement between adjacent nodes can be {transmitted} to that between two distant nodes $S$ and $T$, we check if the entanglement measure between them is nonzero. The new established entanglement ${\rm Seri}_{\Phi,E}$ is defined by
\begin{eqnarray}\label{eq-Seri_2}
    {\rm Seri}_{\Phi,E}\left[\rho_{SR_1^{(1)}},\rho_{R_1^{(2)}T}\right]:=E\left[\Phi_{R_1 }^{ST}\left(\rho_{SR_1^{(1)}}\otimes\rho_{R_1^{(2)}T}\right)\right].
\end{eqnarray}
We call ${\rm Seri}_{\Phi,E}$ the \emph{series rule} for the swapping $\Phi$ on state set $\mathcal{D}_{\mathcal{S}}$ under $E$.
In the next subsection, we focus on the properties and behaviors of entanglement swapping in percolation.

\subsubsection{Properties and Examples}\label{sec-swapping_examples}

{For the entanglement swapping mapping $\Phi: \mathcal{D}_S \otimes \mathcal{D}_S \to \mathcal{D}_S$, we consider the question under what conditions on entanglement measure $E$ and two distinct pairs of quantum states ${\rho_{SR_1^{(1)}}, \rho_{R_1^{(2)}T}}$ and ${\rho_{SR_1^{(1)}}', \rho_{R_1^{(2)}T}'}$ in $\mathcal{D}_S$ that ensure the equality
\begin{eqnarray}\label{eq-preserving}
         {\rm Seri}_{\Phi,E}\left[\rho_{SR_1^{(1)}},\rho_{R_1^{(2)}T}\right]
         ={\rm Seri}_{\Phi,E}\left[\rho_{SR_1^{(1)}}',\rho_{R_1^{(2)}T}'\right]
    \end{eqnarray}
holds. This question is important for our purpose. In general, Eq.~\eqref{eq-preserving} does not hold true. However, by imposing certain constraints, it is possible to satisfy this equality under specific circumstances.}

{We find that for certain specific entanglement swapping $\Phi$ and entanglement measure $E$, if the two pairs of states in $\mathcal{D}_S$ satisfy
\begin{eqnarray}\label{eq-preserving1}
E\left(\rho_{SR_1^{(1)}}\right)=E\left(\rho_{SR_1^{(1)}}'\right),\quad E\left(\rho_{R_1^{(2)}T}\right)=E\left(\rho_{R_1^{(2)}T}'\right),
\end{eqnarray}
or
\begin{eqnarray}\label{eq-preserving2}
E\left(\rho_{SR_1^{(1)}}\right)=E\left(\rho_{R_1^{(2)}T}'\right),\quad E\left(\rho_{R_1^{(2)}T}\right)=E\left(\rho_{SR_1^{(1)}}'\right),
\end{eqnarray}
Eq.~\eqref{eq-preserving} holds. In {the} following, we give several examples that satisfy this preservation property. }

\textbf{Example 4.1} (\textbf{DV-based QNs of two-qudit states}) Let $E$ be the $G$-concurrence $C_G$ for DV-based states in Eq. (\ref{G-lebal}) and let $\mathcal{D}_{\mathcal{S}}$ be the set of all pure two-qudit (the dimension of each system is $d$) entangled states.
In a QN with its sources from  $\mathcal{D}_{\mathcal{S}}$.
Ref.~\cite{DET2023} proposes an entanglement swapping protocol.
Denoted by $|e_{1}\rangle$, $|e_{2}\rangle$,...,$|e_{d}\rangle$ the computational basis of a $d$-dimensional quantum system.
Given $d^2$ two-qudit states as
\begin{eqnarray*}
    |M^{(m)}\rangle=d^{-1}\sum_{\mu,\nu}M_{m,\mu,\nu}|e_{\mu}\rangle\otimes |e_{\nu}\rangle,\ m=1,2,\ldots,d^2,
\end{eqnarray*}
where $M_{m,\mu,\nu}=\exp\left[{-m(d\mu+\nu)2\pi i/d^2-2\pi i \mu\nu/d}\right]$.
Then the collection $\{{M}_{R_1 }^{(m)}\}_m$ of projections
\begin{eqnarray}\label{eq-measure_qudit}
    {M}^{(m)}=|M^{(m)}\rangle \langle M^{(m)}|
\end{eqnarray}
constitutes a projective measurement.

In the simplest 1D network shown in Fig.~\ref{fig-seri2}, we have the following observation of
a symmetric series rule in the form of product:
 \begin{eqnarray}\label{eq-seri_CG}
    {\rm Seri}_{\Phi,C_G}\left[\rho_{SR_1^{(1)}}, \rho_{R_1^{(2)}T}\right]
    =C_G\left(\rho_{SR_1^{(1)}}\right) C_G\left(\rho_{R_1^{(2)}T}\right).
\end{eqnarray}

\begin{proof}
Consider the scenario that $S$ and $R_1^{(1)}$ share a state $\rho_{SR_1^{(1)}}=|a\rangle\langle a|\in\mathcal{D}_{\mathcal{S}}$ with
\begin{eqnarray*}
    |a\rangle=\sum_{j=1}^{d}\sqrt{a_{j}}|e_{j}\rangle_{S}\otimes |e_{j}\rangle_{R_1^{(1)}}
\end{eqnarray*}
and $R_1^{(2)}$ and $T$ share another bipartite pure state $\rho_{R_1^{(2)}T}=|b\rangle\langle b|\in\mathcal{D}_{\mathcal{S}}$ with
\begin{eqnarray*}
    |b\rangle=\sum_{j=1}^{d}\sqrt{b_{j}}|e_{j}\rangle_{R_1^{(2)}}\otimes |e_{j}\rangle_{T},
\end{eqnarray*}
where $a_j,b_j\in(0,1)$, $\sum_{j=1}^{d}a_j=\sum_{j=1}^{d}b_j=1$ and the coefficient vectors $\vec{a}=(a_1,a_2,\ldots,a_d)$ and $\vec{b}=(b_1,b_2,\ldots,b_d)$ of the two states are both arranged in descending order.
Performing the entanglement swapping protocol with projective measurement $\{{M}_m\}_m$ on $R_1^{(1)}$ and $R_1^{(2)}$, then we can obtain the output pure state on nodes $S$ and $T$ for outcome $m$ is
\begin{eqnarray*}
    |\eta^{(m)}\rangle_{S T}=d^{-1} \sum_{k=1,s=1}^{d} \sqrt{a_{k} b_{s}} \exp\left[-(mdk+ms+dks) 2 \pi i/ d^2\right]|e_k\rangle_{S} \otimes |e_s\rangle_{T}
\end{eqnarray*}
with Schmidt coefficients $f_{sw}(\vec{a},\vec{b})$.
Here, $f_{sw}(\vec{x},\vec{y})$ is a swapping function proposed in Ref.~\cite{DET2023} of vectors $\vec{x}$ and $\vec{y}$,
\begin{eqnarray*}
    f_{sw}(\vec{x},\vec{y})=d\times \sigma^2\left(\rm{diag}(\vec{x})^{1/2} V \rm{diag}(\vec{y})^{1/2}\right)
\end{eqnarray*}
where $\sigma^2{(A)}$ denotes the entry-wise square of the singular values $\sigma{(A)}$ of operator $A$ arranged in descending order, and the matrix $V$ is constant and unitary with elements $V_{\mu,\nu}=d^{-1/2}\exp(-2\pi i \mu\nu/d)$, $\mu,\nu=1,2,\ldots,d$.
{Let $\rho_{S}^{(m)}=\Tr_{T}\left(|\eta^{(m)}\rangle\langle\eta^{(m)}|_{S T}\right)$, $\rho_{T}^{(m)}=\Tr_{S}\left(|\eta^{(m)}\rangle\langle\eta^{(m)}|_{S T}\right)$ and $\vec{\lambda}_{\rm{out}}=\left(\lambda_1,\lambda_2,\ldots,\lambda_d\right)^{\rm{T}}=f_{sw}(\vec{a},\vec{b})$.
Correspondingly, we denote eigenvectors of $\rho_{S}$ and $\rho_{T}$ by $|\alpha_1^{(m)}\rangle$, $|\alpha_2^{(m)}\rangle$, $\ldots$, $|\alpha_{d}^{(m)}\rangle$ and
$|\beta_1^{(m)}\rangle$, $|\beta_2^{(m)}\rangle$, $\ldots$, $|\beta_{d}^{(m)}\rangle$, respectively, where
$|\alpha_k^{(m)}\rangle=\sum_{s=1}^{d}\alpha_{k,s}^{(m)}|e_s\rangle$ and
$|\beta_k^{(m)}\rangle=\sum_{s=1}^{d}\beta_{k,s}^{(m)}|e_s\rangle$.
We then establish two local unitary operators
\begin{eqnarray*}\label{eq-local_US_qudit}
    U_{S}^{(m)}=\sum_{k=1}^{d}\sum_{s=1}^{d}\alpha_{k,s}^{(m)}|e_k\rangle\langle e_k|
\end{eqnarray*}
and
\begin{eqnarray*}\label{eq-local_UT_qudit}
    U_{T}^{(m)}=\sum_{k=1}^{d}\sum_{s=0}^{d-1}\beta_{k,s}^{(m)}|e_k\rangle\langle e_k|
\end{eqnarray*}
for systems of $S$ and $T$, respectively.}
Therefore, for each output state $|\eta^{(m)}\rangle_{S T}$ for outcome $m$, there exist a local rotation
\begin{eqnarray}\label{eq-local_UST_qudit}
    U_{ST}^{(m)}=U_{S}^{(m)}\otimes U_{T}^{(m)},
\end{eqnarray}
which is a unitary, such that $U_{ST}^{(m)}|\eta^{(m)}\rangle_{S T}=|\eta^{\rm{out}}\rangle_{S T}$
with
\begin{eqnarray*}
    |\eta^{\rm{out}}\rangle_{S T}
    =\sum\limits_{j=0}^{d-1}\sqrt{\lambda_{j}}|e_j\rangle_{S}\otimes|e_j\rangle_{T}
\end{eqnarray*}
which is independent with outcome $m$.
In a word, executing the entanglement swapping map $\Phi$ in Eq.~\eqref{eq_swap2_unitary} consisting of the measurement defined in Eq.~\eqref{eq-measure_qudit} and local unitary transformations defined in Eq.~\eqref{eq-local_UST_qudit}, one can deterministically obtain
\begin{eqnarray*}
    \Phi_{R_1 }^{ST}\left(\rho_{SR_1^{(1)}}\otimes\rho_{R_1^{(2)}T}\right)=|\eta^{\rm{out}}\rangle\langle\eta^{\rm{out}}|_{S T}
\end{eqnarray*}
from the QN state $\rho_{SR_1^{(1)}}\otimes\rho_{R_1^{(2)}T}$.
Then under the characterization of the $G$-concurrence $C_G$, we obtain the equation
\begin{eqnarray*}
    {\rm Seri}_{\Phi,C_G}\left(\rho_{SR_1^{(1)}}, \rho_{R_1^{(2)}T}\right)
    =C_G\left(\rho_{SR_1^{(1)}}\right) C_G\left(\rho_{R_1^{(2)}T}\right).
\end{eqnarray*}
\end{proof}

It is evident that the product-form series rule given in Eq.~\eqref{eq-seri_CG} satisfies the preservation property---that is, the validity of either Eq.~\eqref{eq-preserving1} or Eq.~\eqref{eq-preserving2} implies the validity of Eq.~\eqref{eq-preserving}.
Furthermore, product-form series rules also exist for two-qubit states and TMSVSs, suggesting that these systems similarly possess the aforementioned preservation property. The details are presented below.

In qubit systems ($d=2$), we have a more detailed calculation. Let $\{|ks\rangle\}_{k,s=0}^{1}$ be the computational basis. For the pure states  $|a\rangle=|b\rangle=\sqrt{\lambda}|00\rangle+\sqrt{1-\lambda}|11\rangle$,
we obtain
\begin{eqnarray*}
    |M^{(m)}\rangle=\frac{1}{2}\left(|00\rangle+e^{-\frac{m\pi i}{2}}|01\rangle+e^{-m\pi i}|10\rangle-e^{-\frac{3m\pi i}{2}}|11\rangle\right)
\end{eqnarray*}
and thereby the output state between $S$ and $T$ of outcome $m$ is
\begin{eqnarray*}
    |\eta^{(m)}\rangle_{ST}=\lambda|00\rangle+e^{\frac{m\pi i}{2}}|01\rangle+e^{m\pi i}\sqrt{\lambda(1-\lambda)}|10\rangle-e^{\frac{3m\pi i}{2}}(1-\lambda)|11\rangle
\end{eqnarray*}
 with probability of the outcome $m$ $p^{(m)}=1/4$.
The Schmidt coefficients of $|\eta^{(m)}\rangle_{ST}$ are
\begin{eqnarray*}
    \lambda_1=\frac{1+\sqrt{1-16\lambda^2(1-\lambda)^2}}{2},\lambda_2=\frac{1-\sqrt{1-16\lambda^2(1-\lambda)^2}}{2}.
\end{eqnarray*}
Then the eigenstates of $\rho_{S}^{(m)}$ are
\begin{eqnarray*}
    |\alpha_1\rangle
    =\frac{a}{\sqrt{a^2+(\lambda-\lambda_1)^2}}|0\rangle
    -\frac{\lambda-\lambda_1}{e^{-m\pi i}\sqrt{a^2+(\lambda-\lambda_1)^2}}|1\rangle
\end{eqnarray*}
and
\begin{eqnarray*}
    |\alpha_2\rangle
    =\frac{a}{\sqrt{a^2+(\lambda-\lambda_2)^2}}|0\rangle
    -\frac{\lambda-\lambda_2}{e^{-m\pi i}\sqrt{a^2+(\lambda-\lambda_2)^2}}|1\rangle,
\end{eqnarray*}
where $a=(2\lambda-1)\sqrt{\lambda(1-\lambda)}$.
Similarly, the eigenstates of $\rho_{T}^{(m)}$ are
\begin{eqnarray*}
    |\beta_1\rangle
    =\frac{a}{\sqrt{a^2+(\lambda-\lambda_1)^2}}|0\rangle
    -\frac{\lambda-\lambda_1}{e^{-\frac{m\pi i}{2}}\sqrt{a^2+(\lambda-\lambda_1)^2}}|1\rangle
\end{eqnarray*}
and
\begin{eqnarray*}
    |\beta_2\rangle
    =\frac{a}{\sqrt{a^2+(\lambda-\lambda_2)^2}}|0\rangle
    -\frac{\lambda-\lambda_2}{e^{-\frac{m\pi i}{2}}\sqrt{a^2+(\lambda-\lambda_2)^2}}|1\rangle.
\end{eqnarray*}
Then two unitary operators in Eqs.~\eqref{eq-local_UST_qudit} in qubit systems are
\begin{eqnarray*}
    U_{S}^{(m)}=&\frac{a}{\sqrt{a^2+(\lambda-\lambda_1)^2}}|0\rangle\langle 0|
    -\frac{\lambda-\lambda_2}{e^{-m\pi i}\sqrt{a^2+(\lambda-\lambda_1)^2}}|0\rangle\langle 1|&\\
    +&\frac{a}{\sqrt{a^2+(\lambda-\lambda_2)^2}}|1\rangle\langle 0|
    -\frac{\lambda-\lambda_2}{e^{-m\pi i}\sqrt{a^2+(\lambda-\lambda_2)^2}}|1\rangle\langle 1|&.
\end{eqnarray*}
and
\begin{eqnarray*}
    U_{T}^{(m)}=&\frac{a}{\sqrt{a^2+(\lambda-\lambda_1)^2}}|0\rangle\langle 0|
    -\frac{\lambda-\lambda_1}{e^{-\frac{m\pi i}{2}}\sqrt{a^2+(\lambda-\lambda_1)^2}}|0\rangle\langle 1|&\\
    +&\frac{a}{\sqrt{a^2+(\lambda-\lambda_2)^2}}|1\rangle\langle 0|
    -\frac{\lambda-\lambda_1}{e^{-\frac{m\pi i}{2}}\sqrt{a^2+(\lambda-\lambda_2)^2}}|1\rangle\langle 1|&.
\end{eqnarray*}
Finally, we have that
\begin{eqnarray}\label{eq-phi_2qubit}
    \Phi_{R_1 }^{ST}\left(\rho_{SR_1^{(1)}}\otimes\rho_{R_1^{(2)}T}\right)=|\eta^{\rm{out}}\rangle\langle\eta^{\rm{out}}|=|\psi_{\lambda}\rangle\langle\psi_{\lambda}|
\end{eqnarray}
with series rule
\begin{eqnarray}\label{eq-seri_c}
    {\rm Seri}_{\Phi,c}\left[\rho_{SR_1^{(1)}}, \rho_{R_1^{(2)}T}\right]
    =c(|\psi_{\lambda}\rangle\langle\psi_{\lambda}|)=c\left(\rho_{SR_1^{(1)}}\right)c\left(\rho_{R_1^{(2)}T}\right)
\end{eqnarray}
for the concurrence $c$.

\textbf{Example 4.2} (\textbf{CV-based QNs of TMSVSs}) Here we consider the QN with CV states as its sources. Let $\mathcal{D}_{\mathcal{S}}$ be a set of all TMSVSs [Eq.~\eqref{eq-TMSVS}].
Then two source states can be written as $\rho_{SR_1^{(1)}}=|\psi^{r_1}\rangle\langle\psi^{r_1}|$ and $\rho_{R_1^{(2)}T}=|\psi^{r_2}\rangle\langle\psi^{r_2}|$.
For this situation, the operator $\Phi_{R_1 }^{ST}$ is a CV-based entanglement swapping proposed in {Ref.~{\cite{P2002}}}, implemented in two steps:
(1) CV Bell measurement at node $R$: Optically, the two modes $R^{(1)}$ and $R^{(2)}$ at node $R$ are combined on a 50:50 beam splitter. Subsequently, the quadratures
$\hat{x}_u=(\hat{x}_1-\hat{x}_2)/\sqrt{2}$, $\hat{p}_v=(\hat{p}_1+\hat{p}_2)/\sqrt{2}$ are detected, where $\hat{x}_k$ and $\hat{p}_k$ are position operator and momentum operator of $R^{(k)}$, $k=1,2$; (2) Local displacements at node $T$: Local displacements are applied to $T$ according to:
\begin{eqnarray*}
    \hat{x}_T\to \hat{x}_T'=\hat{x}_T+g_{\rm swap}\sqrt{2}\hat{x}_u,\quad\hat{p}_T\to \hat{p}_T'=\hat{p}_T+g_{\rm swap}\sqrt{2}\hat{p}_v
\end{eqnarray*}
with the gain
\begin{eqnarray*}
    g_{\mathrm{swap}} = \dfrac{\sinh 2r_1 + \sinh 2r_2}{\cosh 2r_1 + \cosh 2r_2 -1}.
\end{eqnarray*}
Then we obtain
\begin{eqnarray*}
    \Phi\left(\rho_{SR_1^{(1)}}\otimes\rho_{R_1^{(2)}T}\right)=|\psi^r\rangle\langle\psi^r|_{ST}
\end{eqnarray*}
where $r$ is given by $\tanh r=\tanh r_1 \tanh r_2$, yielding a series rule for the ratio negativity $\chi_{\mathcal{N}}$ [Eq.~\eqref{eq-ratio_negativity}]~\cite{Ratio_negativity2024}:
\begin{eqnarray}\label{eq-seri_chi}
    {\rm Seri}_{\Phi,\chi_{\mathcal{N}}}\left[\rho_{SR_1^{(1)}}, \rho_{R_1^{(2)}T}\right]
    =\chi_{\mathcal{N}}\left(\rho_{SR_1^{(1)}}\right) \chi_{\mathcal{N}}\left(\rho_{R_1^{(2)}T}\right)
\end{eqnarray}
for all $\rho_{SR_1^{(1)}}$, $\rho_{R_1^{(2)}T}\in\mathcal{D}_{\mathcal{S}}$.

{
For general cases, the requirements of symmetric series rule in the form of product shown in Example 4.1 and 4.2 may be too stringent, so we propose a more general order-preserving property: the inequality
\begin{eqnarray*}\label{t-eq-phi_order_preserving}
    {\rm Seri}_{\Phi,E}\left[\rho_{SR_1^{(1)}},\rho_{R_1^{(2)}T}\right]\geq{\rm Seri}_{\Phi,E}\left[\rho_{SR_1^{(1)}}',\rho_{R_1^{(2)}T}'\right]
\end{eqnarray*}
holds for all states pairs in $\mathcal{D}_S$ which satisfy
\begin{eqnarray*}\label{c-eq-phi_order_preserving}
    E\left(\rho_{SR_1^{(1)}}\right)\geq E\left(\rho_{SR_1^{(1)}}'\right),\ E\left(\rho_{R_1^{(2)}T}\right)\geq E\left(\rho_{R_1^{(2)}T}'\right),
\end{eqnarray*}
or
\begin{eqnarray*}\label{c-eq-phi_order_preserving}
    E\left(\rho_{SR_1^{(1)}}\right)\geq E\left(\rho_{R_1^{(2)}T}'\right),\ E\left(\rho_{R_1^{(2)}T}\right)\geq E\left(\rho_{SR_1^{(1)}}'\right).
\end{eqnarray*}
It is easily checked that, if an entanglement swapping protocol $\Phi$ and an entanglement measure $E$ satisfy this order-preserving property, then Eq.~\eqref{eq-preserving} holds whenever two pairs of states in $\mathcal{D}_S$ satisfy Eq.~\eqref{eq-preserving1} or Eq.~\eqref{eq-preserving2}.
\hfill $\square$
}

As evident from the three expressions shown in Eqs.~\eqref{eq-seri_CG},~\eqref{eq-seri_c}~and~\eqref{eq-seri_chi}, all of the series rules in the three case manifest a multiplicative form, despite being quantified by distinct entanglement measures. As we shall see later, this perfect product form is crucial for establishing the entanglement percolation protocol. Unfortunately, apart from the three cases mentioned above, we still do not know whether this serial-product rule holds for other scenarios—even in networks with pure-state sources. Exploring this problem remains highly challenging (\cite{Ratio_negativity2024}). Here we derive and present a more general conclusion.

\begin{theorem}
If $E$ is an entanglement monotone, then
    \begin{eqnarray}\label{ineq-Seri}
        {\rm Seri}_{\Phi,E}\left[\rho_{SR_1^{(1)}},\rho_{R_1^{(2)}T}\right]
    \leq \min\left\{E\left(\rho_{SR_1^{(1)}}\right), E\left(\rho_{R_1^{(2)}T}\right)\right\}
    \end{eqnarray}
    for all pure bipartite states $\rho_{SR_1^{(1)}}$ and $ \rho_{R_1^{(2)}T}$.
\end{theorem}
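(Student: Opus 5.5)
The plan is to view the swapping map $\Phi_{R_1}^{ST}$ as an LOCC protocol with respect to the bipartition that isolates one of the two source states. I then use monotonicity of $E$ under LOCC on that bipartition twice: once for $\rho_{SR_1^{(1)}}$ and once, symmetrically, for $\rho_{R_1^{(2)}T}$. Taking the minimum of the two resulting bounds gives Eq.~\eqref{ineq-Seri}.

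\textbf{Bound by $E(\rho_{SR_1^{(1)}})$.} Group $R_1^{(2)}$ and $T$ together with $R_1^{(1)}$ into one party $B:=R_1^{(1)}R_1^{(2)}T$, and let $A:=S$. Since $\rho_{R_1^{(2)}T}=|b\rangle\langle b|$ is pure, appending it is a local operation on $B$. In the $A|B$ cut it is the local preparation map
\[
\sigma_{SR_1^{(1)}}\mapsto \sigma_{SR_1^{(1)}}\otimes|b\rangle\langle b|_{R_1^{(2)}T}.
\]

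Next I read the swapping of Eq.~\eqref{eq_swap2_unitary} in this cut:
\begin{itemize}[leftmargin=*]
\item the measurement $\{M^{(m)}_{R_1}\}$ and the unitary $U_{R_1}^{(m)}$ act inside $B$;
\item the unitary $U_T^{(m)}$ also acts inside $B$;
\item $U_S^{(m)}$ acts on $A$, conditioned on the classical outcome $m$, which is one-way classical communication $B\to A$;
\item the final partial trace ${\rm Tr}_{R_1}$ is local to $B$.
\end{itemize}
Hence the whole map $\rho_{SR_1^{(1)}}\mapsto\Phi_{R_1}^{ST}(\rho_{SR_1^{(1)}}\otimes|b\rangle\langle b|)$ is an LOCC channel $\Lambda_b$ from the $S|R_1^{(1)}$ cut to the $S|T$ cut.

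The state dependence of the unitaries on the input is harmless here. For a fixed pair of pure inputs, the unitaries are fixed, so $\Lambda_b$ is a concrete LOCC map applied to the given state. Monotonicity of an entanglement monotone (in particular non-increase under LOCC, property (ii)) then yields
\[
{\rm Seri}_{\Phi,E}\left[\rho_{SR_1^{(1)}},\rho_{R_1^{(2)}T}\right]=E\bigl(\Lambda_b(\rho_{SR_1^{(1)}})\bigr)\le E\bigl(\rho_{SR_1^{(1)}}\bigr).
\]
If one prefers the ensemble form, the monotone inequality $E(\rho)\ge\sum_m p_mE(\rho'_m)$ gives the same bound. In the deterministic protocols of Examples 4.1 and 4.2 every outcome yields the same state, so the average equals $E(\Phi(\cdot))$.

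\textbf{Bound by $E(\rho_{R_1^{(2)}T})$, and the main obstacle.} By symmetry I take $A:=S R_1^{(1)} R_1^{(2)}$ and $B:=T$, which gives $E(\cdot)\le E(\rho_{R_1^{(2)}T})$; combining the two bounds gives Eq.~\eqref{ineq-Seri}. The main point to handle carefully is that $E$ is defined on the fixed bipartite cut, while the intermediate states live on multipartite systems. So I must check that $E$, evaluated on the $S|T$ output, is compared with $E$ on the original pair through LOCC maps between bipartite systems, where one party's local space changes dimension.

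I would handle this by noting two facts. First, adding a local pure ancilla is an LOCC (indeed local) operation that does not change $E$, since it is LOCC-reversible by tracing the ancilla out; so $E(\rho_{SR_1^{(1)}})=E(\rho_{SR_1^{(1)}}\otimes|b\rangle\langle b|)$ in the $S|R_1^{(1)}R_1^{(2)}T$ cut. Second, discarding $R_1$ is local. This is exactly where purity of the second source state is used: a mixed ancilla would still be locally appendable, but the pure-state hypothesis keeps the argument within the setting stated.
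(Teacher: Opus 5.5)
Your proposal is correct and follows the paper's proof. It isolates one source state by the cut $S\,|\,R_1T$ (resp.\ $SR_1\,|\,T$), observes that the swapping (measurement at $R_1$, outcome-conditioned local unitaries, and ${\rm Tr}_{R_1}$) is LOCC across that cut, applies monotonicity, and takes the minimum. The only difference is the first step: the paper obtains $E(\rho_{SR_1^{(1)}})=E^{S|R_1T}(\rho_{SR_1^{(1)}}\otimes\rho_{R_1^{(2)}T})$ from the Schmidt decomposition of the product pure state, whereas you treat appending the second source as a local preparation. That step needs only the inequality and does not actually use purity, so the bound also holds for mixed sources and your closing remark about where purity enters is unnecessary.
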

\begin{proof}
    Let $\rho_{SR_1^{(1)}}=|\phi\rangle\langle\phi|,\rho_{R_1^{(2)}T}=|\psi\rangle\langle\psi|$.
    Denote by $|\phi\rangle=\sum_{\alpha}\lambda_{\alpha}|\xi_\alpha\rangle_{S}|\eta_\alpha\rangle_{R_1^{(1)}}$ the Schmidt decomposition~\cite{Nielsen2010} of $|\phi\rangle$.
    Then $|\phi\rangle|\psi\rangle$ has the Schmidt decomposition $|\phi\rangle|\psi\rangle=\sum_{\alpha}\lambda_{\alpha}|\xi_\alpha\rangle_{S}|\eta_\alpha'\rangle_{R_1 T}$
    between two parties $S$ and $R_1 T$, where $|\eta_\alpha'\rangle=|\eta_\alpha\rangle|\psi\rangle$.
    Since every entanglement measure of a bipartite pure state depends solely on its Schmidt coefficients, any two such states with identical Schmidt coefficients will have the same value under every entanglement measure.
    Hence, the entanglement of $$\rho_{SR_1^{(1)}}\otimes\rho_{R_1^{(2)}T}=|\phi\rangle|\psi\rangle\langle\phi|\langle\psi|$$
    between parts $S$ and $R_1 T$ is equal to the entanglement of $\rho_{SR_1^{(1)}}$,
    \begin{eqnarray}\label{eq-1}
        E\left(\rho_{SR_1^{(1)}}\right)
        =E^{S|R_1 T}\left(\rho_{SR_1^{(1)}}\otimes\rho_{R_1^{(2)}T}\right).
    \end{eqnarray}
    Since $E$ is an entanglement monotone, one then obtains that
    \begin{eqnarray}\label{eq-2}
        E^{S|R_1 T}\left(\rho_{SR_1^{(1)}}\otimes\rho_{R_1^{(2)}T}\right)
        &\geq&
        E^{S|R_1 T}\left[\mathcal{S}_{R_1}'\left(\rho_{SR_1^{(1)}}\otimes\rho_{R_1^{(2)}T}\right)\right]\nonumber\\
        &\geq&
        E^{S|T}\left[{\rm Tr}_{R_1 }
    \left[ \mathcal{S}_{R_1 }'\left(\rho_{SR_1^{(1)}}\otimes\rho_{R_1^{(2)}T}\right)\right]\right]\nonumber\\
        &=&E\left[\Phi_{R_1 }^{ST}\left(\rho_{SR_1^{(1)}}\otimes\rho_{R_1^{(2)}T}\right)\right]\nonumber\\
        &=&{\rm Seri}_{\Phi,E}\left[\rho_{SR_1^{(1)}},\rho_{R_1^{(2)}T}\right]
    \end{eqnarray}
    for the LOCC maps $\mathcal{S}_{R_1}'$ [Eq.~\eqref{eq-Lambda2}] and ${\rm Tr}_{R_1 }$.
    From Ineqs.~\eqref{eq-1}~and~\eqref{eq-2}, we thereby have
    \begin{eqnarray*}
        E\left(\rho_{SR_1^{(1)}}\right)
        \geq
        {\rm Seri}_{\Phi,E}\left[\rho_{SR_1^{(1)}},\rho_{R_1^{(2)}T}\right].
    \end{eqnarray*}
    Similarly, we obtain that the inequality
    \begin{eqnarray*}
        E\left(\rho_{R_1^{(2)}T}\right)
        \geq
        {\rm Seri}_{\Phi,E}\left[\rho_{SR_1^{(1)}},\rho_{R_1^{(2)}T}\right]
    \end{eqnarray*}
    holds true.
    Then Ineq.~\eqref{ineq-Seri} holds true for $E$.
\end{proof}

Since the $G$-concurrence $C_G$ and the ratio negativity $\chi_{\mathcal{N}}$ are entanglement monotones~\cite{G_concurrence2005,Ratio_negativity2024}, the series-rule inequality~\eqref{ineq-Seri} holds in both the case $E=C_G$ with $\mathcal{D}_{\mathcal{S}}$ being a set of pure two-qudit states, and the case $E=\chi_{\mathcal{N}}$ with $\mathcal{D}_{\mathcal{S}}$ being a set of TMSVSs for CV systems.

\subsection{Entanglement concentration}\label{sec-concentration}

\begin{figure}[]
    \centering
    \includegraphics[width=280pt]{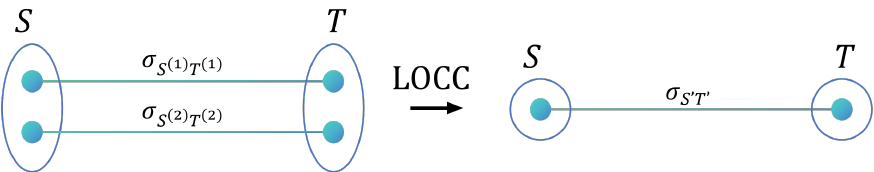}
    \caption{The simplest parallel quantum network.}
    \label{fig-parallel2}
\end{figure}

In physical experiments, when two nodes share weakly entangled states with respect to an entanglement measure $E$, {they can converse these states} into a single, more entangled state using an entanglement concentration protocol {based on LOCC and local selective operations~\cite{concentration1996_1,concentration1996_2}}.
{In quantum information theory, existing models of entanglement concentration can be formulated in the language of operator theory~\cite{concentration1996_1,concentration1996_2,X_state2021,Percolation_ConPT2021,NegPT}. Specifically, for a given set of states $\mathcal{D}_{\mathcal{S}}$, suppose that two distant parties $S$ and $T$ share two entangled states $\sigma_{S^{(1)}T^{(1)}}$ and $\sigma_{S^{(2)}T^{(2)}}$, with $\sigma_{S^{(1)}T^{(1)}}, \sigma_{S^{(2)}T^{(2)}} \in \mathcal{D}_{\mathcal{C}}$. The entanglement concentration process can then be formalized as a map} $\mathcal{C}_{ST}:\mathcal{D}_{\mathcal{C}}\otimes \mathcal{D}_{\mathcal{C}}\to \mathcal{D}_{\mathcal{C}}\otimes \mathcal{D}_{\mathcal{C}}$
{for which there exist {two states} $\rho_{S^{(1)}T^{(1)}},\rho_{S^{(2)}T^{(2)}}\in \mathcal{D}_{\mathcal{C}}$ such that}
\begin{eqnarray}\label{eq-symmetry_C}
    {\mathcal{C}_{ST}\left(\sigma_{S^{(1)}T^{(1)}}\otimes\sigma_{S^{(2)}T^{(2)}}\right)
    =\mathcal{C}_{ST}\left(\sigma_{S^{(2)}T^{(2)}}\otimes\sigma_{S^{(1)}T^{(1)}}\right)=\rho_{S^{(1)}T^{(1)}}\otimes\rho_{S^{(2)}T^{(2)}},}
\end{eqnarray}
{where $S^{(1)}$ and $S^{(2)}$ denote two subsystems of $S$, and $T^{(1)}$ and $T^{(2)}$ denote two subsystems obtained from $T$}. In this process, $\rho_{S^{(1)}T^{(1)}}$ is an entangled state shared by $S^{(1)}$ and $T^{(1)}$ whose degree of entanglement is higher than that of either initial state, while $\rho_{S^{(2)}T^{(2)}}$ is another bipartite state shared by $S^{(2)}$ and $T^{(2)}$ whose entanglement is lower than that of both initial states.
This reduces a \emph{normalized entanglement concentration map}
\begin{eqnarray}\label{eq-concen2}
\Psi_{ST}\left(\sigma_{S^{(1)}T^{(1)}}\otimes\sigma_{S^{(2)}T^{(2)}}\right):
=\Tr_{S^{(2)}T^{(2)}}\left[\mathcal{C}_{ST}\left(\sigma_{S^{(1)}T^{(1)}}\otimes\sigma_{S^{(2)}T^{(2)}}\right)\right],
\end{eqnarray}
that is, $\Psi_{ST}\left(\sigma_{S^{(1)}T^{(1)}}\otimes\sigma_{S^{(2)}T^{(2)}}\right)=\rho_{S^{(1)}T^{(1)}}$.
{More generally, when considering an entanglement concentration protocol $\mathcal{C}_{ST}:\mathcal{D}_{\mathcal{C}}\otimes \mathcal{D}_{\mathcal{C}}\to \mathcal{D}_{\mathcal{C}}\otimes \mathcal{D}_{\mathcal{C}}$, the resulting output state is not a product state as in Eq.~\eqref{eq-symmetry_C}. For such cases, we also define $\Psi_{ST}$ as in Eq.~\eqref{eq-concen2}. The map $\Psi_{ST}$ is then termed a normalized entanglement concentration map if it satisfies the following condition: the entanglement of $\Psi_{ST}\big(\sigma_{S^{(1)}T^{(1)}}\otimes\sigma_{S^{(2)}T^{(2)}}\big)$ is greater than that of either initial state,
$$E\left[\Psi_{ST}\big(\sigma_{S^{(1)}T^{(1)}}\otimes\sigma_{S^{(2)}T^{(2)}}\big)\right]\geq \max\left\{E\left[\sigma_{S^{(1)}T^{(1)}}\right],E\left[\sigma_{S^{(2)}T^{(2)}}\right]\right\}.$$}

Let $E$ be an entanglement measure. A {\it parallel rule of entanglement concentration} ${\rm Para}_{\Psi,E}$ is defined as
\begin{eqnarray*}
    {\rm Para}_{\Psi,E}\left[\sigma_{S^{(1)}T^{(1)}},\sigma_{S^{(2)}T^{(2)}}\right]:=E\left[\Psi_{ST}\left(\sigma_{S^{(1)}T^{(1)}}\otimes\sigma_{S^{(2)}T^{(2)}}\right)\right].
\end{eqnarray*}
A common calculation from Eq.~\eqref{eq-symmetry_C} implies a very key symmetric property of the entanglement concentration map as follows,
\begin{eqnarray*}
    \Psi_{ST}(\sigma_{S^{(1)}T^{(1)}}\otimes\sigma_{S^{(2)}T^{(2)}})=\Psi_{ST}(\sigma_{S^{(2)}T^{(2)}}\otimes\sigma_{S^{(1)}T^{(1)}}).
\end{eqnarray*}
Next, analogous to entanglement swapping, we investigate when the equality
\begin{eqnarray*}
        {\rm Para}_{\Psi,E}\left[\sigma_{S^{(1)}T^{(1)}},\sigma_{S^{(2)}T^{(2)}}\right]=
        {\rm Para}_{\Psi,E}\left[\sigma_{S^{(1)}T^{(1)}}',\sigma_{S^{(2)}T^{(2)}}'\right]
\end{eqnarray*}
holds for some entanglement measure $E$, where $\sigma_{S^{(1)}T^{(1)}}$, $\sigma_{S^{(1)}T^{(1)}}'$, $\sigma_{S^{(2)}T^{(2)}}$, and $\sigma_{S^{(2)}T^{(2)}}' \in \mathcal{D}_{\mathcal{C}}$.

{\begin{definition}
(\textbf{$E$-preserving and $E$-order-preserving concentration})
An entanglement concentration map $\Psi$ is said to be \emph{$E$-preserving} on $\mathcal{D}_{\mathcal{C}}$ if, for any $\sigma_{S^{(1)}T^{(1)}}$, $\sigma_{S^{(1)}T^{(1)}}'$, $\sigma_{S^{(2)}T^{(2)}}$, and $\sigma_{S^{(2)}T^{(2)}}' \in \mathcal{D}_{\mathcal{C}}$, the equation
    \begin{eqnarray*}
        {\rm Para}_{\Psi,E}\left[\sigma_{S^{(1)}T^{(1)}},\sigma_{S^{(2)}T^{(2)}}\right]=
        {\rm Para}_{\Psi,E}\left[\sigma_{S^{(1)}T^{(1)}}',\sigma_{S^{(2)}T^{(2)}}'\right]
    \end{eqnarray*}
    holds whenever   $E\left(\sigma_{S^{(1)}T^{(1)}}\right)=E\left(\sigma_{S^{(1)}T^{(1)}}'\right)$ and $E\left(\sigma_{S^{(2)}T^{(2)}}\right)=E\left(\sigma_{S^{(2)}T^{(2)}}'\right)$.
The map $\Psi$ is called \emph{$E$-order-preserving} if the inequalities
   \begin{eqnarray*}
       E\left(\sigma_{S^{(1)}T^{(1)}}\right) \geq E\left(\sigma_{S^{(1)}T^{(1)}}'\right) \quad \text{and} \quad E\left(\sigma_{S^{(2)}T^{(2)}}\right) \geq E\left(\sigma_{S^{(2)}T^{(2)}}'\right)
   \end{eqnarray*}
   imply
   \begin{eqnarray}\label{eq-phi_order_preserving}
      {\rm Para}_{\Psi,E}\left[\sigma_{S^{(1)}T^{(1)}},\sigma_{S^{(2)}T^{(2)}}\right] \geq
      {\rm Para}_{\Psi,E}\left[\sigma_{S^{(1)}T^{(1)}}',\sigma_{S^{(2)}T^{(2)}}'\right].
   \end{eqnarray}
Furthermore, $\Psi$ is said to be \emph{strictly $E$-order-preserving} if $\Psi$ further satisfies that the equality in
Ineq.~\eqref{eq-phi_order_preserving} holds if and only if
\begin{eqnarray*}
  E\left(\sigma_{S^{(1)}T^{(1)}}\right) = E\left(\sigma_{S^{(1)}T^{(1)}}'\right) \quad \text{and} \quad E\left(\sigma_{S^{(2)}T^{(2)}}\right) = E\left(\sigma_{S^{(2)}T^{(2)}}'\right).
  \end{eqnarray*}
\end{definition}}

{Next we show the entanglement concentration models for pure two-qudit states, pure two-qubit states, and TMSVSs, and their properties in preserving specific entanglement measures (and the corresponding order).}

\textbf{Example 4.3} (\textbf{DV-based QNs}) Consider the DV case where $\mathcal{D}_{\mathcal C}$ is the set of all pure two-qubit states. Let the entanglement measure $E$ be the concurrence $c$, $\sigma_{S^{(1)}T^{(1)}},\sigma_{S^{(2)}T^{(2)}}\in\mathcal{D}_{\mathcal C}$, and $\Psi$ be the entanglement concentration protocol proposed in Ref.~\cite{Percolation_ConPT2021}. Then the output state $\rho_{S^{(1)}T^{(1)}}$ can be written as
\begin{eqnarray*}\label{eq-psi_2qubit}
    \rho_{S^{(1)}T^{(1)}}=\Psi_{ST}(\sigma_{S^{(1)}T^{(1)}}\otimes\sigma_{S^{(2)}T^{(2)}})=|\psi_{\lambda}\rangle\langle\psi_{\lambda}|
\end{eqnarray*}
The corresponding parallel rule
${\rm Para}_{\Psi,c}\left[c(\sigma_{S^{(1)}T^{(1)}}),c(\sigma_{S^{(2)}T^{(2)}})\right]=c(\rho_{S^{(1)}T^{(1)}})$
under the concurrence $c$ satisfies
\begin{eqnarray}\label{eq-para_c}
    \frac{1+\sqrt{1-{\rm Para}_{\Psi,c}^2\left[\sigma_{S^{(1)}T^{(1)}},\sigma_{S^{(2)}T^{(2)}}\right]}}{2}
    =\max\left\{\frac{1}{2},\prod_{k=1}^{2}\frac{1+\sqrt{1-c^2\left(\sigma_{S^{(k)}T^{(k)}}\right)}}{2}\right\}.
\end{eqnarray}
It follows that the entanglement concentration map $\Psi$ is $c$-preserving. Moreover, $\Psi$ is strictly $c$-order-preserving when
\begin{eqnarray*}
\prod_{k=1}^{2}\frac{1+\sqrt{1-c^2(\sigma_{S^{(k)}T^{(k)}})}}{2}
> \frac{1}{2}.
\end{eqnarray*}
However, when
\begin{eqnarray*}
    \prod_{k=1}^{2}\frac{1+\sqrt{1-c^2\left(\sigma_{S^{(k)}T^{(k)}}\right)}}{2}
    \leq\frac{1}{2},
\end{eqnarray*}
the equality ${\rm Para}_{\Psi,c}\left[c(\sigma_{S^{(1)}T^{(1)}}),c(\sigma_{S^{(2)}T^{(2)}})\right]=1$ always holds, which implies that $\Psi$ is $c$-order-preserving but not strictly $c$-order-preserving.

{Consider the more general DV case for $\mathcal{D}_{\mathcal{C}}$ being the set of all pure two-qudit states. Let the nonzero-Schmidt-value vectors of two-qudit states $\sigma_{S^{(1)}T^{(1)}}$ and $\sigma_{S^{(2)}T^{(2)}}$ be} $\vec{\lambda}=(\lambda_0,...,\lambda_{d_1-1})$ and $ \vec{\mu}=(\mu_0,...,\mu_{d_2-1})$, respectively.
By the majorization theory for pure-state conversion (Sec.~\ref{sec-majorizaiton}), one obtains the entanglement concentration $\Psi_{ST}$ that converts $\sigma_{S^{(1)}T^{(1)}}\otimes\sigma_{S^{(2)}T^{(2)}}$ into a pure two-qudit state $|\phi\rangle$ with its number $l$ of nonzero Schmidt values no more than $d_1d_2$.
Specifically, the performance of $\Psi_{ST}$ can be written as
\begin{eqnarray*}
\Psi_{ST}(\sigma_{S^{(1)}T^{(1)}}\otimes\sigma_{S^{(2)}T^{(2)}})=|\phi\rangle
\langle\phi|_{S^{(1)}T^{(1)}}.
\end{eqnarray*}
where $|\phi\rangle$ has Schmidt decomposition
 \begin{eqnarray}\label{eq-qudit_concen}
|\phi\rangle_{S^{(1)}T^{(1)}}=\sum_{k=0}^{l-1}\sqrt{\nu_k}|k_{S^{(1)}}k_{T^{(1)}}\rangle
 \end{eqnarray}
with $\nu_0=\lambda_0\mu_0$, $\nu_n=\max\left\{\max\limits_{a\in {\{\lambda_i\mu_j\}_{{i,j}\geq 0}\backslash\{\nu_1,\nu_2,\cdots, \nu_{n-1}\}}}a,{(1-\sum_{m=1}^{n-1}\nu_m)}/{(l-n+1)}\right\}$.
{However, to date, finding a suitable entanglement measure that characterizes the parallel rule for $\Psi_{ST}$ in terms of the entanglement of the initial states remains an open problem.
\hfill $\square$}

\textbf{Example 4.4} (\textbf{CV-based QNs})
{We now consider the CV case where $\mathcal{D}_{\mathcal{C}}$ is the set of all TMSVSs and the entanglement measure $E$ is chosen as the ratio negativity $\chi_\mathcal{N}$.}
Let $\sigma_{S^{(1)}T^{(1)}}=|\psi^{r_1}\rangle\langle\psi^{r_1}|$ and $\sigma_{S^{(2)}T^{(2)}}=|\psi^{r_2}\rangle\psi^{r_2}|$ with $r_1\geq r_2$, and let map $\Psi$ be the normalized entanglement concentration for TMSVSs proposed in Ref.~{\cite{NegPT}}. Then the output state is a new TMSVS
\begin{eqnarray*}\label{eq-psi_TMSVS}
    \rho_{S^{(1)}T^{(1)}}=\Psi_{ST}(\sigma_{S^{(1)}T^{(1)}}\otimes\sigma_{S^{(2)}T^{(2)}})=|\psi^r\rangle\langle\psi^r|
\end{eqnarray*}
with
\begin{eqnarray*}
    \sinh r=\sinh r_1 \cosh r_2.
\end{eqnarray*}
This concentration process yields a parallel rule ${\rm Para}_{\Psi,\chi_{\mathcal{N}}}\left[\sigma_{S^{(1)}T^{(1)}},\sigma_{S^{(2)}T^{(2)}})\right]=\chi_\mathcal{N}(\rho_{S^{(1)}T^{(1)}})$ with
\begin{eqnarray}\label{eq-para_chi}
    \frac{{\rm Para}_{\Psi,\chi_{\mathcal{N}}}^2\left[\sigma_{S^{(1)}T^{(1)}},\sigma_{S^{(2)}T^{(2)}}\right]}{1-{\rm Para}_{\Psi,\chi_{\mathcal{N}}}^2\left[\sigma_{S^{(1)}T^{(1)}},\sigma_{S^{(2)}T^{(2)}}\right]}
    =\frac{\chi_{\mathcal{N}}^2\left(\sigma_{S^{(1)}T^{(1)}}\right)}{\prod_{k=1}^{2} \left[1-\chi_{\mathcal{N}}^2\left(\sigma_{S^{(k)}T^{(k)}}\right)\right]}.
\end{eqnarray}
It follows that $\Psi$ is both $\chi_{\mathcal{N}}$-preserving and strictly $\chi_{\mathcal{N}}$-order-preserving for all TMSVSs.
\hfill $\square$

{However, whether these properties hold for general entangled states and entanglement measures remains an open question.}

\section{Operator model of entanglement percolation processing}

{In this section, we establish an operator-theoretic framework for entanglement percolation. Within this framework, entanglement percolation in a series-parallel QN is formulated as a sequential composition of super-operators. We begin our discussion with the basic building blocks: series and parallel networks.}

\subsection{Series networks}\label{sec-series_QNs}

We now investigate entanglement swapping protocols on a series QN [Fig.~\ref{fig-seriN}]. {These protocols are implemented by successively applying the removal-node entanglement swapping map $\Phi$ (see Eq.~\eqref{eq-swapping_def}) to eliminate intermediate nodes one after another. By examining different orders of node removal, we find that distinct operation sequences generally lead to different final states.}

\begin{figure}[h!]
    \centering
    \includegraphics[width=250pt]{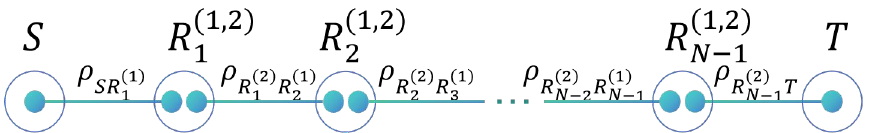}
    \caption{Series QN of $N$ states.}
    \label{fig-seriN}
\end{figure}

Consider the series QN consisting of $N+1$ nodes---$S$, $R_1 $, $R_2 $,$\cdots$, $R_{N-1} $, $T$---as illustrated in Fig.~\ref{fig-seriN}. {These nodes are connected by $N$ states in $\mathcal{D}_{\mathcal{S}}$, denoted from left to right as $\rho_{SR_1^{(1)}}, \rho_{R_1^{(2)}R_2^{(1)}}, \dots, \rho_{R_{N-1}^{(2)}T}$.}
We now introduce an operator-theoretic framework for entanglement \yaqi{distribution} based on entanglement swapping maps along this series topology.
To distribute entanglement across the QN, one must perform entanglement swapping as shown in Eq.~\eqref{eq-SwapPhi} for every intermediate node.
Denoted the entire QN state by
\begin{eqnarray*}\label{eq-rho_net}
    \rho_{{\rm net},N}=\rho_{SR_1^{(1)}}\otimes\rho_{R_1^{(2)}R_2^{(1)}}\otimes\cdots\otimes\rho_{R_{N-1}^{(2)}T}.
\end{eqnarray*}
Let the vector
\begin{eqnarray}\label{eq-order}
    \pi_{N-1}=(\pi_{N-1}(1),\pi_{N-1}(2),...,\pi_{N-1}(N-1))
\end{eqnarray}
be a permutation of $(1, 2,..., N-1)$ specifying the sequence in which the swapping maps occur at nodes $R_{\pi_{N-1}(1)}$, $R_{\pi_{N-1}(2)}$,...,$R_{\pi_{N-1}(N-1)}$.
Then the overall entanglement swapping protocol is given by a composition
\begin{eqnarray}\label{eq-1D_swapPhi}
    \Phi^{\pi_{N-1}}\left(\rho_{{\rm net},N}\right)
    =\Phi_{R_{\pi_{N-1}{(N-1)}}}^{\pi_{N-1}}\circ\Phi_{R_{\pi_{N-1}{(N-2)}}}^{\pi_{N-1}}\circ\cdots\circ\Phi_{R_{\pi_{N-1}{(1)}}}^{\pi_{N-1}}\left(\rho_{{\rm net},N}\right),
\end{eqnarray}
where, the notation $\Phi_{R_{\pi_{N-1}{(n)}}}^{\pi_{N-1}}$ means that, at the $n$-th step of executing the entanglement swapping following the sequence $\pi_{N-1}$, we perform the entanglement swapping [Eq.~\eqref{eq-SwapPhi}] only to the two states adjacent to the intermediate node $R_{\pi_{N-1}{(n)}}$, leaving all other states unchanged.
If local unitary operations follow each swapping, then each swapping map is replaced by the operator form given in Eq.~\eqref{eq_swap2_unitary}.

In summary, the ultimate goal of entanglement percolation is to establish entanglement between two distant nodes.  We quantify this using the entanglement measure $E$, where a value greater than zero certifies the presence of entanglement.
For this series QN, we define the {\it series rule} reduced from the swapping protocol $\Phi^{\pi_{N-1}}$ as
\begin{eqnarray*}\label{eq-seri_rule_N}
    {\rm Seri}_{\Phi,E}^{\pi_{N-1}}\left[\rho_{SR_1^{(1)}},\rho_{R_1^{(2)}R_2^{(1)}},\cdots,\rho_{R_{N-1}^{(2)}T}\right]:
    =E\left[\Phi^{\pi_{N-1}}\left(\rho_{{\rm net},N}\right)\right].
\end{eqnarray*}
{As shown above, different choices of the permutation vector $\pi_{N-1}$ result in different operation sequences for performing entanglement swapping on the intermediate nodes. Consequently, we denote $\pi_{N-1}$ as the \textbf{swapping order vector} associated with the $N$ initial states.}

This raises the question of identifying the operation sequence that maximizes the final entanglement, referred to as the optimal order. This challenging problem will be addressed in the next section. Before proceeding, we present several examples of entanglement percolation in series QNs to illustrate the concepts discussed above.

{\bf Example 5.1} (\textbf{DV-based QNs})
Let $\mathcal{D}_{\mathcal{S}}$ be the set of all pure two-qubit entangled states, and $\Phi$ be the entanglement swapping proposed in Ref.~\cite{Percolation_ConPT2021}.
Consider the series QN is the DV-based QN of $N$ two-qubit states $\rho_{SR_1^{(1)}}=|\psi_{\lambda_{1}}\rangle\langle\psi_{\lambda_{1}}|$, $\rho_{R_n^{(2)}R_{n+1}^{(1)}}=|\psi_{\lambda_{n+1}}\rangle\langle\psi_{\lambda_{n+1}}|$ ($n=1,2,...,N-2$), and $\rho_{R_{N-1}^{(2)}T}=|\psi_{\lambda_{N}}\rangle\langle\psi_{\lambda_{N}}|$.
Let the order of entanglement swapping be given by an arbitrary permutation $\pi_{N-1}$ of $(1,2,...,N-1)$.
By the series rule ${\rm Seri}_c$ [Eq.~\eqref{eq-seri_c}], the final state shared between $S$ and $T$ is
\begin{eqnarray*}
    \rho_{ST}=\Phi^{\pi_{N-1}}\left(\rho_{{\rm net},N}\right)=|\psi_{\lambda}\rangle\langle\psi_{\lambda}|
\end{eqnarray*}
with
\begin{eqnarray*}\label{eq-lambda_1D}
    \lambda=\frac{1+\sqrt{1-\prod_{n=1}^{N}c_n^2}}{2}
\end{eqnarray*}
where $c_n=2\sqrt{\lambda_n(1-\lambda_n)}$ denotes the concurrence of the $n$-th source state $\rho_n$.
That is the final concurrence $c=2\sqrt{\lambda(1-\lambda)}$ satisfies the series rule
\begin{eqnarray}\label{eq-swap_1D_concurrence}
    {\rm Seri}_{\Phi,c}\left[\rho_{SR_1^{(1)}},\rho_{R_1^{(2)}R_2^{(1)}},\cdots,\rho_{R_{N-1}^{(2)}T}\right]=c=\prod_{n=1}^N c_n.
\end{eqnarray}

More generally, we show the case where $\mathcal{D}_{\mathcal{S}}$ is the set of all pure two-qudit entangled states, and $\Phi$ is the entanglement swapping on two-qudit states.
Consider the series QN is the DV-based QN of $N$ two-qudit states $\rho_{SR_1^{(1)}}=|\phi_{1}\rangle\langle\phi_{1}|$, $\rho_{R_n^{(2)}R_{n+1}^{(1)}}=|\phi_{n+1}\rangle\langle\phi_{n+1}|$ ($n=1,2,...,N-2$), and $\rho_{R_{N-1}^{(2)}T}=|\phi_{N}\rangle\langle\phi_{N}|$.
From the series rule ${\rm{Seri}}_{C_G}$ [Eq.~\eqref{eq-seri_CG}], performing the $\Phi$-based entanglement swapping in any order $\pi_{N-1}$, the final two-qudit state $\rho_{ST}$ yields the series rule
\begin{eqnarray*}
    {\rm Seri}_{\Phi,C_G}^{\pi_{N-1}}\left[\rho_{SR_1^{(1)}},\rho_{R_1^{(2)}R_2^{(1)}},\cdots,\rho_{R_{N-1}^{(2)}T}\right]:
    =\prod_n C_G\left(|\phi_{n}\rangle\langle\phi_{n}|\right),
\end{eqnarray*}
independently of the choice of $\pi_{N-1}$.
\hfill $\square$

{\bf Example 5.2} (\textbf{CV-based QNs})
Here let $\mathcal{D}_{\mathcal{S}}$ be the set of all TMSVSs, and $\Phi$ be the entanglement swapping proposed in {Ref.~{\cite{NegPT}}}.
Consider a $N$-state series QN of TMSVSs $\rho_{SR_1^{(1)}}=|\psi^{r_{1}}\rangle\langle\psi^{r_{1}}|$, $\rho_{R_n^{(2)}R_{n+1}^{(1)}}=|\psi^{r_{n-1}}\rangle\langle\psi^{r_{n-1}}|$ ($n=1,2,...,N-2$), and $\rho_{R_{N-1}^{(2)}T}=|\psi^{r_{N}}\rangle\langle\psi^{r_{N}}|$.
Under the series rule ${\rm Seri}_{\chi_{\mathcal{N}}}$ [Eq.~\eqref{eq-seri_chi}], the output TMSVS is
\begin{eqnarray*}
    \rho_{ST}=\Phi^{\pi_{N-1}}\left(\rho_{{\rm net},N}\right)=|\psi^r\rangle\langle\psi^r|
\end{eqnarray*}
whose ratio negativity $\chi={\rm Seri}_{\Phi,\chi_{\mathcal{N}}}^{\pi_{N-1}}\left[\rho_{SR_1^{(1)}},\rho_{R_1^{(2)}R_2^{(1)}},\cdots,\rho_{R_{N-1}^{(2)}T}\right]$ satisfies the parallel rule
\begin{eqnarray}\label{eq-seri_chi_N}
   {\rm Seri}_{\Phi,\chi_{\mathcal{N}}}^{\pi_{N-1}}\left[\rho_{SR_1^{(1)}},\rho_{R_1^{(2)}R_2^{(1)}},\cdots,\rho_{R_{N-1}^{(2)}T}\right]=\prod_{n=1}^{N}\chi_{\mathcal{N}}\left(|\psi^{r_{n}}\rangle\langle\psi^{r_{n}}|\right)=\prod_{n=1}^{N}\tanh r_{n}
\end{eqnarray}
for any permutation $\pi_{N-1}$.
Thus the effective squeezing $r$ is unaffected by the permutation $\pi_{N-1}$.
\hfill $\square$

{In both examples above, the output state does not depend on the swapping order. Hence, any order is optimal.}

\subsection{Parallel networks}\label{sec-parallel_QN}

When two distant nodes $S$ and $T$ share multiple states, a global entanglement concentration map over all states can be achieved by iteratively applying the normalized concentration map $\Psi$.

\begin{figure}
    \centering
    \includegraphics[width=150pt]{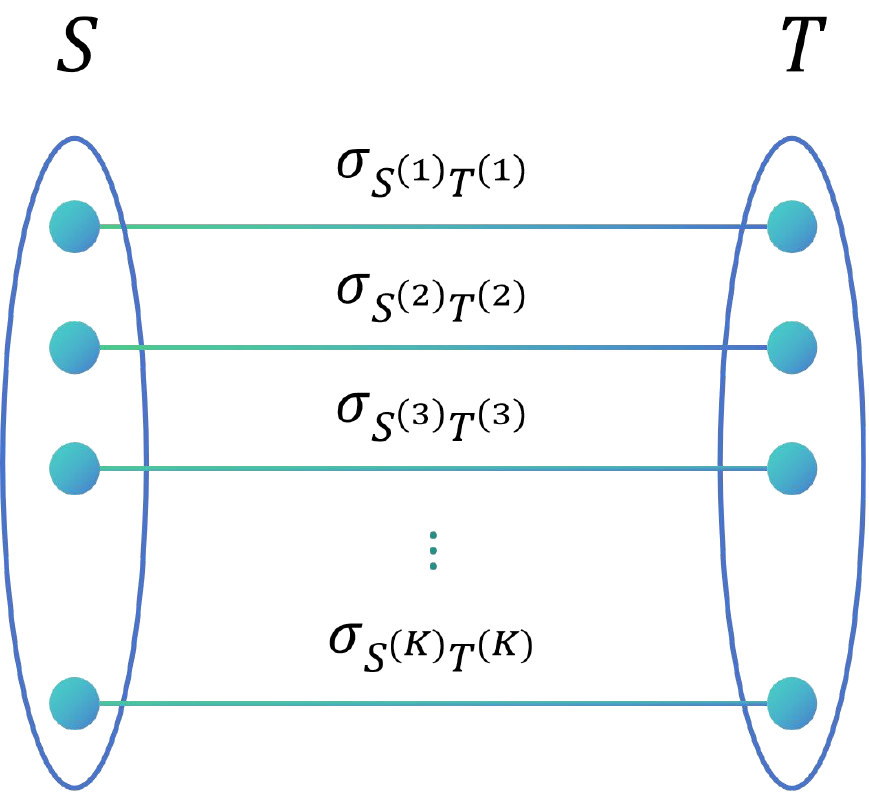}
    \caption{Parallel QN with $K$ source states.}
    \label{fig-parallel_K}
\end{figure}

Consider a parallel QN as shown in Fig.~\ref{fig-parallel_K} where two adjacent nodes $S$ and $T$ share $K$ states $\sigma_{S^{(1)}T^{(1)}}$, $\sigma_{S^{(2)}T^{(2)}}$,..., $\sigma_{S^{(K)}T^{(K)}}$ in $\mathcal{D}_{\mathcal{C}}$ where $K\geq 3$.
Entanglement concentration on these states can be achieved through sequential pairwise operations.
Here, to facilitate later discussions on how the order of concentrated operations affects the percolation outcome, we arrange the states in a order denoted by the permutation (or order)
\begin{eqnarray*}\label{eq-concentration_order}
    \pi_K=\left(\pi_K(1),\pi_K(2), ...,\pi_K(K)\right)
\end{eqnarray*}
of $(1,2,...,K)$, yielding the concentration sequence $\sigma_{S^{(\pi_K(1))}T^{(\pi_K(1))}}$, $\sigma_{S^{(\pi_K(2))}T^{(\pi_K(2))}}$,..., $\sigma_{S^{(\pi_K(K))}T^{(\pi_K(K))}}$ of states.
By the symmetry shown in Eq.~\eqref{eq-symmetry_C}, we assume $\pi_K(1)<\pi_K(2)$.
Let
\begin{eqnarray*}\label{eq-omega}
    \Omega_{k}=\left\{\pi_K(1),\pi_K(2),...,\pi_K(k+1)\right\}
\end{eqnarray*}
with $k=1,...,K-1$, then we construct the concentration protocol corresponding to the order $\pi_K$ where the $k$-th step ($k=1,2,...,K-1$) is designed as:

(1) If $k=1$, concentrate $\sigma_{S^{\left(\pi_K(1)\right)}T^{\left(\pi_K(1)\right)}}$ and $\sigma_{S^{\left(\pi_K(2)\right)}T^{\left(\pi_K(2)\right)}}$ to get output state
\begin{eqnarray*}
    \sigma_{ST}^{\Omega_1}=\Psi_{ST}\left(\sigma_{S^{\left(\pi_K(1)\right)}T^{\left(\pi_K(1)\right)}}\otimes \sigma_{S^{\left(\pi_K(2)\right)}T^{\left(\pi_K(2)\right)}}\right);
\end{eqnarray*}

(2) If $k>1$, concentrate the state $\sigma_{ST}^{\Omega_{k-1}}$ and $\sigma_{\pi_K(k+1)}$ to get output state
\begin{eqnarray*}
    \sigma_{ST}^{\Omega_{k}}
    =\Psi_{ST}\left(\sigma_{ST}^{\Omega_{k-1}}\otimes \sigma_{S^{\left(\pi_K(k+1)\right)}T^{\left(\pi_K(k+1)\right)}}\right).
\end{eqnarray*}
Then we obtain the final output state
\begin{eqnarray*}
    \sigma_{ST}'
    &=&\sigma_{ST}^{\Omega_{K-1}}\nonumber\\
    &=&\Psi_{ST}\left(\sigma_{ST}^{\Omega_{K-2}}\otimes \sigma_{S^{\left(\pi_K(K)\right)}T^{\left(\pi_K(K)\right)}}\right)\nonumber\\
    &=&\Psi_{ST}\left[\Psi_{ST}\left(\sigma_{ST}^{\Omega_{K-3}}\otimes \sigma_{S^{\left({\pi_K(K-1)}\right)}T^{\left(\pi_K(k+1)\right)}}\right)\otimes \sigma_{\pi_K(K)}\right]\nonumber\\
    &=&\cdots\nonumber\\
    &=&\Psi_{ST}\Big\{\Psi_{ST}\Big[\cdots\Psi_{ST}\Big(\Psi_{ST}(\sigma_{S^{\left({\pi_K(1)}\right)}T^{\left(\pi_K(1)\right)}}\otimes\sigma_{S^{\left(\pi_K(2)\right)}T^{\left(\pi_K(2)\right)}})\nonumber\\
    &&\ \ \ \ \ \ \ \ \ \ \ \ \ \ \ \ \ \ \ \ \otimes\sigma_{S^{\left(\pi_K(3)\right)}T^{\left(\pi_K(3)\right)}}\Big)\Big]\otimes\sigma_{S^{\left(\pi_K(K)\right)}T^{\left(\pi_K(K)\right)}}\Big\}.
\end{eqnarray*}
For simplicity, we denote the operation above as
\begin{eqnarray*}                             \sigma_{ST}'=\Psi_{ST}^{\pi_K}\left(\sigma_{\rm{net},K}\right)
\end{eqnarray*}
where $\sigma_{\rm{net},K}=\otimes_{k=1}^{K}\sigma_{S^{(k)}T^{(k)}}$ and the global entanglement concentration defined as
\begin{eqnarray}\label{eq-concentration_1D}
    \Psi_{ST}^{\pi_K}\left(\sigma_{\rm{net},K}\right):=\Psi_{ST}^{\Omega_{K-1}}\circ\Psi_{ST}^{\Omega_{K-2}}\circ\cdots\circ\Psi_{ST}^{\Omega_{1}}\left(\sigma_{\rm{net},K}\right).
\end{eqnarray}
Here, $\Psi_{ST}^{\Omega_{k}}=\Psi_{ST}\otimes I$ represents the operation performing the concentration map $\Psi_{ST}$ on states $\sigma_{ST}^{\Omega_{k-1}}$ and $\sigma_{S^{(\pi_K(k+1))}T^{(\pi_K(k+1))}}$, while leaving remaining states unchanged.
We call $\pi_K$ the \textbf{concentration order} of the $K$ initial states.
For this parallel QN, we define the \textbf{parallel rule} reduced from $\Psi_{ST}^{\pi_{K}}$ as
\begin{eqnarray*}\label{eq-parallel_rule_K}
    {\rm Para}_{\Psi,E}^{\pi_{K}}\left[\sigma_{S^{(1)}T^{(1)}},\sigma_{S^{(2)}T^{(2)}},\cdots,\sigma_{S^{(K)}T^{(K)}}\right]:
    =E\left[\Psi_{ST}^{\pi_K}\left(\sigma_{\rm{net},K}\right)\right].
\end{eqnarray*}
The question of optimal operational order in parallel networks will also be addressed in the next section. Next we show two examples for the entanglement percolation in parallel QNs.

{\bf Example 5.3} (\textbf{DV-based QNs})
Now consider the parallel DV-based QN of $K$ two-qubit states $\sigma_{S^{(k)}T^{(k)}}=|\psi_{\mu_k}\rangle\langle\psi_{\mu_k}|$ for $k=1,2,...,K$.
By the parallel rule ${\rm Para}_{\Psi,c}$ [Eq.~\eqref{eq-para_c}], the state resulting from concentrating these $K$ states is
\begin{eqnarray*}
    \sigma_{ST}=\Psi_{ST}^{\pi_K}\left(\sigma_{\rm{net},K}\right)=|\psi_{\mu}\rangle\langle\psi_{\mu}|
\end{eqnarray*}
with
\begin{eqnarray*}
    \mu=\max\left\{\frac{1}{2},\prod_{k=1}^{K}\mu_{k}\right\}
\end{eqnarray*}
for any permutation $\pi_K$ of $(1,2,...,K-1)$,
implying the final concurrence $c$ satisfies the parallel rule
\begin{eqnarray}\label{eq-para_c_K}
      {\rm Para}_{\Psi,E}^{\pi_{K}}\left[\sigma_{S^{(1)}T^{(1)}},\sigma_{S^{(2)}T^{(2)}},\cdots,\sigma_{S^{(K)}T^{(K)}}\right]=c=2\sqrt{\mu(1-\mu)}.
\end{eqnarray}
Again, the final concentrated state is independent of the concentration order $\pi_K$.
\hfill $\square$

{\bf Example 5.4} (\textbf{CV-based QNs})
Let $\mathcal{D}_{\mathcal{C}}$ be the set of all TMSVSs and $\sigma_{{S}^{(k)}T^{(k)}}=|\psi^{r_k}\rangle\langle \psi^{r_k}|\in\mathcal{D}_{\mathcal{C}}$ with $r_1\geq r_k$ for all $k=2,3,...,K$. Furthermore, let $\Psi$ be the entanglement concentration described in Example 4.4.
{In contrast to the order-independent entanglement swapping for the series CV-based QN shown in Sec.~\ref{sec-concentration}, the entanglement concentration $\Psi_{ST}^{\pi_K}$ for TMSVSs yields output states that depend explicitly on the order $\pi_K$.}
{When $\pi_K(1)=1$, under the ratio negativity $\chi_{\mathcal{N}}$, the parallel rule ${\rm Para}_{\Psi,\chi_{\mathcal{N}}}^{\pi_K}$ for TMSVSs, $\sigma_{{S}^{(1)}T^{(1)}},\sigma_{{S}^{(2)}T^{(2)}},...,\sigma_{{S}^{(K)}T^{(K)}}$, attains the maximum~\cite{NegPT}
$${\rm Para}_{\Psi,\chi_{\mathcal{N}}}^{\pi_K}\left[\sigma_{S^{(1)}T^{(1)}},\sigma_{S^{(2)}T^{(2)}},\cdots,\sigma_{S^{(K)}T^{(K)}}\right]=\sinh r_1\prod_{k=2}^{K}\cosh r_k.$$}
\hfill $\square$

\subsection{Series-parallel networks}\label{sec-s_p_operator}

{The entanglement percolation process in series-parallel QNs is a process of network reduction via iteratively reducing the network through entanglement swapping and concentration operations, ultimately establishing end-to-end entanglement.}

Given a series-parallel QN with source states in a set $\mathcal{D}$ of bipartite states, both removal-node entanglement swapping $\Phi$ [Eq.~\eqref{eq-SwapPhi}] and the normalized entanglement concentration $\Psi$ [Eq.~\eqref{eq-concen2}] are maps from $\mathcal{D}\otimes \mathcal{D}$ to $\mathcal{D}$.
{{In analogy with the simplification operations} (${\rm I}$) and (${\rm II}$) of networks shown in Sec.~\ref{sec-network}, the series-parallel QN can be iteratively {simplified through successive applications of} $\Phi$ and $\Psi$.
This {simplification} process enables entanglement percolation---the \yaqi{distribution} of entanglement across the network. Specifically, the quantum process proceeds {via the following iterative steps}.
}

{(${\rm I^1}$) \emph{Simplification for two parallel sources.} When there exist two states $\sigma_{R_{j_1}^{(l_{j_1,j_2})}R_{j_2}^{(l_{j_2,j_1})}}$ and $\sigma_{R_{j_1}^{(l_{j_1,j_2}+1)}R_{j_2}^{(l_{j_2,j_1}+1)}}$ are shared by two adjacent nodes $R_{j_1}$ and $R_{j_2}$, perform the entanglement concentration $\Psi$ to concentrate these state into a new state
	\begin{eqnarray*}
		\sigma_{{R_{j_1}R_{j_2}}}=\Psi_{R_{j_1}R_{j_2}}
		\left(\sigma_{R_{j_1}^{(l_{j_1,j_2})}R_{j_2}^{(l_{j_2,j_1})}}\otimes\sigma_{R_{j_1}^{(l_{j_1,j_2}+1)}R_{j_2}^{(l_{j_2,j_1}+1)}}\right)
	\end{eqnarray*}
	shared between $R_{j_1}$ and $R_{j_2}$.
	Repeat this until there are no such states can be further concentrated, then proceed to the next step.
}

{(${\rm {II}^1}$) \emph{Simplification for two series states.} Consider three nodes denoted by $R_{n_1}$, $R_{n_2}$ and $R_{n_3}$ that lie consecutively on a self-avoiding path between $S$ and $T$.
	Suppose that $R_{n_1}$ and $R_{n_2}$ share only a state $\rho_{R_{n_1}^{(l_{n_1,n_2})}R_{n_2}^{(l_{n_2,n_1})}}$, $R_{n_2}$ and $R_{n_3}$ share only a state $\rho_{R_{n_2}^{(l_{n_2,n_3})}R_{n_3}^{(l_{n_3,n_2})}}$, and $R_1$ and $R_3$ are all adjacent nodes of $R_{2}$.
	Then perform entanglement swapping $\Phi$ to convert the two states into a single state
	\begin{eqnarray*}
		\rho_{R_{n_1}^{(l_{n_1,n_2})}R_{n_3}^{(l_{n_3,n_2})}} =\Phi_{R_{n_2}}^{R_{n_1}R_{n_3}}\left(\rho_{R_{n_1}^{(l_{n_1,n_2})}R_{n_2}^{(1)}}\otimes\rho_{R_{n_2}^{(2)}R_{n_3}^{(l_{n_3,n_2})}}\right)
	\end{eqnarray*}
	between $R_{n_1}$ and $R_{n_3}$. Repeat this until no such series states can be further swapped, then return to Step (${\rm I^\prime}$)} if necessary.

\begin{figure}
    \centering
    \includegraphics[width=120pt]{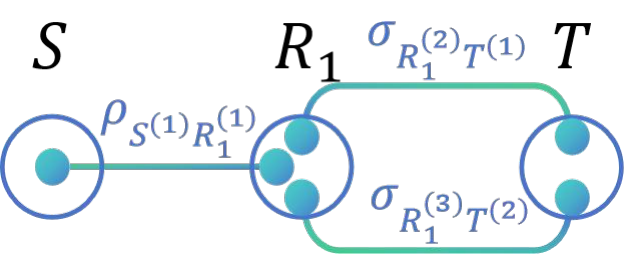}
    \caption{A series-parallel QN.}
    \label{fig-s_p_1}
\end{figure}

For any series-parallel QN, let $\rho_{\rm net}$ be the network state. The entanglement \yaqi{distribution} over the QN, via the aforementioned simplification process, can be written as a map given by
\begin{eqnarray*}
    \Lambda(\rho_{\rm net})=\Lambda_m\circ\Lambda_{m-1}\circ...\circ\Lambda_1(\rho_{\rm net}),\quad m\in\{1,2,...,+\infty\}
\end{eqnarray*}
where each ${\rm \Lambda}_i$ is either an entanglement swapping or an entanglement concentration map.
For instance, in the QN shown in Fig.~\ref{fig-s_p_1} where $S$ and $R_1$ share a state $\rho_{SR_1^{(1)}}$, $R_1$ and $T$ share two states $\sigma_{R_1^{(2)}T^{(1)}}$ and $\sigma_{R_1^{(3)}T^{(2)}}$, the entanglement \yaqi{distribution} process constitutes a composite map $\Lambda$ integrating an entanglement concentration map and an entanglement swapping map, denoted as
\begin{eqnarray*}
    \Lambda=\Lambda_2\circ\Lambda_1
\end{eqnarray*}
where $\Lambda_1=I_{S}\otimes\Psi_{R_1T}$ denotes the application of entanglement concentration to the states shared between $R_1$ and $T$, and $\Lambda_2=\Phi_{R_1}^{ST}$ represents the application of entanglement swapping to the states $\rho_{SR_1^{(1)}}$ and $\Psi_{R_1T}\Big(\sigma_{R_1^{(2)}T^{(1)}} \otimes \sigma_{R_1^{(3)}T^{(2)}}\Big)$.

{In non-series-parallel configurations, entanglement \yaqi{distribution} can be accomplished by combining entanglement swapping and concentration operations with the star-mesh transform technique~\cite{SM_transform1970}, which is itself based entirely on series-parallel rules.}

\section{Optimal degeneration order in entanglement percolation}

Intrinsically, entanglement percolation is a process of QN reduction. Although we have already formulated an operator-theoretic description of the percolation process in the preceding section, a crucial question remains: the order in which operations are executed can influence the final entanglement of the percolation outcome, and an unfavorable sequence {causes inefficient entanglement percolation}. Consequently, in this section we investigate the optimal operation order in entanglement percolation.

\subsection{Series networks}\label{sec-series_order}

For both DV-based series QNs of two-qudit states and CV-based series QNs of TMSVSs shown in Sec.~\ref{sec-series_QNs}, the final entanglement remains invariant under all swapping orders.
However, for general series QNs, different swapping orders produce distinct entanglement values.
Now we define the optimal swapping order [Eq.~\eqref{eq-order}]:
Given the $N$-state series QN with QN state $\rho_{{\rm net},N}=\rho_{SR_1^{(1)}}\otimes\rho_{R_1^{(2)}R_2^{(1)}}\otimes\cdots\otimes\rho_{R_{N-1}^{(2)}T}$ and the entanglement measure $E$ quantifying the entanglement, if the final entanglement swapping order vector $\pi_{N-1}$ yields maximum among all possible swapping orders---i.e., $E\left[\Phi^{\pi_{N-1}}(\rho_{{\rm net},N})\right] \geq E\left[\Phi^{\pi_{N-1}'}(\rho_{{\rm net},N})\right]$ holds for any other swapping order $\pi_{N-1}'$---we call $\pi_{N-1}$ the {\bf $\Phi$-based optimal swapping order}. Similarly, the map $\Phi^{\pi_{N-1}}$ is called the {\bf $\Phi$-based optimal entanglement swapping protocol}.

The order preservation of entanglement measures under $\Phi$ is crucial for establishing the sufficient condition for order-independent entanglement in series QNs. We now formally define the order preservation property for entanglement swapping protocols.

In the following, we investigate how the operation order of entanglement-swapping protocols in series QNs influences the output of entanglement percolation.
The most fundamental and arguably the first question regarding how operational order affects entanglement percolation is: under what conditions does exchanging the sequence of two swapping operations leave the percolation outcome unchanged? For this, we consider the series QN with three source states $\rho_{SR_1^{(1)}}$, $\rho_{R_1^{(2)}R_2^{(1)}}$, $\rho_{R_2^{(2)}T}$.
We say that $\Phi$ is {\bf $E$-based order-independent} on the state set $\mathcal{D}_{\mathcal{S}}$ if, for all $\rho_{SR_1^{(1)}},\rho_{R_1^{(2)}R_2^{(1)}},\rho_{R_2^{(2)}T}\in\mathcal{D}_{\mathcal{S}}$, {the equation}
    \begin{eqnarray}\label{eq-phi_commutative}
      E\left[\Phi^{(2,1)}\left(\rho_{\rm{net},3}\right)\right]
      =E\left[\Phi^{(1,2)}\left(\rho_{\rm{net},3}\right)\right]
    \end{eqnarray}
{holds true}, where $\rho_{\rm{net},3}=\rho_{SR_1^{(1)}}\otimes\rho_{R_1^{(2)}R_2^{(1)}}\otimes\rho_{R_2^{(2)}T}$.

Recall the definition from Eq. \eqref{eq-Seri_2} that the equation in (\ref{eq-phi_commutative}) is equivalent to
\begin{eqnarray*}
	{\rm Seri}_{\Phi,E}^{(2,1)} \left[\rho_{SR_1^{(1)}},\rho_{R_1^{(2)}R_2^{(1)}},\rho_{R_{N-1}^{(2)}T}\right]
	={\rm Seri}_{\Phi,E}^{(1,2)} \left[\rho_{SR_1^{(1)}},\rho_{R_1^{(2)}R_2^{(1)}},\rho_{R_{N-1}^{(2)}T}\right].
\end{eqnarray*}
We now generalize this property to all series QNs.
We say that $\Phi$ has {\bf generalized $E$-based order independence} on $\mathcal{D}_{\mathcal{S}}$ if, for any finite-size series QN  where all source states belong to $\mathcal{D}_{\mathcal{S}}$, all swapping orders [Eq.~\eqref{eq-1D_swapPhi}] yield identical final entanglement. However, this generalized property cannot be directly derived from the $E$-based order independence for $\Phi$.
We now bridge the gap with the following theorem.
\begin{theorem}\label{thm-commutative}
	If $\Phi$ is both $E$-based order-independent and $E$-preserving, then $\Phi$ has generalized $E$-based order independence.
\end{theorem}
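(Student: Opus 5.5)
Our plan is to show that every swapping order $\pi_{N-1}$ gives the same value of $E\left[\Phi^{\pi_{N-1}}(\rho_{{\rm net},N})\right]$. We argue by strong induction on the number $N$ of source states. The key step is to pass from the \emph{states} produced in intermediate steps to their $E$-values only. Here we read the $E$-preserving hypothesis for $\Phi$ in the sense of Eqs.~\eqref{eq-preserving}--\eqref{eq-preserving2}: the value ${\rm Seri}_{\Phi,E}[\rho,\rho']$ depends only on $E(\rho)$ and $E(\rho')$. So there is a function $g$ with ${\rm Seri}_{\Phi,E}[\rho,\rho']=g(E(\rho),E(\rho'))$ on $\mathcal{D}_{\mathcal{S}}$.

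With this in hand, we first show by induction on the number of swaps that, for any order, the $E$-value of every intermediate link is given by iterating $g$ along a bracketing of the corresponding contiguous block of original links. Each swap merges two adjacent links, and those links are themselves the results of merging contiguous blocks. Hence the final value $E\left[\Phi^{\pi_{N-1}}(\rho_{{\rm net},N})\right]$ equals $g$ evaluated on a full binary bracketing of the sequence $(E(\rho_1),\dots,E(\rho_N))$. Different orders give different bracketings, and some bracketings arise from several orders.

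Next we translate the $E$-based order independence, Eq.~\eqref{eq-phi_commutative}, into associativity of $g$. For three states $\rho_1,\rho_2,\rho_3$, the orders $(1,2)$ and $(2,1)$ produce the bracketings $g(g(e_1,e_2),e_3)$ and $g(e_1,g(e_2,e_3))$, with $e_i=E(\rho_i)$. Order independence says these agree. The subtlety is that we need this identity whenever the arguments are $E$-values of intermediate states rather than of original sources. This is fine because $\Phi$ maps into $\mathcal{D}_{\mathcal{S}}$, so every intermediate state lies in $\mathcal{D}_{\mathcal{S}}$ and the three-state hypothesis applies to it directly.

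Finally, we use the standard fact that an associative binary operation gives the same value on all bracketings of a sequence. The inductive step goes as follows. Any bracketing has the form $g(A,B)$, where $A$ and $B$ bracket a prefix and a suffix. By the induction hypothesis, $A$ and $B$ may be taken right-normed. We then rotate using associativity, $g(g(x,y),z)=g(x,g(y,z))$, until the whole expression is right-normed, which corresponds to the order $(N-1,N-2,\dots,1)$. Hence every order yields the same final entanglement.

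The main obstacle is the first step: the paper's $E$-preserving notion for $\Phi$ is stated only informally, in Eqs.~\eqref{eq-preserving}--\eqref{eq-preserving2}. We will fix it as "${\rm Seri}_{\Phi,E}$ factors through $(E,E)$" and check that this well-definedness survives composition, so that the bracketing description is legitimate.
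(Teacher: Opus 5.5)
Your proposal is correct and follows essentially the same route as the paper's appendix proof. Both argue by induction on $N$ and split at the last swap. Both use $E$-preservation to replace each sub-block's output by one with the same $E$-value, and both use three-state order independence (your associativity of $g$) to rotate the bracketing into a canonical form; the paper normalizes to the order $(1,2,\dots,N-1)$ rather than your $(N-1,\dots,1)$. Your function $g$ on $E(\mathcal{D}_{\mathcal{S}})$ makes explicit the use of $E$-preservation that the paper leaves implicit when it invokes the induction hypothesis on the two sub-blocks.
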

Details for proving this result are shown in Appendix~\ref{SI-commutative}.

In other words, when $\Phi$ exhibits generalized $E$-based order independence on $\mathcal{D}_{\mathcal{S}}$,
the final entanglement between $S$ and $T$---quantified by $E$---is independent of the swapping order $\pi_{N-1}$.

\subsection{Parallel networks}\label{sec-parallel_order}

As demonstrated for parallel QNs in Sec.~\ref{sec-parallel_QN}, Eq.~\eqref{eq-para_c_K} (Example 5.3) shows that the DV case achieves final concentrated entanglement independent of the concentration order [Eq.~\eqref{eq-concentration_order}]. Conversely, for the CV case, the non-symmetric parallel rule shown in Eq.~\eqref{eq-para_chi} results in final established entanglement that depends on the concentration order.

To illustrate, consider a parallel QN where nodes $S$ and $T$ share three TMSVSs $\sigma_{S^{(1)}T^{(1)}}=|\psi^{r_1}\rangle\langle\psi^{r_1}|$, $\sigma_{S^{(12)}T^{(2)}}=|\psi^{r_2}\rangle\langle\psi^{r_2}|$, and $\sigma_{S^{(3)}T^{(3)}}=|\psi^{r_3}\rangle\langle\psi^{r_3}|$ with squeezing parameters satisfying $r_1>r_2>r_3>0$ and $\sinh r_1<\sinh r_2\cosh r_3$. Defining $\chi_k = \tanh r_k$ for $k=1,2,3$, we observe: the concentration order $\pi_3 = (1,2,3)$ yields final ratio negativity
    \begin{eqnarray*}
        \chi = \frac{\chi_1}{\sqrt{\chi_1^2 + \prod_{k=1}^{3}(1-\chi_k^2)}},
    \end{eqnarray*}
while another concentration order $\pi_3' = \{2,3,1\}$ produces a different value:
    \begin{eqnarray*}
        \chi' = \frac{\chi_2}{\sqrt{\chi_2^2 + \prod_{k=1}^{3}(1-\chi_k^2)}} < \chi,
    \end{eqnarray*}
confirming the operation order dependence.

Given the parallel QNs where $S$ and $T$ share $K$ states $\sigma_{S^{(k)}T^{(k)}}\in\mathcal{D}_{\mathcal{C}}$ ($k=1,2,...,K$) [Fig.~\ref{fig-parallel_K}] and its normalized concentration $\Psi$. Let $E$ be an entanglement measure which can quantify the entanglement of each state in $\mathcal{D}_{\mathcal{C}}$. We now introduce the optimal concentration order for this QN. If, among all possible concentration orders, $\pi_{K}$ yields an output state $\Psi^{\pi_{K}}\left(\otimes_{k=1}^{K}\sigma_{S^{(k)}T^{(k)}}\right)$ [Eq.~\eqref{eq-concentration_1D}] whose entanglement quantified by $E$ is maximal—--that is, for any other concentration order $\pi_{K}'$ one has  $E\left[\Psi^{\pi_{K}}\left(\otimes_{k=1}^{K}\sigma_{S^{(k)}T^{(k)}}\right)\right] \geq E\left[\Psi^{\pi_{K}'}\left(\otimes_{k=1}^{K}\sigma_{S^{(k)}T^{(k)}}\right)\right]$, then $\pi_{K}$ is called the \textbf{optimal concentration order} for these $K$ states. The corresponding map $\Psi^{\pi_{K}}$ is then referred to as the \textbf{$\Psi$-based optimal entanglement concentration} for the $K$ states.

Consider CV case where $\sigma_{S^{(k)}T^{(k)}}=|\psi^{r_k}\rangle\langle\psi^{r_k}|$ ($k=1,2,...,K$) satisfying $r_1\geq r_k$ for all $k$. As shown in {Ref.~{\cite{NegPT}}}, the maximal output entanglement is achieved by any concentration order $\pi_K$ for $\pi_K(1)=1$.
Correspondingly, the optimal entanglement concentration map $\Psi_{ST}^{\rm{opt}}$ of these $K$ states results in
\begin{eqnarray*}
	\Psi_{ST}^{\rm{opt}}\left(\otimes_{k=1}^{K}\sigma_{S^{(k)}T^{(k)}}\right)
	=\Psi_{ST}^{\pi_K}\left(\otimes_{k=1}^{K}\sigma_{S^{(k)}T^{(k)}}\right)=|\psi^{r}\rangle\langle\psi^{r}|
\end{eqnarray*}
with
\begin{eqnarray*}
	\sinh r=\sinh r_1 \prod_{k=2}^{K}\cosh r_k,
\end{eqnarray*}
relying on the order $\pi_K$.
It follows that
\begin{eqnarray}\label{eq-para_chi_K}
	{\rm Para}_{\Psi,\chi_{\mathcal{N}}}^{\pi_K}\left[\sigma_{S^{(1)}T^{(1)}},...,\sigma_{S^{(K)}T^{(K)}}\right] =\frac{\chi_{\mathcal{N}}\left(\sigma_{S^{(1)}T^{(1)}}\right)}{\sqrt{\chi_{\mathcal{N}}^2\left(\sigma_{S^{(1)}T^{(1)}}\right)+\prod_{k}{\left[1-\chi_{\mathcal{N}}^2\left(\sigma_{S^{(k)}T^{(k)}}\right)\right]}}}.
\end{eqnarray}

In the following, we introduce the order properties of the entanglement concentration protocol in parallel QNs.

We call $\Psi$ is {\bf $E$-based order-independent} for $\mathcal{D}_{\mathcal{C}}$ if, for all $\sigma_{S^{(1)}T^{(1)}}$, $\sigma_{S^{(2)}T^{(2)}}$, $\sigma_{S^{(3)}T^{(3)}}\in\mathcal{D}_{\mathcal{C}}$, all concentration orders yield the same final entanglement---that is, the equality
\begin{eqnarray*}
    {\rm Para}_{\Psi,E}^{\pi_3}\left[\sigma_{S^{(1)}T^{(1)}},\sigma_{S^{(2)}T^{(2)}},\sigma_{S^{(3)}T^{(3)}}\right]
    ={\rm Para}_{\Psi,E}^{\pi_3'}\left[\sigma_{S^{(1)}T^{(1)}},\sigma_{S^{(2)}T^{(2)}},\sigma_{S^{(3)}T^{(3)}}\right]\nonumber\\
\end{eqnarray*}
holds for any two different concentration orders $\pi_3$ and $\pi_3'$.
We call $\Phi$ has the {\bf generalized $E$-based order independence} if, for any finite-size parallel QN, all concentration orders produce the same final entanglement quantified by $E$.

Similar to Theorem~\ref{thm-commutative} for swapping order, we obtain the following result:

\begin{theorem}
    If $\Psi$ is both $E$-based order-independent and $E$-preserving on $\mathcal{D}_{\mathcal{C}}$, then $\Psi$ has generalized $E$-based order independence.
\end{theorem}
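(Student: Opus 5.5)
Following the strategy suggested for Theorem~\ref{thm-commutative}, I would argue by induction on the number $K$ of parallel states, comparing every concentration order $\pi_K$ to a fixed reference order (say the identity $(1,2,\dots,K)$). The point is to show that $E\left[\Psi_{ST}^{\pi_K}(\sigma_{\rm net,K})\right]$ depends only on the multiset $\{\sigma_{S^{(k)}T^{(k)}}\}_k$. For $K=2$ this is the symmetry $\Psi_{ST}(\sigma\otimes\sigma')=\Psi_{ST}(\sigma'\otimes\sigma)$ coming from Eq.~\eqref{eq-symmetry_C}. For $K=3$ it is exactly the $E$-based order-independence hypothesis.

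For the inductive step I fix $K\ge 4$ and assume the claim for all parallel networks with fewer than $K$ states. Write a concentration order as the sequential composition \eqref{eq-concentration_1D}. The final output is
\begin{eqnarray*}
\Psi_{ST}\left(\sigma_{ST}^{\Omega_{K-2}}\otimes\sigma_{S^{(\pi_K(K))}T^{(\pi_K(K))}}\right),
\end{eqnarray*}
where $\sigma_{ST}^{\Omega_{K-2}}$ is the concentrate of the first $K-1$ states in order $\pi_K$. By the induction hypothesis, $E(\sigma_{ST}^{\Omega_{K-2}})$ does not depend on how those $K-1$ states were ordered. The output state itself may depend on the order, but $E$-preservation says ${\rm Para}_{\Psi,E}$ depends only on the $E$-values of its two inputs. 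Hence the final $E$-value depends only on which state is placed last. It therefore remains to show invariance under changing the last element, i.e. that $\pi_K$ and any order ending in a different index $j$ give the same value. It suffices to treat swapping the last two elements $a=\pi_K(K-1)$ and $b=\pi_K(K)$, since any choice of last index can be reached this way after reordering the first $K-1$ (which is harmless by the preceding remark).

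For this swap, let $\tau=\sigma_{ST}^{\Omega_{K-3}}$ denote the concentrate of the first $K-2$ states. The two orders then produce $\Psi(\Psi(\tau\otimes\sigma_a)\otimes\sigma_b)$ and $\Psi(\Psi(\tau\otimes\sigma_b)\otimes\sigma_a)$. These are two concentration orders of the three states $\tau,\sigma_a,\sigma_b$, namely $(\tau,a,b)$ and $(\tau,b,a)$. So the $E$-based order-independence for three states gives equal $E$-values, provided $\tau\in\mathcal{D}_{\mathcal{C}}$. That proviso holds because $\Psi$ maps $\mathcal{D}_{\mathcal{C}}\otimes\mathcal{D}_{\mathcal{C}}$ into $\mathcal{D}_{\mathcal{C}}$. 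The case $K-2=1$ (when $\tau$ is a source state) is covered in the same way.

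The main obstacle I expect is exactly this bookkeeping: the states themselves, not just their $E$-values, may depend on the order, so every identification must pass through $E$-preservation. The cleanest way to handle it is to prove a slightly stronger auxiliary statement by induction: for inputs in $\mathcal{D}_{\mathcal{C}}$, the value $E[\Psi^{\pi_K}(\cdot)]$ is a symmetric function of the $E$-values $E(\sigma_{S^{(k)}T^{(k)}})$. Both the inductive hypothesis and the three-state hypothesis can then be invoked after replacing intermediate states by any states in $\mathcal{D}_{\mathcal{C}}$ with the same $E$-value, and the convention $\pi_K(1)<\pi_K(2)$ is absorbed by the two-state symmetry.
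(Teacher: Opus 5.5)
Your argument is correct and takes essentially the paper's route. The paper proves this result only by pointing to Theorem~\ref{thm-commutative}, whose appendix proof runs the same induction: canonicalize the earlier sub-order using the induction hypothesis plus $E$-preservation, then use three-state order independence to change the final operation. In the parallel case the left-nested structure of $\Psi^{\pi_K}$ lets your single transposition of the last two states replace the paper's step-by-step shifting of the split point, so your version is a faithful and somewhat cleaner transcription.
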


While we have provided an example of an entanglement-measure-based order-dependent entanglement concentration map, we have so far been unable to identify a corresponding case for entanglement swapping map.

\subsection{Series-parallel networks}\label{sec-spQN}

As shown in Sec.~\ref{sec-s_p_operator}, the entire entanglement percolation procedure in series-parallel QNs is a continual process of network reduction that repeatedly employs entanglement swapping and concentration operations.
Notably, our previous analysis of operation orders in series and parallel QNs suggests that each simplification step is intimately tied to order independence of operations.
Specifically, when there exist entanglement concentrations that is not generalized $E$-order-independent, the manipulation of parallel states in simplification Step (${\rm I^1}$) requires consideration of order, as illustrated by the three states shared between nodes $R_1$ and $R_2$ in Fig.~\ref{fig-s_p}. Similarly, the same applies to Step (${\rm {II}^1}$). Consequently, Steps (${\rm I^1}$) and (${\rm {II}^1}$) fundamentally involve operations on sub-networks.

We will next partition the entire series-parallel network into modules based on the topological structure of its sub-networks, where each module itself is a series [Fig.~\ref{fig-s}] or parallel network [Fig.~\ref{fig-p}]. Before introducing this modular decomposition, we first define several canonical types of series-parallel sub-network structures.

Given a two-terminal network $\mathcal{N}$ with terminals $S$ and $T$ that is free of topological redundancy, suppose that $\mathcal{N}$ is neither a series network [Fig.~\ref{fig-s}] nor a parallel network [Fig.~\ref{fig-p}]. We now formally define the \emph{maximal series sub-networks} and the \emph{maximal parallel sub-networks} of $\mathcal{N}$.
\begin{definition}
    In the network $\mathcal{N}$, a sub-network $\mathcal{N}'$ is called a \textbf{series sub-network} if (i) $\mathcal{N}'$ itself is a series network of length greater than $1$, and (ii) every internal node (i.e., nodes other than the terminals) of $\mathcal{N}'$ is distinct from $S$ and $T$ and has degree $2$. Furthermore, if $\mathcal{N}'$ is not properly contained in any other series sub-network of $\mathcal{N}$, it is called a \textbf{maximal series sub-network}.
\end{definition}
\begin{definition}
    A sub-network $\mathcal{N}''$ of $\mathcal{N}$ is called a \textbf{parallel sub-network} if $\mathcal{N}''$ itself is a parallel network comprising at least two links. Furthermore, $\mathcal{N}''$ is called a \textbf{maximal parallel sub-network} if it is not properly contained in any other parallel sub-network of $\mathcal{N}$.
\end{definition}

For instance, in the QN of Fig.~\ref{fig-s_p}, there are two maximal series sub-networks $(R_3,R_4,R_5)$ and $(S,R_6,R_7)$, and two maximal parallel sub-networks: one formed by nodes $R_1$ and $R_2$ with all three states between them, and another by $R_7$ and $R_8$ with every states between them. By contrast, for examples, neither $(R_1,S,R_6)$ nor $(R_2,R_3,R_5,T)$ qualifies as a series sub-network of this QN; $(S,R_1)$ is neither a series sub-network nor a parallel sub-network of this QN; Nodes $R_1$ and $R_2$ and any two of the three states between them yield a parallel sub-network that is not maximal of this QN.

Under aforementioned concepts, for any series-parallel QN $\mathcal{Q}$ with two terminals $S$ and $T$, the network can be uniquely decomposed into a collection of maximal series sub-networks and maximal parallel sub-networks.
This implies that the two simplification steps, (${\rm I^1}$) and (${\rm {II}^1}$), then correspond respectively to applying entanglement concentration protocols on the maximal parallel sub-networks and entanglement swapping protocols on the maximal series sub-networks.

Based on the foregoing discussion, the simplification operations (${\rm I^1}$) and (${\rm {II}^1}$) for series-parallel QNs can be optimized as follows:

(${\rm {I}^2}$) {\bf{Parallel Simplification.}} For each maximal parallel sub-network, apply the corresponding optimal entanglement concentration protocol. Once no maximal parallel sub-network remains, then proceed to (${\rm {II}^2}$).

(${\rm {II}^2}$) {\bf{Series Simplification.}} For each maximal series sub-network, apply the corresponding optimal entanglement swapping protocol. Once no maximal series sub-network remains, then return to (${\rm {I}^2}$).

This raises the question of optimization sequence: which simplification procedure should be prioritized---parallel simplification ($\rm I^2$) or series simplification ($\rm II^2$)---to achieve higher final entanglement? We find that if both $\Phi$ and $\Psi$ exhibit generalized $E$-based order independence on $\mathcal{D}_{\mathcal{S}}$, then the sequence of executing (${\rm {I}^2}$) and (${\rm {II}^2}$) does not affect the final result.
However, if either $\Phi$ or $\Psi$ does not commute on $\mathcal{D}_{\mathcal{S}}$, then the execution order of (${\rm {I}^2}$) and (${\rm {II}^2}$) may impact the final outcome (detailed in Appendix~\ref{SI-simplification_sequence}).

These findings show that when entanglement swapping or entanglement concentration is not $E$-based order-independent for the state set $\mathcal{D}_{\mathcal{S}}$, constructing series–parallel QNs with states in $\mathcal{D}_{\mathcal{S}}$ requires not only selecting the operation order within each maximal serial or parallel sub-network, but also determining the sequence of the simplification operations (${\rm {I}^2}$) and (${\rm {II}^2}$) between sub-networks, which adds an extra layer of complexity to QN design.

\section{{Analysis of entanglement percolation in series-parallel networks}}

{Although QNs in practice are of finite scale, due to the arbitrariness of scale, when discussing entanglement percolation properties, they are often considered within the context of infinite-scale networks. For example, what patterns does entanglement percolation exhibit as the network scale gradually increases? What strategies for expanding an existing network can maintain high-quality entanglement percolation effects? Such questions can only be thoroughly studied within the framework of infinite networks. Therefore, in this section, we employ the operator-theoretic framework developed earlier to construct a qualitative framework for analyzing entanglement percolation, thereby deriving precise mathematical conditions that determine whether percolation succeeds in a scalable QN.}

To ensure effective entanglement \yaqi{distribution} (i.e., entanglement percolation) in QNs under asymptotic expansion, we focus our analysis on series QNs rather than parallel configurations. This prioritization stems from two fundamental considerations:
\begin{enumerate}[label=(\arabic*),itemsep=0pt,topsep=0pt,parsep=0pt]
  \item Resource Perspective: Entanglement swapping consumes entanglement sources while entanglement concentration enhances them. Consequently, series networks may exhibit complete entanglement degradation during infinite expansion, whereas parallel configurations inherently preserve entanglement.
  \item Network Design Constraint: Direct connections between infinitely distant nodes are physically unrealizable. Instead, QNs interconnecting such nodes deploy multiple parallelized infinite-length series paths. Successful entanglement percolation therefore requires independent percolation along every constituent series path.
\end{enumerate}

\subsection{Series networks}\label{sec-series}

In practical implementations, QNs are constructed by connecting local sub-networks end-to-end to form a chain structure.
If each sub-network is reduced--via entanglement \yaqi{distribution}--to two terminals sharing a single source state, the entire network simplifies to a series QN [Fig.~\ref{fig-seriN}]. We now investigate the properties under which such an infinite chain exhibits percolation, i.e., a nonzero long-range entanglement established across arbitrarily many nodes.

Let $\Phi:\mathcal{D}_{\mathcal{S}}\otimes\mathcal{D}_{\mathcal{S}}\to\mathcal{D}_{\mathcal{S}}$ be a removal-node entanglement swapping, $\Psi$ a normalized entanglement concentration, and $E$ a suitable entanglement measure to quantify states in $\mathcal{D}_{\mathcal{S}}$ which takes values in $[0,1]$, such as the $G$-concurrence $C_G$ for DV-based states and the ratio negativity $\chi_{\mathcal{N}}$ for CV-based states.
Consider a series QN of $N$ source states $\rho_{SR_1^{(1)}}$, $\rho_{R_1^{(2)}R_2^{(1)}}$,..., $\rho_{R_{N-1}^{(2)}T}\in\mathcal{D}_{\mathcal{S}}$ whose individual entanglement are $E_1$, $E_2$,..., $E_{N}$, respectively.
Denote by $\Phi^{\pi_{N-1}^{\rm{opt}}}$ the $\Phi$-based optimal entanglement swapping map on these $N$ states, then the entanglement of the final output state---termed the sponge-crossing entanglement---is
\begin{eqnarray*}
    E_{\rm{SC}}^{(N)}=E\left[\Phi^{\pi_{N-1}^{\rm{opt}}}\left(\rho_{SR_1^{(1)}}\otimes\rho_{R_1^{(2)}R_2^{(1)}}\otimes\cdots\otimes\rho_{R_{N-1}^{(2)}T}\right)\right].
\end{eqnarray*}
{In the limit $N\to\infty$, if the the sponge-crossing entanglement $E_{\rm{SC}}^{(N)}$ satisfies
\begin{eqnarray*}\label{eq-inferior}
    \varliminf _{N\to\infty}E_{\rm{SC}}^{(N)}:=\lim_{N\to\infty}\inf_{n\geq N}E_{\rm{SC}}^{(n)}>0,
\end{eqnarray*}
we say the infinite-size series QN exhibits entanglement percolation.
Conversely, if,
\begin{eqnarray*}\label{eq-superior}
    \overline{\lim\limits_{N\to\infty}}E_{\rm{SC}}^{(N)}:=\lim_{N\to\infty}\sup_{n\geq N}E_{\rm{SC}}^{(n)}=0,
\end{eqnarray*}
we say the infinite-size series QN fails to exhibit entanglement percolation.
Moreover, if $E_{\rm{SC}}^{(\infty)}=1$, i.e. the final state achieves maximal entanglement, we term this saturated percolation.}

If $\Phi$ is $E$-based order-independent, we have the following theorem:
\begin{theorem}
    If $\Phi$ is $E$-based order-independent and $E$ is an entanglement monotone, then $E_{\rm SC}^{(N)}$ is monotonically decreasing with respect to $N$ and the limit $\lim_{N\to\infty}E_{\rm SC}^{(N)}$ exists (i.e., $\varliminf _{N\to\infty}E_{\rm{SC}}^{(N)}=\overline\lim_{N\to\infty}E_{\rm{SC}}^{(N)}=\lim_{N\to\infty}E_{\rm SC}^{(N)}\geq 0$).
\end{theorem}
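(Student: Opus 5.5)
The plan is to compare the $(N+1)$-state chain with the $N$-state chain. The tool is the earlier theorem giving the series-rule inequality ${\rm Seri}_{\Phi,E}[\rho,\rho']\le\min\{E(\rho),E(\rho')\}$ for entanglement monotones. Here "monotonically decreasing'' is read as non-increasing, i.e. $E_{\rm SC}^{(N+1)}\le E_{\rm SC}^{(N)}$. Once this is shown, the sequence $\{E_{\rm SC}^{(N)}\}_N$ is non-increasing and bounded below by $0$, since $E$ takes values in $[0,1]$. The monotone convergence theorem then gives existence of $\lim_{N\to\infty}E_{\rm SC}^{(N)}\ge 0$, and the lower and upper limits coincide with it.

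The main step is the inequality $E_{\rm SC}^{(N+1)}\le E_{\rm SC}^{(N)}$. Consider the $(N+1)$-state chain $\rho_{SR_1^{(1)}},\dots,\rho_{R_{N}^{(2)}T}$ and its optimal swapping order $\pi_N^{\rm opt}$. I would use $E$-based order independence to reorder the swaps so that the swap at the last intermediate node $R_N$ is performed last. Since $\Phi$ is a map $\mathcal{D}_{\mathcal{S}}\otimes\mathcal{D}_{\mathcal{S}}\to\mathcal{D}_{\mathcal{S}}$, the first $N-1$ swaps act only on the first $N$ states. They produce a state $\omega_{SR_N}\in\mathcal{D}_{\mathcal{S}}$, and the last swap then combines $\omega_{SR_N}$ with $\rho_{R_N^{(2)}T}$. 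By the series-rule inequality,
\begin{eqnarray*}
E\left[\Phi_{R_N}^{ST}\left(\omega_{SR_N}\otimes\rho_{R_N^{(2)}T}\right)\right]\le E(\omega_{SR_N}).
\end{eqnarray*}
The quantity $E(\omega_{SR_N})$ is the final entanglement of some swapping order on the first $N$ states, so it is at most $E_{\rm SC}^{(N)}$. I will implicitly use that the first $N$ states of the $(N+1)$-chain form an admissible $N$-chain. Strictly, $E_{\rm SC}^{(N)}$ depends on the chosen sequence of sources, so I would fix an infinite sequence of sources and let $E_{\rm SC}^{(N)}$ refer to its first $N$ members.

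The obstacle is justifying the reordering. The order-independence hypothesis is stated only for three-state chains, and the paper's extension to arbitrary chains (Theorem~\ref{thm-commutative}) also needs $E$-preservation, which is not assumed here.

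I would try to avoid reordering altogether. Take $\pi_N^{\rm opt}$ as given and look at its final swap, performed at some node $R_j$. It combines a state $\alpha$ obtained from the first $j$ sources and a state $\beta$ obtained from the remaining $N+1-j$ sources.
\begin{itemize}
\item If $j\le N-1$, then $\beta$ involves at least two sources including the last one. I would argue recursively inside $\beta$, using the local three-state order independence to move the swap that involves $\rho_{R_N^{(2)}T}$ to the outermost position of that sub-chain.
\item If $j=N$, the argument above applies directly.
\end{itemize}
Alternatively, and more simply, apply the series-rule inequality at the final swap: $E_{\rm SC}^{(N+1)}\le\min\{E(\alpha),E(\beta)\}$. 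Then bound $E(\alpha)$ by monotonicity of $E$ under the LOCC that discards the tail chain, reusing the Schmidt-decomposition argument from the proof of inequality~\eqref{ineq-Seri} to compare with an $N$-chain.

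If neither route closes the gap cleanly, I would invoke the three-state order independence inductively as the paper evidently intends. That is, I would treat it as allowing the final swap to be taken at $R_N$ without changing $E$, and then conclude as in the main step.
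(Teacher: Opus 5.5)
Your main step is the paper's proof. The paper writes $E_{\rm SC}^{(N+1)}$ as a final swap at $R_N$ applied to the output of an order $\pi_{N-1}$ on the first $N$ sources, bounds it with the series-rule inequality~\eqref{ineq-Seri}, and concludes by monotone convergence. The reordering you worry about is a real weak point, and the paper does not address it. It asserts the decomposition for \emph{any} $\pi_{N-1}$ and uses $E(\rho^{\pi_{N-1}}_{SR_N^{(1)}})=E_{\rm SC}^{(N)}$. That is generalized order independence, which by Theorem~\ref{thm-commutative} would need $E$-preservation, and that is not assumed.

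Your first bullet closes the gap using only the stated hypotheses, provided you rotate at the outermost level rather than recursing inside $\beta$. Suppose the optimal output is $\Phi(\alpha\otimes\beta)$ with split at $R_j$ and $j<N$. Then $\beta=\Phi(\beta_1\otimes\beta_2)$ with split at some $R_k$, $k>j$. Three-state order independence applied to $\alpha,\beta_1,\beta_2\in\mathcal{D}_{\mathcal{S}}$ gives
\[
E\left[\Phi\left(\alpha\otimes\Phi(\beta_1\otimes\beta_2)\right)\right]=E\left[\Phi\left(\Phi(\alpha\otimes\beta_1)\otimes\beta_2\right)\right],
\]
and the right side is the output of another swapping order, now with final split at $R_k$. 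The split index strictly increases, so after finitely many rotations the final swap is at $R_N$. Then $E_{\rm SC}^{(N+1)}\le E(\omega)\le E_{\rm SC}^{(N)}$ exactly as in your main step. The last inequality uses only the maximality built into the definition of $E_{\rm SC}^{(N)}$, not the paper's equality over all orders, so this is slightly more careful than the paper.

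Recursing inside $\beta$ would not work without extra hypotheses. It replaces $\beta$ by a state with the same $E$ but not the same state, and pushing that through the outer $\Phi(\alpha\otimes\cdot)$ is precisely $E$-preservation.

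Your second alternative also fails. The best available bound on $E(\alpha)$ is $E_{\rm SC}^{(j)}$, which for $j<N$ is weaker than the bound $E_{\rm SC}^{(N)}$ you need. Meanwhile $E(\beta)$ involves the $(N+1)$-th source, so neither quantity compares to the first-$N$ chain.

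A minor point you share with the paper: \eqref{ineq-Seri} is proved only for pure inputs. If $\mathcal{D}_{\mathcal{S}}$ contains mixed states, note instead that adjoining $\rho_{R_N^{(2)}T}$ is a local operation on the $R_NT$ side of the cut $S|R_NT$. LOCC monotonicity then already gives $E_{\rm SC}^{(N+1)}\le E(\omega)$.
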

\begin{proof}
    For the $(N+1)$-state series QN with $N>2$, if $\Phi$ is $E$-based order-independent, then we have
    \begin{eqnarray*}
        E_{\rm SC}^{(N+1)}
        &=&E\left[\Phi_{R_N^{(1,2)}}^{ST}\left(\rho_{SR_N^{(1)}}^{\pi_{N-1}}\otimes\rho_{R_{N}^{(2)}T}\right)\right]
    \end{eqnarray*}
    where
    \begin{eqnarray*}
        \rho_{SR_N^{(1)}}^{\pi_{N-1}}=\Phi^{\pi_{N-1}}\left(\rho_{SR_1^{(1)}}\otimes\rho_{R_1^{(2)}R_2^{(1)}}\otimes\cdots\otimes\rho_{R_{N-1}^{(2)}R_N^{(1)}}\right)
    \end{eqnarray*}
    with any swapping order $\pi_{N-1}$.
    Then the series-rule inequality \eqref{ineq-Seri} yields that
    \begin{eqnarray*}
        E_{\rm SC}^{(N+1)}\leq E\left(\rho_{SR_N^{(1)}}^{\pi_{N-1}}\right)=E_{\rm SC}^{(N)},
    \end{eqnarray*}
    implying the sequence $\left\{E_{\rm SC}^{(N)}\right\}_N$ is de creasing.
\end{proof}

Consequently, we obtain the following fact:
\begin{corollary}
    For all $N$, if the final state is maximally entangled,
    \begin{eqnarray*}
      E_{\rm{SC}}^{(N)}=1,
    \end{eqnarray*}
    then every initial state is maximally entangled, i.e.,
    \begin{eqnarray*}
        E\left(\rho_{SR_1^{(1)}}\right)=E\left(\rho_{R_1^{(2)}R_2^{(1)}}\right)=\cdots=E\left(\rho_{R_{N-1}^{(2)}T}\right)=1.
    \end{eqnarray*}
\end{corollary}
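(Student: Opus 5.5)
The corollary follows by iterating the series-rule inequality \eqref{ineq-Seri} along the chain, in the same way as the proof of the preceding theorem. I take the standing assumptions of that theorem: $E$ is an entanglement monotone with values in $[0,1]$, the sources are pure states in $\mathcal{D}_{\mathcal{S}}$, and $\Phi$ is $E$-based order-independent. Fix $N$ and suppose $E_{\rm SC}^{(N)}=1$. The goal is to show that every single source has $E$-value $1$, and the strategy is to bound $E_{\rm SC}^{(N)}$ from above by the entanglement of an arbitrary chosen source.

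Fix an index $n\in\{1,\dots,N\}$ and let $\rho_n$ denote the $n$-th source state. I would choose a swapping order in which the last swap is performed at a node adjacent to $\rho_n$. Before that last swap, all sources on one side of $\rho_n$ have already been merged into a single pure state by the earlier swaps, and $\rho_n$ itself is still untouched. Concretely, I would take $R_{n-1}$ or $R_n$ as the final node.

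First I would check that the intermediate states are pure. This is needed because \eqref{ineq-Seri} is stated only for pure inputs. The swapping maps in Examples 4.1 and 4.2 are explicitly pure-to-pure, and in general $\Phi:\mathcal{D}_{\mathcal{S}}\otimes\mathcal{D}_{\mathcal{S}}\to\mathcal{D}_{\mathcal{S}}$, so this holds by the setting.

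If the order actually used to define $E_{\rm SC}^{(N)}$ is different, I would replace it by the chosen order. Order independence, propagated to $N$ states exactly as in the proof of the preceding theorem (or via Theorem~\ref{thm-commutative} when $E$-preservation is available), shows that every order gives the same final entanglement. Alternatively, the argument below can be applied directly to the optimal order $\pi^{\rm opt}_{N-1}$, tracking the last swap that involves the descendant of $\rho_n$.

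The main step is an induction on the swaps. Following the lineage of $\rho_n$, each swap applied to a state containing it turns a state $\tau$ into $\Phi(\tau\otimes\tau')$ or $\Phi(\tau'\otimes\tau)$. By \eqref{ineq-Seri}, this does not increase $E$. Starting from $\tau=\rho_n$, this gives
\[
1=E_{\rm SC}^{(N)}\le E(\rho_n)\le 1 .
\]
Hence $E(\rho_n)=1$, and since $n$ was arbitrary the claim follows.

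The main obstacle I expect is justifying that the min-bound \eqref{ineq-Seri} can be applied at every step of the lineage. The theorem is proved for a two-link chain with fresh pure states, so I need each intermediate state to again be a pure bipartite state between the two current endpoints. Once that is secured, which the definition of $\mathcal{D}_{\mathcal{S}}$ gives, the rest is a straightforward monotone chain.
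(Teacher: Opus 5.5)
Your proof is correct. Its core step is to follow the lineage of a fixed source $\rho_n$ through the swaps and apply \eqref{ineq-Seri} each time that lineage is one of the two inputs. This gives $E_{\rm SC}^{(N)}\le\min_n E(\rho_n)$ for every swapping order, in particular directly for $\pi^{\rm opt}_{N-1}$.

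The paper gives no separate proof; the corollary is read off from the preceding theorem. That proof uses $E$-based order independence to rewrite the optimal protocol as a sweep whose last step swaps the merged first $N-1$ links with the last link. Inequality \eqref{ineq-Seri} then gives $E_{\rm SC}^{(N)}\le\min\{E_{\rm SC}^{(N-1)},E_N\}$. Iterating over prefixes yields $E_n\ge E_{\rm SC}^{(n)}\ge E_{\rm SC}^{(N)}=1$ for all $n\le N$. The monotonicity $E_{\rm SC}^{(N+1)}\le E_{\rm SC}^{(N)}$ alone only controls the first links; the other half of the minimum is needed at each stage.

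For the left-to-right order your lineage induction is exactly this chain, so the two arguments agree there. What your version buys is independence from order independence altogether. This matters because the paper defines $E$-based order independence only for three links. Passing to $N$ links needs the extra $E$-preserving hypothesis of Theorem~\ref{thm-commutative}, and the paper's monotonicity proof glosses over that step. What the paper's route buys is monotonicity in $N$, which is needed for the limit statement but not for this corollary.

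There are two clean-ups.
\begin{itemize}
\item Your opening plan, a final swap at a node adjacent to $\rho_n$ while $\rho_n$ is still untouched, fails for $2\le n\le N-1$. Before the last swap the two surviving states span $S$--$R_j$ and $R_j$--$T$, so an untouched $\rho_n$ is possible only for $n\in\{1,N\}$. Drop that paragraph and the reordering step; the lineage induction applied to $\pi^{\rm opt}_{N-1}$ makes both unnecessary.
\item The map $\Phi:\mathcal{D}_{\mathcal{S}}\otimes\mathcal{D}_{\mathcal{S}}\to\mathcal{D}_{\mathcal{S}}$ guarantees pure intermediate states only if $\mathcal{D}_{\mathcal{S}}$ consists of pure states. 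This is the same implicit assumption the paper makes. In fact purity enters the proof of \eqref{ineq-Seri} only through $E(\rho_{SR_1^{(1)}})=E^{S|R_1T}(\rho_{SR_1^{(1)}}\otimes\rho_{R_1^{(2)}T})$. That identity holds for any LOCC monotone, because appending or discarding a state held entirely on the $R_1T$ side is a local operation.
\end{itemize}
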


Consider the series QN model satisfying the product structure
\begin{eqnarray}\label{eq-EN_product}
    E_{\rm SC}^{(N)}=\prod_{n=1}^N E_n
\end{eqnarray}
for some entanglement measure $E$ and all $N$~\cite{Ratio_negativity2024}.
{Then we obtain the following sufficient condition under which entanglement percolation fails in infinite-size series QNs:}

\begin{theorem}
    Suppose that $E_{\rm SC}^{(N)}=\prod_{n=1}^N E_n$ holds for all $N$, and that
    \begin{eqnarray}\label{eq-limit_EN}
      \overline\lim_{n\to\infty}E_n<1,
    \end{eqnarray}
    then we have the limit $\lim_{N\to\infty} E_{\rm SC}^{(N)}$ exists and satisfies $E_{\rm SC}^{(\infty)}:=\lim_{N\to\infty} E_{\rm SC}^{(N)}=0$.
\end{theorem}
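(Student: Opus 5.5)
The plan is to prove convergence and identify the limit directly from the product structure, using only that $E$ takes values in $[0,1]$.

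\emph{Step 1 (monotonicity and existence of the limit).} Since each $E_n\in[0,1]$, we have $E_{\rm SC}^{(N+1)}=E_{\rm SC}^{(N)}E_{N+1}\le E_{\rm SC}^{(N)}$. Thus $\{E_{\rm SC}^{(N)}\}_N$ is nonincreasing and bounded below by $0$, so $\lim_{N\to\infty}E_{\rm SC}^{(N)}$ exists. This step does not need the monotonicity theorem proved just above, which would require order independence and the monotone property of $E$; the product form alone suffices.

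\emph{Step 2 (uniform bound on the tail).} From hypothesis \eqref{eq-limit_EN}, set $q:=\overline\lim_{n\to\infty}E_n<1$ and fix any $\delta$ with $q<\delta<1$. By the definition of the limit superior, there exists $n_0$ such that $E_n\le\delta$ for all $n\ge n_0$.

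\emph{Step 3 (geometric decay).} For $N\ge n_0$,
\begin{eqnarray*}
  0\le E_{\rm SC}^{(N)}=\prod_{n=1}^{n_0-1}E_n\prod_{n=n_0}^{N}E_n\le \delta^{\,N-n_0+1},
\end{eqnarray*}
where the first finite product was bounded by $1$. The right-hand side tends to $0$ as $N\to\infty$, so $E_{\rm SC}^{(\infty)}=0$. In the paper's language, this means the infinite-size series QN fails to exhibit entanglement percolation: both the limit inferior and the limit superior vanish.

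There is no real obstacle in this argument. The only point that needs care is choosing $\delta$ strictly between $\overline\lim E_n$ and $1$, which gives a uniform contraction factor on the tail. A pointwise bound $E_n<1$ alone would not suffice, since an infinite product of terms below $1$ can still have a positive limit.
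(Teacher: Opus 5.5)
Your proof is correct and follows the same route as the paper. Both arguments get monotonicity of the product from $E_n\in[0,1]$, take a uniform tail bound $E_n\le\delta<1$ from the limit superior (the paper uses $\delta=(1+\alpha)/2$), and dominate by a vanishing geometric product. Your version is slightly cleaner: you bound the individual $E_n$ where the paper's display bounds $E_{\rm SC}^{(n)}$ (evidently a typo), and you compare finite products before passing to the limit, which avoids the paper's strict inequality ending in $=0$.
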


{\begin{proof}
   Let $E_{\rm SC}^{(N)}=\prod_{n=1}^N E_n$ hold for all $N$ and $\alpha=\overline\lim_{n\to\infty}E_n<1$.
   Then for sufficient large $N_0$, we have $\sup_{n\geq N_0}E_{\rm{SC}}^{(n)}<(1+\alpha)/2$, which implies that
   \begin{eqnarray}\label{eq-N0}
       E_{\rm{SC}}^{(n)}<\frac{1+\alpha}{2}
   \end{eqnarray}
   for all $n\geq N_0$.
   Since $E_{\rm SC}^{(N)}=\prod_{n=1}^{N}E_{n}$,
   the sequence $\{E_{\rm SC}^{(N)}\}_N$ is non-increasing and bounded, and hence the limit $\lim\limits_{N\to\infty}E_{\rm SC}^{(N)}$ exists.
   Using Eq.~\eqref{eq-N0}, we obtain
   \begin{eqnarray*}
       E_{\rm SC}^{(\infty)}
       &=&\lim_{N\to\infty}\prod_{n=1}^{N}E_{n}\\ &=&\prod_{n=1}^{N_0}E_{n}\prod_{m=N_0+1}^{\infty}E_{m}\\
       &<&\prod_{n=1}^{N_0}E_{n}\prod_{m=N_0+1}^{\infty}\frac{1+\alpha}{2}\\
       &=&0,
   \end{eqnarray*}
   where the last equality follows from $\frac{1+\alpha}{2} < 1$.
\end{proof}}

{Notably, both DV-based QNs composed of two-qubit states and CV-based QNs composed of TMSVSs satisfy the product structure in Eq.~\eqref{eq-EN_product} (see the series rules in Eqs.~\eqref{eq-swap_1D_concurrence} and~\eqref{eq-seri_chi_N}). Therefore, the following results follow directly:}

\begin{corollary}\label{th-seri_c}
    Consider that $E$ is the concurrence $c$ and aforementioned series QN is a DV-based QN of pure two-qubit states with the concurrence $c_1,c_2,...,c_N$. If the upper limit of the sequence $\{c_n\}_{n=1}^N$ satisfies
    \begin{eqnarray*}
        \overline\lim_{n\to\infty}c_n<1
    \end{eqnarray*}
    for $N\to\infty$, then $c_{\rm SC}^{(\infty)}=0$.
\end{corollary}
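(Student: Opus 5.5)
This corollary is a direct specialization of the preceding theorem, so the plan is to verify its two hypotheses in the DV two-qubit setting and then invoke it.

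\textbf{Step 1: the product structure.} We take $E=c$ and let $\mathcal{D}_{\mathcal{S}}$ be the set of pure two-qubit states. By Example 5.1, specifically the series rule \eqref{eq-swap_1D_concurrence}, every swapping order $\pi_{N-1}$ produces the final state $|\psi_{\lambda}\rangle$ with concurrence $\prod_{n=1}^{N}c_n$. In particular the optimal order $\pi_{N-1}^{\rm opt}$ does as well. Hence
\[
c_{\rm SC}^{(N)}=\prod_{n=1}^{N}c_n \quad\text{for all } N,
\]
which is exactly the product structure \eqref{eq-EN_product}. Since each source state is entangled, $c_n\in(0,1]$, so the sequence of partial products is non-increasing and bounded below by $0$. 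This is the monotonicity the theorem's proof relies on.

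\textbf{Step 2: the upper-limit condition.} The hypothesis $\overline\lim_{n\to\infty}c_n<1$ is precisely condition \eqref{eq-limit_EN} with $E_n=c_n$. The preceding theorem then gives that $\lim_{N\to\infty}c_{\rm SC}^{(N)}$ exists and equals $0$, that is, $c_{\rm SC}^{(\infty)}=0$.

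\textbf{Step 3: justifying the decay.} The only point needing care is the decay argument, and I would make it explicit rather than rely on the theorem's phrasing, which bounds $E^{(n)}_{\rm SC}$ where the bound should be on the factors. Set $\alpha=\overline\lim c_n<1$. By definition of the upper limit, there is $N_0$ such that $c_m<(1+\alpha)/2=:q<1$ for all $m>N_0$. Then for $N>N_0$,
\[
0\le c_{\rm SC}^{(N)}\le \Big(\prod_{n=1}^{N_0}c_n\Big)\,q^{\,N-N_0}\le q^{\,N-N_0}\to 0 .
\]
No step here is a serious obstacle. The only subtlety is confirming that the optimal-order sponge-crossing concurrence coincides with the product, and that follows from the order independence established in Example 5.1.
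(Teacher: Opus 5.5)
Your proposal is correct and follows the paper's route: the paper obtains this corollary by noting that the series rule \eqref{eq-swap_1D_concurrence} of Example 5.1 gives the product structure \eqref{eq-EN_product} for every swapping order, and then invoking the preceding theorem. Your explicit bound $c_{\rm SC}^{(N)}\le q^{\,N-N_0}$ in Step 3 is a valid and cleaner version of the theorem's decay argument, and you are right that the tail bound there should be placed on the factors $c_m$ rather than on $c_{\rm SC}^{(n)}$.
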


Similar conclusions apply to the CV-based series QN from the series rule ${\rm Seri}_{\Phi,\chi_{\mathcal{N}}}$ in Eq.~\eqref{eq-seri_chi}:
\begin{corollary}
    Consider $E$ as the ratio negativity $\chi_{\mathcal{N}}$ and the series QN as a CV-based QN of TMSVSs with the ratio negativity $\chi_{n}$. If
    \begin{eqnarray*}
        \overline\lim_{n\to\infty}\chi_{n}<1
    \end{eqnarray*}
    for $N\to\infty$, then $\chi_{\mathcal{N},\rm{SC}}^{(\infty)}=0$.
\end{corollary}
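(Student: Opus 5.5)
The statement is the CV analogue of Corollary~\ref{th-seri_c}. I will derive it as a direct specialization of the preceding theorem on product-structure series networks, with $E=\chi_{\mathcal{N}}$.

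The first step is to confirm that the product structure \eqref{eq-EN_product} holds for every $N$ in this setting. By Example 5.2, the swapping $\Phi$ of Ref.~\cite{NegPT} on TMSVSs gives, for any swapping order $\pi_{N-1}$,
\begin{eqnarray*}
{\rm Seri}_{\Phi,\chi_{\mathcal{N}}}^{\pi_{N-1}}=\prod_{n=1}^N \tanh r_n=\prod_{n=1}^N\chi_n .
\end{eqnarray*}
Every order gives the same value, so the optimal order does as well. Hence $\chi_{\mathcal{N},\rm SC}^{(N)}=\prod_{n=1}^N\chi_n$ for all $N$. This is the hypothesis $E_{\rm SC}^{(N)}=\prod_n E_n$ of the preceding theorem.

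The second step is to check the remaining hypotheses. The ratio negativity takes values in $[0,1)$ on TMSVSs, since $\chi=\tanh r$ with $r>0$, so each factor is bounded by $1$ and the partial products are non-increasing. The assumption $\overline\lim_{n\to\infty}\chi_n<1$ is exactly condition \eqref{eq-limit_EN}. Applying the theorem then gives that $\lim_N\chi_{\mathcal{N},\rm SC}^{(N)}$ exists and equals $0$.

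The only point needing care is the limit argument inside that theorem. As written, it places the bound on $E^{(n)}_{\rm SC}$, but the estimate is really needed on the individual factors. I would therefore make the step explicit here. Set $\alpha=\overline\lim\chi_n<1$. Choose $N_0$ such that $\chi_m<(1+\alpha)/2$ for all $m>N_0$. Then for $N>N_0$,
\begin{eqnarray*}
0\le\prod_{n=1}^N\chi_n\le\Big(\frac{1+\alpha}{2}\Big)^{N-N_0},
\end{eqnarray*}
using $\chi_n\le1$ for $n\le N_0$. The right-hand side tends to $0$, which gives $\chi_{\mathcal{N},\rm SC}^{(\infty)}=0$.
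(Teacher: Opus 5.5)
Your proposal is correct and follows the paper's route: the corollary is obtained by feeding the order-independent product rule $\chi_{\mathcal{N},\rm SC}^{(N)}=\prod_{n=1}^N\chi_n$ from Example 5.2 into the preceding theorem on product-structure series networks. Your explicit estimate $\prod_{n=1}^N\chi_n\le\bigl((1+\alpha)/2\bigr)^{N-N_0}$ also validly repairs that theorem's proof, which places the bound on $E_{\rm SC}^{(n)}$ rather than on the factors $E_n$ and ends with a strict inequality against $0$.
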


Nevertheless, the criterion in Eq.~\eqref{eq-limit_EN} is only a necessary yet insufficient condition.
A counterexample is the following.
\begin{theorem}\label{th-lp}
    If each initial entanglement $E_n$ in Eq.~\eqref{eq-EN_product} satisfies
    \begin{eqnarray}\label{eq-lp}
     E_n\geq1-{1}/{(n+1)^p},
    \end{eqnarray}
    for $p>1$, then we have $E_{\rm SC}^{(\infty)}>0$; Conversely, if
    \begin{eqnarray*}
     E_n\leq1-{1}/{(n+1)},
    \end{eqnarray*}
    then $E_{\rm SC}^{(\infty)}=0$.
\end{theorem}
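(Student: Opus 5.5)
The plan is to pass to logarithms and use the standard correspondence between infinite products and infinite series. Under the product structure \eqref{eq-EN_product}, $E_{\rm SC}^{(N)}=\prod_{n=1}^N E_n$ with $E_n\in[0,1]$. The sequence is non-increasing and bounded below, so the limit $E_{\rm SC}^{(\infty)}$ exists, as in the preceding theorem. Moreover, $E_{\rm SC}^{(\infty)}>0$ holds exactly when all $E_n>0$ and $\sum_n -\ln E_n<\infty$.

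\textbf{Convergent case.} Assume \eqref{eq-lp} with $p>1$, and set $x_n=1/(n+1)^p\in(0,1/2]$. Then $E_n\ge 1-x_n>0$.
\begin{itemize}
\item Use the elementary bound $-\ln(1-x)\le 2x$ for $x\in[0,1/2]$. It follows from concavity of $\ln(1-x)$ together with the endpoint check $\ln 2\le 1$.
\item This gives
\begin{eqnarray*}
-\ln E_{\rm SC}^{(N)}=\sum_{n=1}^N -\ln E_n\le 2\sum_{n=1}^{\infty}\frac{1}{(n+1)^p}=2\bigl(\zeta(p)-1\bigr)<\infty .
\end{eqnarray*}
\item Hence $E_{\rm SC}^{(N)}\ge \exp\bigl(-2(\zeta(p)-1)\bigr)$ uniformly in $N$, and so $E_{\rm SC}^{(\infty)}>0$.
\end{itemize}
Using the convergence of the $p$-series for $p>1$ is the only analytic input needed here.

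\textbf{Divergent case.} Assume $E_n\le 1-1/(n+1)=n/(n+1)$. The product then telescopes:
\begin{eqnarray*}
E_{\rm SC}^{(N)}\le\prod_{n=1}^N\frac{n}{n+1}=\frac{1}{N+1}\to 0 .
\end{eqnarray*}
Since $E_{\rm SC}^{(N)}\ge 0$, we get $E_{\rm SC}^{(\infty)}=0$. Alternatively, one could use $-\ln(1-x)\ge x$ and the divergence of the harmonic series, but the telescoping argument is shorter and fully explicit.

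There is no serious obstacle. The only point needing care is the first case: the inequality relating $-\ln(1-x)$ to $x$ must hold uniformly on the relevant range. This is why I note $x_n\le 2^{-p}<1/2$, and why I need $E_n>0$ for every $n$, so that no factor in the product vanishes.
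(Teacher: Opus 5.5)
Your proof is correct. The convergent half follows the same idea as the paper: take logarithms and compare $\sum_n -\ln E_n$ with the $p$-series.

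The paper makes that comparison with the limit comparison test, using $\lim_{x\to 0}-\ln(1-x)/x=1$. You use the explicit bound $-\ln(1-x)\le 2x$ on $[0,1/2]$ instead, which also gives the uniform lower bound $E_{\rm SC}^{(N)}\ge e^{-2(\zeta(p)-1)}$. For the divergent half, the paper again uses limit comparison, this time with the harmonic series. You replace this with the telescoping product $\prod_{n=1}^{N} n/(n+1)=1/(N+1)$, which is shorter and gives an explicit decay rate.

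Your version also works directly under the inequality hypotheses. The paper computes only with the boundary sequence $E_n=1-(n+1)^{-p}$. It then passes to the inequalities with a bare ``consequently'', tacitly using that the partial products are monotone in each factor $E_n\in[0,1]$.

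In return, the paper's formulation covers every $p>0$ with a single limit-comparison argument. This shows that $p=1$ is exactly the borderline between convergence and divergence of the product.
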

\begin{proof}
    Let $E_n=1-{1}/{(n+1)^p}$, then
    \begin{eqnarray*}
        E_{\rm SC}^{(N)}=\exp\left\{\sum_{n=1}^{N}\ln{\left[1-{(n+1)^{-p}}\right]}\right\}
    \end{eqnarray*}
    Since the limit
    \begin{eqnarray*}
        \lim_{n\to\infty}\frac{-\ln{\left[1-{(n+1)^{-p}}\right]}}{{(n+1)^{-p}}}=1,
    \end{eqnarray*}
    and the sum $\sum_{n=1}^{\infty}{(n+1)^{-p}}$ converges for $p>1$ but diverges for $0<p\leq 1$, the sum $\sum_{n=1}^{N}\ln{\left[1-{(n+1)^{-p}}\right]}$ converges or diverges accordingly by the limit comparison test.
    Consequently, if $p>1$ and $E_n\geq 1-{1}/{(n+1)^p}$ for all $n$, then $E_{\rm SC}^{(\infty)}>0$.
    In contrast, if $E_n\leq 1-{1}/{(n+1)}$ for all $n$, then $E_{\rm SC}^{(\infty)}=0$.
\end{proof}
For example, when $p=2$, we have $E_{\rm SC}^{(\infty)}=1/2$.

For the QN models satisfying Eq.~\eqref{eq-EN_product}, assume the entanglement of each source state is $E_0$, then the sponge-crossing entanglement takes the form
\begin{eqnarray}\label{eq-eqWeight_series}
    E_{\rm{SC}}^{(N)}=E_0^N,
\end{eqnarray}
indicating that the overall entanglement decreases exponentially with increasing $N$ unless $E_0=1$.
In the asymptotic limit $N\to\infty$, $E_{\rm{SC}}^{(N)}$ remains strictly positive only when the initial entanglement is maximal, i.e., $E_0=1$, while for any $E_0<0$, it vanishes.
For instance, if $E_0<0.95$, the value of $E_{\rm{SC}}^{(N)}$ drops below 0.01 at $N=100$ [Fig.~\ref{fig-E0_ESC}], and even for $E_0=0.99$, it decreases below 0.01 once $N>458$ [Fig.~\ref{fig-N_E_E099}].
While for $N>458$ and $E_0<0.99$, the entanglement percolation almost can not be realize.
This highlights the fragility of large-scale entanglement under imperfect values of initial entanglement.

\begin{figure}
    \centering
    \subfigure[]{
    \includegraphics[width=150pt]{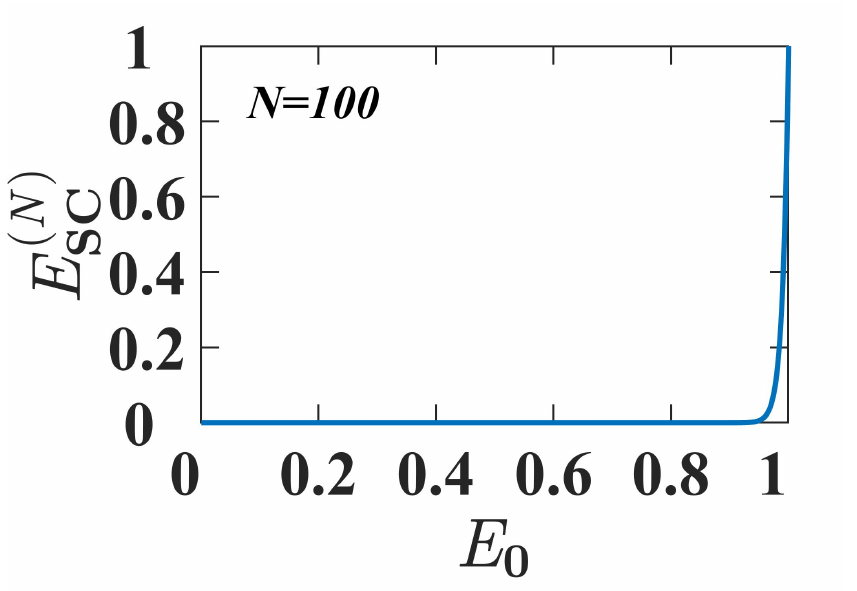}
    \label{fig-E0_ESC}
    }
    \subfigure[]{
    \includegraphics[width=150pt]{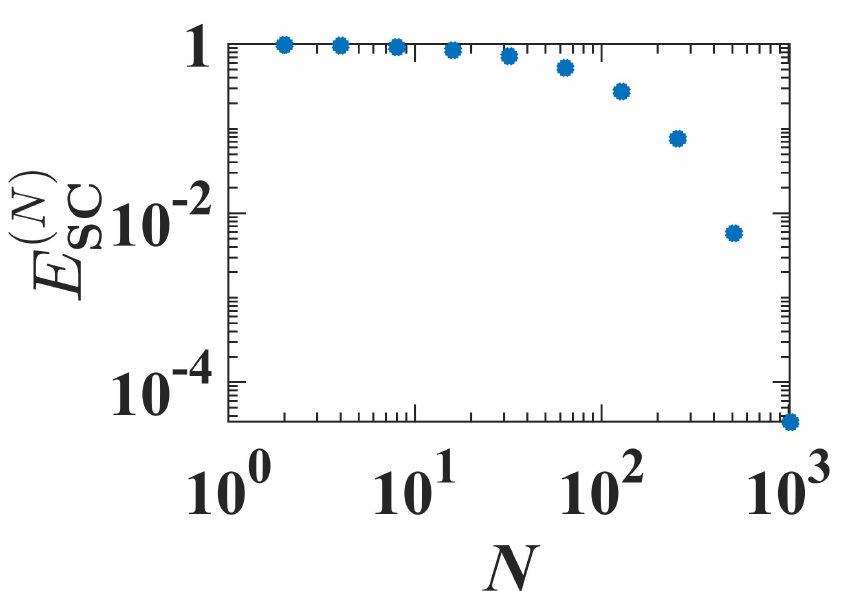}
    \label{fig-N_E_E099}
    }
    \caption{Sponge-crossing entanglement $E_{\rm{SC}}^{(N)}$ in series QN with initially identical resource-state entanglement $E_0$.
    \subref{fig-E0_ESC}~$E_{\rm{SC}}^{(N)}$ versus initial entanglement $E_0$ for $N=100$.
    \subref{fig-N_E_E099}~$E_{\rm{SC}}^{(N)}$ versus $N$ at fixed initial entanglement $E_0=0.99$.
    }
\end{figure}

For such QNs, to establish desired entanglement, it is found that:
\begin{theorem}
    In a finite-size series QN {with $N<\infty$} where each source state has entanglement $E_0$, achieving the desired entanglement target $E_{\rm{SC}}^{\rm{target}}$ requires
\begin{eqnarray*}
    E_0\geq \left[E_{\rm{SC}}^{\rm{target}}\right]^{1/N}.
\end{eqnarray*}
\end{theorem}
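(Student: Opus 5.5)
The plan is to read the statement directly off the product structure in Eq.~\eqref{eq-EN_product}, specialized to identical sources as in Eq.~\eqref{eq-eqWeight_series}. I take ``achieving the desired entanglement target'' to mean $E_{\rm SC}^{(N)}\geq E_{\rm SC}^{\rm target}$, where $E_{\rm SC}^{(N)}$ is the sponge-crossing entanglement of the $N$-state series QN. Under the product structure, $E_{\rm SC}^{(N)}=E_0^N$ for every swapping order. So the optimal swapping order gives nothing extra here, and the requirement becomes exactly $E_0^N\geq E_{\rm SC}^{\rm target}$.

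The key step is to take $N$-th roots on both sides. Since $E$ takes values in $[0,1]$, both $E_0$ and $E_{\rm SC}^{\rm target}$ are nonnegative. The map $x\mapsto x^{1/N}$ is strictly increasing on $[0,\infty)$ for finite $N\geq 1$. Hence $E_0^N\geq E_{\rm SC}^{\rm target}$ holds if and only if $E_0\geq [E_{\rm SC}^{\rm target}]^{1/N}$. This gives necessity, which is what the statement asserts, and also sufficiency within the product model.

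I would also note the edge case. If $E_{\rm SC}^{\rm target}=0$, the bound reads $E_0\geq 0$, which is trivial. The assumption $N<\infty$ is what makes the exponent $1/N$ meaningful. As $N\to\infty$ the bound tends to $1$ for any positive target, which agrees with the earlier observation that only $E_0=1$ survives the limit.

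There is no real obstacle. The only point needing care is to state explicitly that the theorem is made within the product-structure model of Eq.~\eqref{eq-EN_product}. Examples of this model are $C_G$ or $c$ for qubit/qudit chains and $\chi_{\mathcal N}$ for TMSVS chains. For a general monotone $E$ one only has the upper bound from the series-rule inequality~\eqref{ineq-Seri}, which by itself gives just $E_0\geq E_{\rm SC}^{\rm target}$.
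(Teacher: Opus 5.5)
Your argument is correct and matches the paper's reasoning. The paper gives no separate proof; it presents the theorem as an immediate consequence of the identical-source formula $E_{\rm SC}^{(N)}=E_0^N$ in Eq.~\eqref{eq-eqWeight_series}, valid for models obeying Eq.~\eqref{eq-EN_product}, and $E_0^N\geq E_{\rm SC}^{\rm target}$ is equivalent to $E_0\geq [E_{\rm SC}^{\rm target}]^{1/N}$ because $x\mapsto x^{1/N}$ is increasing on $[0,\infty)$. You are also right to state explicitly that the theorem is restricted to the product-structure model, a hypothesis the paper leaves implicit.
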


{In the DV case, a two-qubit pure state can attain maximal entanglement, i.e., unit concurrence. As a result, by choosing each initial link to be a maximally entangled two-qubit state, one obtains $E_n=1$ for all $n$, which directly yields $E_{\rm SC}^{(\infty)}=1$. Therefore, entanglement percolation over an infinitely extended series QN can be trivially achieved.
In contrast, for CV systems, a two-mode squeezed vacuum state (TMSVS) cannot reach maximal entanglement. Specifically, achieving $\chi_{\mathcal{N}}=1$ would require infinite squeezing, which is unphysical. Consequently, it is impossible to realize entanglement percolation in an infinite series QN simply by replacing each link with a maximally entangled CV state.
}

{
Nevertheless, the first statement of Theorem~\ref{th-lp} reveals a nontrivial alternative: although no individual TMSVS can be maximally entangled, one can still enable entanglement percolation in CV-based series QNs by appropriately engineering the resource states. In particular, by choosing a sequence ${\chi_n}$ that approaches unity sufficiently fast, the overall entanglement $\chi_{\mathcal{N},\mathrm{SC}}^{(\infty)}$ can remain strictly positive, thereby making infinite-range entanglement percolation feasible.
This result indicates that, although the TMSVS cannot achieve maximal entanglement---i.e., unit ratio negativity $\chi_{\mathcal{N}}=1$, which would require infinite squeezing---one can still enable entanglement percolation in an infinitely extended series network by appropriately engineering the resource states. In particular, by choosing a sequence of TMSVSs whose values ${\chi_n}$ of ratio negativity approach unity sufficiently fast (e.g., satisfying the conditions in Ineq.~\eqref{eq-lp}), the overall entanglement $\chi_{\mathcal{N},\mathrm{SC}}^{(\infty)}$ remains strictly positive.}

{\subsection{Parallel-then-series networks}

{The preceding section showed that series QNs satisfying Eq.~\eqref{eq-EN_product} cannot achieve entanglement percolation when all source states are identical non-maximally entangled states.
To overcome this limitation, we enhance entanglement percolation by increasing the network width. Specifically, instead of a single entangled state between each pair of adjacent nodes, we allow multiple entangled states to be shared in parallel. We define the \emph{network width} as the minimum number of parallel states shared across all adjacent node pairs. This modification transforms a purely series QN into a parallel-then-series architecture.
In the following, we show that this enhancement---by enabling more effective entanglement concentration---opens the possibility of sustaining entanglement percolation even in the infinite-size limit.}

Given a parallel-then-series QN [Fig.~\ref{fig-p_t_s}], {we have}: (1) $S$ and $R_1$ share $K_1$ states $\rho_{S^{(k_1)}R_1^{(k_1)}}$, $k_1=1,2,...,K_1$; (2) For all $j=2,3,...,N-1$, $R_{j-1}$ and $R_{j}$ share $K_{j}$ states $\rho_{R_{j-1}^{(K_{j-1}+k_j)}R_j^{(k_j)}}$, $k_j=1,2,..,K_j$; (3) $R_{N-1}$ and $T$ share $K_N$ states $\rho_{R_{N-1}^{(K_{N-1}+k_N)}T^{(k_N)}}$, $k_N=1,..,K_N$.
The network width $K$ is
$$K=\min_{n}K_n.$$
This QN can be decomposed as $N$ sub-networks, {each of which consists} of two adjacent nodes and shares all states between them.
Let $\Phi$ be the removal-node entanglement swapping map and $\Psi$ be the normalized entanglement concentration map.
Then by applying the simplification (${\rm {I}^2}$), this QN is mapped to a series QN of $N$ states, $\Psi^{\pi_{K_1}}_{SR_1}\left(\otimes_{k_1=1}^{K_1}\rho_{S^{(k_1)}R_1^{(k_1)}}\right)$, $\Psi^{\pi_{K_2}}_{R_1R_2}\left(\otimes_{k_2=1}^{K_2}\rho_{R_1^{(K_1+k_2)}R_2^{(k_2)}}\right)$,..., $\Psi^{\pi_{K_N}}_{R_{N-1}T}\left(\otimes_{k_N=1}^{K_N}\rho_{R_{N-1}^{(K_{N-1}+k_N)}T^{(k_N)}}\right)$, with corresponding optimal concentration orders {$\pi_{K_n}$}, $n=1,2,...,N$.
Then via (${\rm {II}^2}$), under the optimal swapping order $\pi_{N-1}$ of the series QN with $N$ states, we obtain the final state
\begin{eqnarray*}
    \rho_{ST}= \Phi^{\pi_{N-1}}&\Bigg[&\Psi^{\pi_{K_1}}_{SR_1}\left(\otimes_{k_1=1}^{K_1}\rho_{S^{(k_1)}R_1^{(k_1)}}\right) \otimes\Psi^{\pi_{K_2}}_{R_1R_2}\left(\otimes_{k_2=1}^{K_2}\rho_{R_1^{(K_1+k_2)}R_2^{(k_2)}}\right)\otimes\cdots\nonumber\\
    &\otimes&\Psi^{\pi_{K_N}}_{R_{N-1}T}\left(\otimes_{k_N=1}^{K_N}\rho_{R_{N-1}^{(K_{N-1}+k_N)}T^{(k_N)}}\right)\Bigg].
\end{eqnarray*}

For specific QNs, we obtain the following examples:

\begin{theorem}\label{th_DV_QNs} ({\bf DV-based QNs}) Let the aforementioned QN be DV-based of pure two-qubit states and the concurrence of the $k_n$-th state in the initial $n$-th sub-network be $c_{n,k_n}$. Then the final state is $\rho_{ST}=|\psi_{\lambda}\rangle\langle\psi_\lambda|$ with the concurrence
    \begin{eqnarray*}
        c_{\rm SC}^{(N)}=2\sqrt{\lambda(1-\lambda)}=\prod_{n=1}^N c_n,
    \end{eqnarray*}
    where $c_n$ denotes the entanglement of the output state obtained via entanglement concentration over the $K_n$ states in the $n$-th sub-network,
    \begin{eqnarray}\label{eq_stp_cn}
        \frac{1+\sqrt{1-c_n^2}}{2}=\max\left\{\frac{1}{2},\prod_{k_n=1}^{K_n}\frac{1+\sqrt{1-c_{n,k_n}^2}}{2}\right\}.
    \end{eqnarray}
    \end{theorem}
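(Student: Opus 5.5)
The plan is to follow the two-stage reduction described just before the statement: first simplification (${\rm I}^2$) on each of the $N$ maximal parallel sub-networks, then simplification (${\rm II}^2$) on the resulting series QN. The tools are the $K$-state parallel rule of Example 5.3 and the $N$-state series rule of Example 5.1.

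\emph{Step 1: parallel stage.} For the $n$-th sub-network, write the $K_n$ pure two-qubit states in Schmidt form $|\psi_{\mu_{n,k}}\rangle$ with $\mu_{n,k}\ge 1/2$. Since $c=2\sqrt{\mu(1-\mu)}$, we can invert to get
$$\mu_{n,k}=\frac{1+\sqrt{1-c_{n,k}^2}}{2}.$$
Example 5.3, which follows from iterating the two-state rule \eqref{eq-para_c}, gives
$$\Psi^{\pi_{K_n}}_{R_{n-1}R_n}\Bigl(\otimes_{k}\rho_{n,k}\Bigr)=|\psi_{\mu_n}\rangle\langle\psi_{\mu_n}|,\qquad \mu_n=\max\Bigl\{\tfrac12,\prod_k \mu_{n,k}\Bigr\}.$$
This holds for every concentration order, so in particular for the optimal one. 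Defining $c_n:=2\sqrt{\mu_n(1-\mu_n)}$ and using $\mu_n=(1+\sqrt{1-c_n^2})/2$, which is valid because $\mu_n\ge 1/2$, yields exactly \eqref{eq_stp_cn}.

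One point needs checking: that iterating the two-state rule really produces the $K$-fold product. This works because the two-state rule in terms of $\mu$ reads $\mu=\max\{1/2,\mu_1\mu_2\}$. If an intermediate step is capped at $1/2$, then every subsequent product is also $\le 1/2$, since each factor is $\le 1$. So the cap is consistent with $\max\{1/2,\prod\}$, and a short induction on $k$ settles it.

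\emph{Step 2: series stage.} After Step 1 the network is a series QN of $N$ pure two-qubit states $|\psi_{\mu_1}\rangle,\dots,|\psi_{\mu_N}\rangle$ with concurrences $c_1,\dots,c_N$. Example 5.1 (the iterated series rule \eqref{eq-seri_c}, giving \eqref{eq-swap_1D_concurrence}) shows that for every swapping order $\pi_{N-1}$, hence also the optimal one, the output is $|\psi_\lambda\rangle$. Here
$$\lambda=\frac{1+\sqrt{1-\prod_n c_n^2}}{2},$$
so $2\sqrt{\lambda(1-\lambda)}=\prod_n c_n$, because $4\lambda(1-\lambda)=1-(1-\prod_n c_n^2)$.

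\emph{Main obstacle.} There is no real analytic difficulty. The only subtle point is justifying that the output of each stage is again a pure two-qubit state of Schmidt form $|\psi_\lambda\rangle$, so that the earlier rules can be fed into each other. This is supplied by the deterministic protocols in Examples 4.1 and 4.3, whose outputs are stated to be exactly $|\psi_\lambda\rangle\langle\psi_\lambda|$ after the local unitaries.

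We also need the order of the (${\rm I}^2$)/(${\rm II}^2$) stages to be the one prescribed. No subtlety arises here, because both rules are order-independent in the DV qubit case. Combining the two steps then gives $c_{\rm SC}^{(N)}=\prod_{n=1}^N c_n$ with $c_n$ determined by \eqref{eq_stp_cn}.
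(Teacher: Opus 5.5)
Your proposal is correct and follows the paper's own proof: first the parallel simplification $({\rm I}^2)$ via the $K$-state parallel rule Eq.~\eqref{eq-para_c_K} of Example 5.3, which yields the $c_n$ in Eq.~\eqref{eq_stp_cn}, then the series simplification $({\rm II}^2)$ via the series rule Eq.~\eqref{eq-swap_1D_concurrence} of Example 5.1, which yields $\prod_n c_n$. Your extra checks fill in details the paper leaves implicit: the induction showing that the capped two-state rule $\mu=\max\{1/2,\mu_1\mu_2\}$ iterates to $\max\{1/2,\prod_k\mu_k\}$, and the inversion $\mu=(1+\sqrt{1-c^2})/2$.
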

\begin{proof}
    Using parallel simplification (${\rm {I}^2}$) on the initial QN, we obtain an $N$-state series QN where the concurrence $c_n$ of the $n$-th state satisfies Eq.~\eqref{eq_stp_cn} governed by the parallel rule in Eq.~\eqref{eq-para_c_K}.
    Subsequent series simplification (${\rm {II}^2}$), from the series rule Eq.~\eqref{eq-swap_1D_concurrence}, yields the final concurrence $c_{\rm SC}^{(N)}=\prod_{n=1}^{N}c_n$.
\end{proof}

Given that the QN reduces to a series configuration after the first simplification step, Theorem~\ref{th-seri_c} provides a sufficient condition for the absence of entanglement percolation in parallel-then-series QNs {in the asymptotic limit as} $N\to\infty$.
\begin{corollary}
    For the limit $N\to\infty$, if $c_n$ in Eq.~\eqref{eq_stp_cn} satisfies $\overline\lim_{n\to\infty}c_n<1$, then $\lim_{N\to\infty}c_{\rm SC}^{(N)}=0$.
\end{corollary}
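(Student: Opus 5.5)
The corollary follows from Theorem~\ref{th_DV_QNs} together with the product-form failure criterion already established for series QNs. The plan is to reduce the parallel-then-series network to a purely series one and then reuse the series argument unchanged.

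First, apply Theorem~\ref{th_DV_QNs}. For every $N$, the parallel simplification (${\rm I}^2$) followed by the series simplification (${\rm II}^2$) yields a final two-qubit pure state with
\begin{eqnarray*}
c_{\rm SC}^{(N)}=\prod_{n=1}^N c_n .
\end{eqnarray*}
Here $c_n$ is defined through Eq.~\eqref{eq_stp_cn} and depends only on the $K_n$ states of the $n$-th sub-network. In particular, $c_n$ does not depend on $N$. Hence $\{c_n\}$ is a single fixed sequence in $[0,1]$, and the effective series QN satisfies the product structure of Eq.~\eqref{eq-EN_product} with $E_n=c_n$ for all $N$.

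Second, with the product structure in place, the hypothesis $\overline\lim_{n\to\infty}c_n<1$ is exactly condition \eqref{eq-limit_EN}. Invoking the earlier theorem on product-form series QNs, or equivalently Corollary~\ref{th-seri_c} for the concurrence, gives $\lim_{N\to\infty}c_{\rm SC}^{(N)}=0$.

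For a self-contained verification I would argue directly. Set $\alpha=\overline\lim c_n<1$ and choose $N_0$ such that $c_n<(1+\alpha)/2$ for all $n\geq N_0$; note that the bound is applied to the individual factors $c_n$, not to the partial products. Since each $c_n\le 1$, the partial products are nonincreasing, so the limit exists. The partial products are also bounded above by
\begin{eqnarray*}
\prod_{n<N_0}c_n\cdot\left(\tfrac{1+\alpha}{2}\right)^{N-N_0+1}\longrightarrow 0 .
\end{eqnarray*}

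The only point needing care is the justification that $c_n$ is independent of $N$, so that the series-case result applies to a genuine fixed sequence. This holds because the concentration rule \eqref{eq-para_c_K} is order-independent and purely local to each sub-network. The remaining steps are routine.
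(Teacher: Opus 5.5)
Your proposal is correct and follows the paper's route: Theorem~\ref{th_DV_QNs} reduces the parallel-then-series network to a series network with $c_{\rm SC}^{(N)}=\prod_{n=1}^N c_n$, and the product-form failure criterion (Corollary~\ref{th-seri_c}) then gives $\lim_{N\to\infty}c_{\rm SC}^{(N)}=0$. In your direct check, bounding the individual factors $c_n$ by $(1+\alpha)/2$ for $n\ge N_0$, rather than the partial products, is the correct way to carry out the estimate that underlies that criterion.
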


Now
{let us consider the scenario in which all sub-networks are identical and determine the network  width $K$ required for an infinite series network to percolate the entanglement successfully.}

\begin{corollary}\label{coro_lim_cn}
For a DV-based QN as in Theorem~\ref{th_DV_QNs} with all sub-networks are identical.  Assume that $K_n=K$ for each $n$ and denote $c_{n,k_n}=c$. Then {we have
\begin{eqnarray*}
    c_{\rm SC}^{(N)}= c_1^N,
\end{eqnarray*}
where $c_1$ satisfies
\begin{eqnarray*}
  \frac{1+\sqrt{1-c_1^2}}{2}=\max\left\{\frac{1}{2},\left[\frac{1+\sqrt{1-c^2}}{2}\right]^K\right\}.
\end{eqnarray*}}
In addition, let
$$c_{\rm th}(K):=\sqrt{2^{(2K-1)/K}-2^{(2K-2)/K}}.$$ The following statements are true:

{\rm (1)} If the scale of the network is finite ($N<\infty$), then $C_{\rm SC}^{(N)}(c)$ is continuous for $c\in[0,1]$, and $0<C_{\rm SC}^{(N)}<1$ for $0< c<c_{\rm th}(K)$ while $C_{\rm SC}^{(N)}=1$ for $c\geq c_{\rm th}(K)$.

{\rm (2)} If the scale of the network {is} infinite ($N=\infty$), then $C_{\rm SC}^{(\infty)}(c)$ is not continuous at $c=c_{\rm th}(K)$, and $C_{\rm SC}^{(\infty)}=0$ for $c\leq c_{\rm th}(K)$ while $C_{\rm SC}^{(\infty)}=1$ for $c\geq c_{\rm th}(K)$.
\end{corollary}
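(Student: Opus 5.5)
The plan is to reduce everything to the scalar function $c\mapsto c_1(c)$ and then study its $N$-th power. \textbf{Step 1 (product formula).} Since all sub-networks are identical with $K_n=K$ and $c_{n,k_n}=c$, Theorem~\ref{th_DV_QNs} gives $c_{\rm SC}^{(N)}=\prod_{n=1}^N c_n$. Each $c_n$ is determined by Eq.~\eqref{eq_stp_cn} with identical inputs, so $c_n=c_1$ for every $n$ and $c_{\rm SC}^{(N)}=c_1^N$. Write $m(c)=\frac{1+\sqrt{1-c^2}}{2}\in[\tfrac12,1]$ and $\mu(c)=\max\{\tfrac12,m(c)^K\}$. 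Then $c_1=2\sqrt{\mu(1-\mu)}$, i.e. $c_1(c)=\sqrt{1-(2\mu(c)-1)^2}$, which is the inverse of the relation $\mu=\frac{1+\sqrt{1-c_1^2}}{2}$ on $\mu\ge\frac12$.

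\textbf{Step 2 (locating the threshold).} The key observation is that $c_1=1$ if and only if $\mu=\tfrac12$, which happens if and only if $m(c)^K\le\tfrac12$. Since $m$ is strictly decreasing in $c$, this is equivalent to
\[
\sqrt{1-c^2}\le 2^{1-1/K}-1 .
\]
Squaring gives $c^2\ge 1-(2^{1-1/K}-1)^2=2^{(2K-1)/K}-2^{(2K-2)/K}=c_{\rm th}(K)^2$. Hence $c_1(c)=1$ exactly when $c\ge c_{\rm th}(K)$.

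For $0<c<c_{\rm th}(K)$ we have $\mu=m(c)^K\in(\tfrac12,1)$. Here $\mu<1$ because $c>0$ forces $m(c)<1$. It follows that $0<c_1<1$.

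\textbf{Step 3 (finite $N$).} The map $c\mapsto c_1(c)$ is a composition of continuous functions: a square root, a power, a maximum and a polynomial. So $c_{\rm SC}^{(N)}=c_1^N$ is continuous on $[0,1]$. Statement (1) then follows immediately from Step 2, since $0<c_1<1$ implies $0<c_1^N<1$, and $c_1=1$ gives $c_1^N=1$.

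\textbf{Step 4 (infinite $N$).} We take $N\to\infty$ in $c_1^N$, which gives $0$ when $c_1<1$ and $1$ when $c_1=1$. For $c<c_{\rm th}(K)$ the limit is therefore $0$, and for $c\ge c_{\rm th}(K)$ it is $1$. Consequently the limiting function jumps from $0$ to $1$, so it is discontinuous at $c_{\rm th}(K)$, which establishes (2).

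The only delicate point, and the main obstacle, is the boundary value. By Step 2, at $c=c_{\rm th}(K)$ the maximum in $\mu$ is attained at $\tfrac12$, so $c_1=1$ and the limit equals $1$, not $0$. I would therefore read the "$c\le c_{\rm th}(K)$" in (2) as $c<c_{\rm th}(K)$ (with $c=0$ trivially giving $0$), and state explicitly that $c_{\rm SC}^{(\infty)}(c_{\rm th})=1$. The discontinuity claim is unaffected, because the left limit is $0$. Everything else is routine algebra.
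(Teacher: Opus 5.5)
Your proof is correct and takes the same route the paper implicitly intends. The paper gives no separate proof: it treats the corollary as an immediate consequence of Theorem~\ref{th_DV_QNs} with identical sub-networks ($c_n=c_1$ for all $n$), with the threshold obtained by solving $\bigl[\tfrac{1+\sqrt{1-c^2}}{2}\bigr]^K=\tfrac12$, exactly as in your Steps 1--4. Your boundary correction is also right: at $c=c_{\rm th}(K)$ one has $c_1=1$, so $c_{\rm SC}^{(\infty)}(c_{\rm th}(K))=1$, and the ``$c\le c_{\rm th}(K)$'' in (2) must be read as $c<c_{\rm th}(K)$, which matches the paper's own caption for the infinite-size $K=2$ plot.
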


It is clear by Corollary~\ref{coro_lim_cn}, the DV-based parallel-then-series {QNs} with width $K\geq 2$ have good entanglement percolation property.
The case when the networks are finite-size with $N=10,100,1000$ is shown in Fig.~\ref{fig-s_t_p_finite}.
For the case when  the network is infinite-size {($N=\infty$)},
assume that $K=2$, then the threshold is $c_{\rm th}(2) = \sqrt{(2\sqrt{2} - 2)} \approx 0.91$ and the behaviour of entanglement percolation $C_{\rm SC}^{(\infty)}$ is shown in Fig.~\ref{fig-s_t_p_infinite}. As $C_{\rm SC}^{(N)}(c)=1$, attaining the maximal entanglement, whenever $c>c_{\rm th}(K)$, the threshold $c_{\rm th}(K)$ is also called the \emph{saturation point} of entanglement percolation.

From Corollary~\ref{coro_lim_cn},  the threshold $c_{\rm th}(K)$ admits a clear qualitative interpretation in terms of the network width $K$. Specifically, entanglement percolation in the infinite-size limit occurs if and only if $c> c_{\rm th}(K)$. The threshold $c_{\rm th}(K)$ decreases monotonically with increasing $K$. In particular, as $K\to\infty$, we have $c_{\mathrm{th}}(K)\to 0$, implying that even very weakly entangled initial states can support entanglement percolation, provided that the network width is sufficiently large.

{Equivalently, for a given initial concurrence $c$, there exists a \emph{critical width}
\begin{eqnarray*}
    K_c:=\left\lceil\left(\log_{2}\frac{2}{1+\sqrt{1-c^2}}\right)^{-1}\right\rceil,
\end{eqnarray*}
such that entanglement percolation is achievable {for all $N$} if $K \geq K_c$. From the asymptotic relation above, this critical width scales as
$K_c\sim {2\ln 2}/{c^2}$.
This result shows that the parallel-then-series architecture fundamentally alters the percolation behavior: while uniform series QNs prohibit infinite-range entanglement distribution for any $c<1$, introducing a finite but sufficiently large parallel width enables deterministic entanglement percolation even with non-maximally entangled resources.}

\begin{figure}
    \centering
    \subfigure[]{
    \includegraphics[width=150pt]{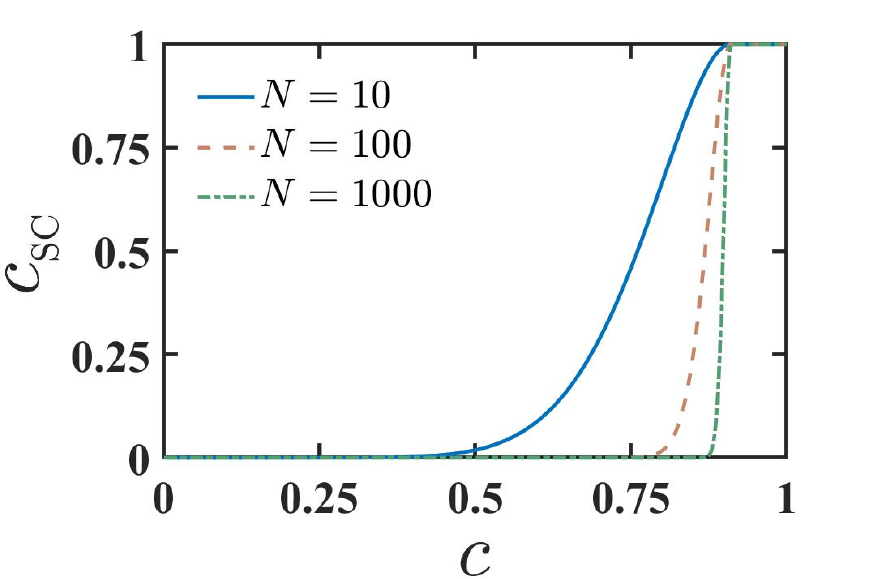}
    \label{fig-s_t_p_finite}
    }
    \subfigure[]{
    \includegraphics[width=150pt]{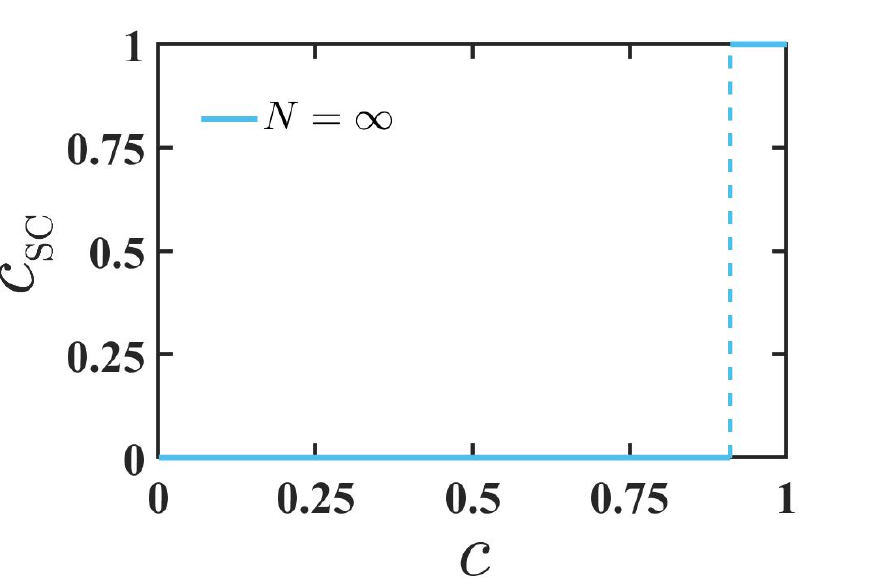}
     \label{fig-s_t_p_infinite}
    }
    \caption{Sponge-crossing concurrence $c_{\rm SC}=c_1^N$ for parallel-then-series networks.
    ~\subref{fig-s_t_p_finite} Finite-size networks with $N=10,100,1000$: $c_1<1$ for $c<c_{\rm th}(2)$ and $c_1=1$ for $c\ge c_{\rm th}(2)$ where $c_{\rm th}(2) = \sqrt{(2\sqrt{2} - 2)} \approx 0.91$.
    ~\subref{fig-s_t_p_infinite} Infinite-size networks ($N\to\infty$) for $K=2$: $c_{\rm SC}=0$ for $c<c_{\rm th}(2)$ and $c_{\rm SC}=1$ for $c\ge c_{\rm th}(2)$.}
    \label{fig-s_t_p_example}
\end{figure}

If the source states are TMSVSs, via the series-parallel rules in Eqs.~\eqref{eq-seri_chi_N}~and~\eqref{eq-para_chi_K},  the following result is easily checked:

\begin{theorem} (\textbf{CV-based QNs})
    Let the parallel-then-series QN be a finite-size CV-based QN {composed} of TMSVSs and the ratio negativity of the $k_n$-th state in the initial $n$-th sub-network be $\chi_{n,k_n}$. Then the final state is a TMSVS $\rho_{ST}=|\psi^r\rangle\langle\psi^r|$ with the ratio negativity
    \begin{eqnarray*}
        \chi_{\rm SC}^{(N)}=\tanh r=\prod_{n=1}^{N}\chi_{n}
    \end{eqnarray*}
    where
    \begin{eqnarray}\label{eq-pts_chi_n}
        \chi_n=\frac{\max\limits_{1\leq k_n\leq K_n}\chi_{n,k_n}}{\sqrt{\max\limits_{1\leq k_n\leq K_n}\chi_{n,k_n}^2+\prod_{k_n=1}^{K_n}(1-\chi_{n,k_n}^2)}}
    \end{eqnarray}
    is the ratio negativity of the $n$-th state obtained by concentrating $K_n$ states in the $n$-th sub-network via entanglement concentration.
\end{theorem}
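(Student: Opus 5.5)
The argument mirrors the proof of Theorem~\ref{th_DV_QNs}: first apply the parallel simplification (${\rm I}^2$) to each sub-network, then the series simplification (${\rm II}^2$) to the resulting chain.

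\emph{Parallel step.} Fix $n$. Relabel the $K_n$ TMSVSs between $R_{n-1}$ and $R_n$ so that the first has the largest squeezing $r_{n,\max}$. Since $\chi_{\mathcal N}(|\psi^r\rangle\langle\psi^r|)=\tanh r$ is increasing in $r$, this state also has the largest ratio negativity $\max_{k_n}\chi_{n,k_n}$. By the optimality result of Ref.~\cite{NegPT} recalled in Sec.~\ref{sec-parallel_order}, any concentration order starting from this state is optimal. The output is a TMSVS $|\psi^{r_n}\rangle$ with
\[
\sinh r_n=\sinh r_{n,\max}\prod_{k\neq \max}\cosh r_{n,k}.
\]
Eq.~\eqref{eq-para_chi_K} then gives its ratio negativity as exactly the expression \eqref{eq-pts_chi_n}.

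For the reader's check, this is a routine conversion. From $\sinh^2 r_n=\chi_n^2/(1-\chi_n^2)$ and $\cosh^2 r=1/(1-\chi^2)$, the relation above becomes
\[
\frac{\chi_n^2}{1-\chi_n^2}=\frac{\chi_{\max}^2}{\prod_{k}(1-\chi_{n,k}^2)},
\]
where $\chi_{\max}=\max_{k_n}\chi_{n,k_n}$ and the product runs over all $k$, including the maximal one. Solving for $\chi_n$ yields \eqref{eq-pts_chi_n}.

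\emph{Series step.} After (${\rm I}^2$), the network is a series QN of $N$ TMSVSs with ratio negativities $\chi_1,\dots,\chi_N$. By Example~5.2 and the series rule \eqref{eq-seri_chi_N}, every swapping order gives the same output. That output is a TMSVS $|\psi^r\rangle$ with $\tanh r=\prod_n\tanh r_n=\prod_n\chi_n$, which is the claimed $\chi^{(N)}_{\rm SC}$. Because the swapping result is order-independent, the choice of $\pi_{N-1}$ (optimal or not) does not matter.

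\emph{Main obstacle.} The delicate point is justifying that the parallel stage uses the optimal order. The concentration map is order-dependent, as the three-state example in Sec.~\ref{sec-parallel_order} shows, so the formula holds only if each parallel block starts from its most-squeezed state. This is exactly what the definition of (${\rm I}^2$) prescribes.

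A second point is whether performing all of (${\rm I}^2$) before (${\rm II}^2$) is legitimate. For a parallel-then-series network this is automatic: after (${\rm I}^2$) no parallel sub-networks remain, so only one pass of each step occurs. The final state is therefore well defined and a TMSVS, since both maps send TMSVSs to TMSVSs.
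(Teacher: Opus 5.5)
Your proposal is correct and follows the paper's route. The paper simply asserts that the result is ``easily checked'' from the optimal-order parallel rule \eqref{eq-para_chi_K} applied to each block in step (${\rm I}^2$), followed by the order-independent product series rule \eqref{eq-seri_chi_N} in step (${\rm II}^2$). Your explicit conversion from $\sinh r_n=\sinh r_{n,\max}\prod\cosh r_{n,k}$ to the $\chi$-form supplies the one computation the paper leaves unwritten.
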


Since any TMSVS cannot undergo infinite squeezing (i.e., the squeezing parameter cannot approach infinity), its ratio negativity cannot reach unity. Consequently, it follows that multiple TMSVSs cannot be concentrated into a perfectly entangled state via entanglement concentration, implying that the saturation {point and the critical width} cannot be defined for identical TMSVSs.
Nevertheless, we can design {parallel-then-series} CV-based QNs that enable entanglement percolation.

Specifically, we have

\begin{corollary}\label{coro_lim_chi} Assume all source states in the aforementioned QN be the TMSVS $|\psi^r_0\rangle$ with the ratio negativity $\chi_0=\tanh r_0>0$ and $K_n=n$ for all $n$. For $N\to\infty$, via the simplifications ${\rm ({I}^2)}$ and ${\rm ({II}^2)}$, the final ratio negativity $\chi_{\rm SC}^{(\infty)}$ established between $S$ and $T$ is non-zero.
\end{corollary}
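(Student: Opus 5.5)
The plan is to reduce the statement to the product structure of the series rule and then to a convergent infinite product, in the spirit of Theorem~\ref{th-lp}. First, apply the parallel simplification (${\rm I}^2$) to every sub-network using the optimal concentration order. Since all $K_n=n$ states between $R_{n-1}$ and $R_n$ equal $|\psi^{r_0}\rangle$, Eq.~\eqref{eq-pts_chi_n} gives the explicit ratio negativity of the $n$-th concentrated link. Writing $q:=1-\chi_0^2\in(0,1)$, it reads
\begin{eqnarray*}
\chi_n=\frac{\chi_0}{\sqrt{\chi_0^2+q^{n}}} .
\end{eqnarray*}
Then apply the series simplification (${\rm II}^2$). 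By the preceding CV theorem (equivalently the series rule in Eq.~\eqref{eq-seri_chi_N}, which is order independent), the result is $\chi_{\rm SC}^{(N)}=\prod_{n=1}^N\chi_n$.

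Second, I will show that this product has a strictly positive limit. Every factor satisfies $0<\chi_n<1$, so the partial products decrease and the limit $\chi_{\rm SC}^{(\infty)}$ exists. It remains to show the limit is nonzero. I will write
\begin{eqnarray*}
\chi_n^2=1-x_n,\qquad x_n=\frac{q^n}{\chi_0^2+q^n}\le \frac{q^n}{\chi_0^2},
\end{eqnarray*}
so that $\sum_n x_n<\infty$ as a geometric series.

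Third, I will pass to logarithms: $\ln\chi_{\rm SC}^{(N)}=\tfrac12\sum_{n\le N}\ln(1-x_n)$. Choose $n_0$ with $x_n\le 1/2$ for all $n\ge n_0$, which is possible because $x_n\to0$. For those $n$ the elementary bound $-\ln(1-x)\le 2x$ on $[0,1/2]$ gives $\sum_{n\ge n_0}-\ln(1-x_n)\le 2\sum_n x_n<\infty$. The finitely many factors with $n<n_0$ are each strictly positive, since $\chi_0>0$. Hence
\begin{eqnarray*}
\chi_{\rm SC}^{(\infty)}\ge\Big(\prod_{n<n_0}\chi_n\Big)\exp\Big(-\sum_{n\ge n_0}x_n\Big)>0 .
\end{eqnarray*}
Alternatively, one can note that $x_n$ eventually lies below $1/(n+1)^2$ and invoke the first part of Theorem~\ref{th-lp} with $p=2$, applied to $E_n=\chi_n\ge\chi_n^2$.

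The only delicate point is justifying that the infinite network is correctly modeled by the limit of the finite products, with (${\rm I}^2$) performed before (${\rm II}^2$). I would handle this by defining $\chi_{\rm SC}^{(\infty)}$ as $\lim_N\chi_{\rm SC}^{(N)}$, exactly as in the series-network analysis. The estimate itself is routine, because the concentrated links approach unit ratio negativity geometrically fast, which is much faster than the polynomial rate required in Theorem~\ref{th-lp}.
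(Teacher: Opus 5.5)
Your proposal is correct and follows essentially the same route as the paper. You substitute into Eq.~\eqref{eq-pts_chi_n} to get $\chi_n=\chi_0/\sqrt{\chi_0^2+(1-\chi_0^2)^n}$, write $\chi_{\rm SC}^{(N)}=\prod_{n\le N}\chi_n$, and show $\sum_n\ln\chi_n$ converges by comparison with the geometric series $\sum_n(1-\chi_0^2)^n$. The only difference is that you use the explicit bound $-\ln(1-x)\le 2x$ on $[0,1/2]$ after setting aside finitely many factors, whereas the paper uses the limit comparison test via $\ln\chi_n/(1-\chi_0^2)^n\to-(2\chi_0^2)^{-1}$.
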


\begin{proof}
Substituting  $\chi_{n,k_n}=\chi_0$ into Eq.~\eqref{eq-pts_chi_n} yields $\chi_n={\chi_0}/{\sqrt{\chi_0^2+\left(1-\chi_0^2\right)^n}}$, and consequently,
    \begin{eqnarray*}
        \chi_{\rm SC}^{(N)}=\exp{\sum_{n=1}^{N}\ln{\chi_n}}=\exp{\sum_{n=1}^{N}\ln{\frac{\chi_0}{\sqrt{\chi_0^2+\left(1-\chi_0^2\right)^n}}}}.
    \end{eqnarray*}
    Since $\lim_{n\to\infty}{\ln {\chi_n}}/{\left(1-\chi_0^2\right)^{n}}=-({2\chi_0^2})^{-1}$, the two series $\sum_{n=1}^{\infty}\ln{\chi_n}$ and $\sum_{n=1}^{\infty}\left(1-\chi_0^2\right)^{n}$ converge or diverge simultaneously.
    Since $\sum_{n=1}^{\infty}\left(1-\chi_0^2\right)^{n}=1/\chi_0^2-1$ converges for $0<\chi_0<1$, $\sum_{n=1}^{\infty}\ln{\chi_n}$ likewise converges, implying $\chi_{\rm SC}^{(\infty)}=\exp{\sum_{n=1}^{\infty}\ln{\chi_n}}>0$.
\end{proof}
{For instance, if $\chi_0 = 0.5$, then $\chi_{\rm SC}^{(N)}$ (Fig.~\ref{fig-XSC_N}) asymptotically approaches 0.27 as $N$ becomes large.}

\begin{figure}
    \centering
    \includegraphics[width=200pt]{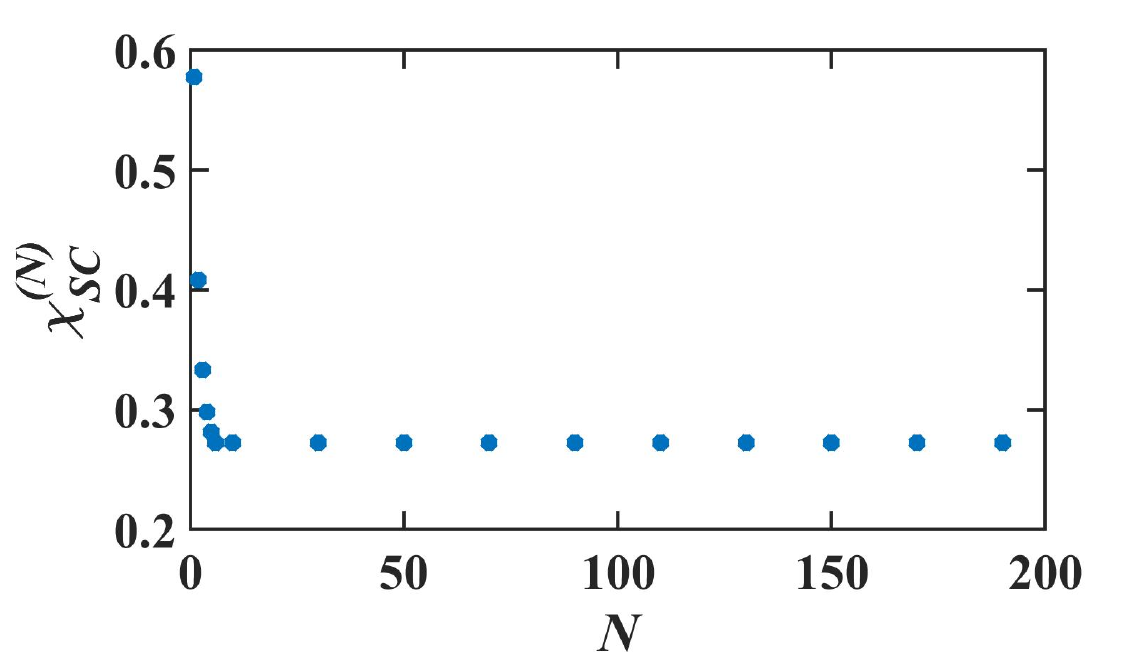}
    \caption{Sponge-crossing ratio negativity $\chi_{\rm SC}^{(N)}$. For sufficient large $N$, $\chi_{\rm SC}^{(N)}\approx0.272$.  }
    \label{fig-XSC_N}
\end{figure}

In summary, by Corollaries~\ref{coro_lim_cn}~and~\ref{coro_lim_chi}, achieving entanglement percolation between infinitely distant nodes requires supplementing parallel source states. This approach enhances entanglement through concentration while counteracting the operational resource consumption inherent to entanglement swapping. The corollaries also suggest that, in practical scenarios,
 if we need to expand a {QN} while maintaining the high-quality entanglement percolation of the network, in the {DV} system case, each expansion requires adding parallel a subnetwork with width $K \geq K_c=\left\lceil\left(\log_{2}\frac{2}{1+\sqrt{1-c^2}}\right)^{-1}\right\rceil$; whereas in the {CV} system case, although there is no constant similar to $K_c$ above, adding parallel network with width $K_n \geq n$ is sufficient for the $n$-th expansion. This finding a provides helpful guidance for the design and construction of {QNs}.}

\section{Conclusion and discussion}

This work establishes an operator-theoretic framework for entanglement \yaqi{distribution} and percolation in QNs under deterministic schemes. We first investigate the fundamental properties of deterministic entanglement swapping and entanglement concentration maps. Subsequently, for the maximal sub-network decomposition of series-parallel QNs, we systematically analyze two classes of operational sequences: internal operation orders (entanglement swapping sequences and concentration sequences) and the order of operations between different sub-networks (simplifications (${\rm {I}^2}$) and (${\rm {II}^2}$)).
Building on this analysis, we derive sufficient conditions for achieving entanglement percolation in specific infinite-range QNs, providing theoretical support for unlimited network expansion.

Nevertheless, there are several open questions:
\begin{enumerate}
    \item Scalable order-dependent protocols: Do deterministic entanglement swapping protocols exist that are simultaneously scalable and entanglement-measure-based order-dependent?
    \item Universal topological conditions: How can topological conditions {be established for} general uniform scalable series-parallel QNs ensure entanglement percolation during infinite expansion?
    \item Novel \yaqi{distribution} architectures: Do {there exist} QN architectures, {together} with corresponding DET schemes, beyond the current framework?
    \item Hybrid quantum states: Do {there exist} deterministic entanglement \yaqi{distribution} schemes for QNs with heterogeneous source states (e.g., coexisting pure and mixed states)?
    \item It should be noted that our definition of entanglement percolation for series QNs in Sec.~\ref{sec-series} remains incomplete. For instance, when both $\overline\lim_{n\to \infty}E_{\rm SC}^{(n)}>0$ and $\varliminf_{n\to \infty}E_{\rm SC}^{(n)}=0$ hold--certain subsequences converge to positive values while others converge to zero--how should such percolation behavior be rigorously defined?
\end{enumerate}


\newpage
\clearpage

\section*{APPENDIX}
\appendix
\addcontentsline{toc}{section}{Appendix}

\renewcommand{\appendixname}{Appendix}
\renewcommand{\thesubsection}{\Roman{subsection}}
\setcounter{equation}{0}
\renewcommand{\theequation}{\Roman{subsection}.\arabic{equation}}

\setcounter{figure}{0}
\renewcommand{\thefigure}{\Roman{subsection}.\arabic{figure}}
\makeatletter
\renewcommand{\p@subfigure}{\thefigure}
\makeatother

\subsection{Proof of Theorem~\ref{thm-commutative}}\label{SI-commutative}

Assume that $\Phi$ is both $E$-based order-independent and $E$-preserving, we now prove that
\begin{eqnarray*}
    {\rm Seri}_{\Phi,E}^{\pi_{N-1}}\left[\rho_{SR_1^{(1)}},\cdots,\rho_{R_{N-1}^{(2)}T}\right]
    ={\rm Seri}_{\Phi,E}^{(1,2,...,N-1)}\left[\rho_{SR_1^{(1)}},\cdots,\rho_{R_{N-1}^{(2)}T}\right].
\end{eqnarray*}
holds true for all $N>3$ and any swapping order $\pi_{N-1}$ by mathematical induction.

(1) Base case ($N = 4$): Consider all permutations of $(1,2,3)$: $(1,3,2)$, $(2,1,3)$, $(2,3,1)$, $(3,1,2)$, and $(3,2,1)$. Since $\Phi$ is $E$-based order-independent, we have:
\begin{align*}
\mathrm{Seri}_{\Phi,E}^{(1,3,2)}\left[\rho_{SR_1^{(1)}}, \rho_{R_1^{(2)}R_2^{(1)}}, \rho_{R_2^{(2)}R_3^{(1)}}, \rho_{R_3^{(2)}T}\right] &= \mathrm{Seri}_{\Phi,E}^{(1,2,3)}\left[\rho_{SR_1^{(1)}}, \rho_{R_1^{(2)}R_2^{(1)}}, \rho_{R_2^{(2)}R_3^{(1)}}, \rho_{R_3^{(2)}T}\right], \\
\mathrm{Seri}_{\Phi,E}^{(2,3,1)}\left[\rho_{SR_1^{(1)}}, \rho_{R_1^{(2)}R_2^{(1)}}, \rho_{R_2^{(2)}R_3^{(1)}}, \rho_{R_3^{(2)}T}\right] &= \mathrm{Seri}_{\Phi,E}^{(2,1,3)}\left[\rho_{SR_1^{(1)}}, \rho_{R_1^{(2)}R_2^{(1)}}, \rho_{R_2^{(2)}R_3^{(1)}}, \rho_{R_3^{(2)}T}\right], \\
\mathrm{Seri}_{\Phi,E}^{(3,2,1)}\left[\rho_{SR_1^{(1)}}, \rho_{R_1^{(2)}R_2^{(1)}}, \rho_{R_2^{(2)}R_3^{(1)}}, \rho_{R_3^{(2)}T}\right] &= \mathrm{Seri}_{\Phi,E}^{(3,1,2)}\left[\rho_{SR_1^{(1)}}, \rho_{R_1^{(2)}R_2^{(1)}}, \rho_{R_2^{(2)}R_3^{(1)}}, \rho_{R_3^{(2)}T}\right] \\
&= \mathrm{Seri}_{\Phi,E}^{(1,3,2)}\left[\rho_{SR_1^{(1)}}, \rho_{R_1^{(2)}R_2^{(1)}}, \rho_{R_2^{(2)}R_3^{(1)}}, \rho_{R_3^{(2)}T}\right] \\
&= \mathrm{Seri}_{\Phi,E}^{(1,2,3)}\left[\rho_{SR_1^{(1)}}, \rho_{R_1^{(2)}R_2^{(1)}}, \rho_{R_2^{(2)}R_3^{(1)}}, \rho_{R_3^{(2)}T}\right].
\end{align*}
Additionally:
\begin{align*}
\mathrm{Seri}_{\Phi,E}^{(2,1)}\left[\rho_{SR_1^{(1)}}, \rho_{R_1^{(2)}R_2^{(1)}}, \rho_{R_2^{(2)}R_3^{(1)}}\right] = \mathrm{Seri}_{\Phi,E}^{(1,2)}\left[\rho_{SR_1^{(1)}}, \rho_{R_1^{(2)}R_2^{(1)}}, \rho_{R_2^{(2)}R_3^{(1)}}\right].
\end{align*}

Since $\Phi$ is $E$-preserving, we obtain:
\begin{align*}
\mathrm{Seri}_{\Phi,E}^{(2,1,3)}\left[\rho_{SR_1^{(1)}}, \rho_{R_1^{(2)}R_2^{(1)}}, \rho_{R_2^{(2)}R_3^{(1)}}, \rho_{R_3^{(2)}T}\right] = \mathrm{Seri}_{\Phi,E}^{(1,2,3)}\left[\rho_{SR_1^{(1)}}, \rho_{R_1^{(2)}R_2^{(1)}}, \rho_{R_2^{(2)}R_3^{(1)}}, \rho_{R_3^{(2)}T}\right].
\end{align*}
Thus, for any permutation $\pi_3$ of $(1,2,3)$:
\begin{align*}
\mathrm{Seri}_{\Phi,E}^{\pi_3}\left[\rho_{SR_1^{(1)}}, \rho_{R_1^{(2)}R_2^{(1)}}, \rho_{R_2^{(2)}R_3^{(1)}}, \rho_{R_3^{(2)}T}\right] = \mathrm{Seri}_{\Phi,E}^{(1,2,3)}\left[\rho_{SR_1^{(1)}}, \rho_{R_1^{(2)}R_2^{(1)}}, \rho_{R_2^{(2)}R_3^{(1)}}, \rho_{R_3^{(2)}T}\right].
\end{align*}

(2) Induction step ($N = m+1$): Assume the result holds for all $N \leq m$. For any swapping sequence $\pi_m$, we consider two cases:

Case 1: $\pi_m(m) = m$.
By the induction hypothesis, for any swapping sequence $\pi_{m-1}$:
\begin{align*}
&\mathrm{Seri}^{\pi_{m-1}}\left[\rho_{SR_1^{(1)}}, \rho_{R_1^{(2)}R_2^{(1)}}, \dots, \rho_{R_{m-1}^{(2)}R_m^{(1)}}\right]\\
= &\mathrm{Seri}^{\pi_{m-1}'}\left[\rho_{SR_1^{(1)}}, \rho_{R_1^{(2)}R_2^{(1)}}, \dots, \rho_{R_{m-1}^{(2)}R_m^{(1)}}\right],
\end{align*}
where $\pi_{m-1}'=(1,2,\dots,m-1)$.
Since $\Phi$ is $E$-preserving:
\begin{align*}
&\mathrm{Seri}_{\Phi,E}^{\pi_m}\left[\rho_{SR_1^{(1)}}, \rho_{R_1^{(2)}R_2^{(1)}}, \dots, \rho_{R_m^{(2)}T}\right] \\
=& \mathrm{Seri}_{\Phi,E} \left[ \Phi^{\pi_{m-1}} \left( \rho_{SR_1^{(1)}}\otimes\rho_{R_1^{(2)}R_2^{(1)}}\otimes\dots\otimes\rho_{R_{m-1}^{(2)}R_m^{(1)}}\right), \rho_{R_m^{(2)}T} \right] \\
=& \mathrm{Seri}_{\Phi,E} \left[ \Phi^{\pi_{m-1}'} \left( \rho_{SR_1^{(1)}}\otimes\rho_{R_1^{(2)}R_2^{(1)}}\otimes\dots\otimes\rho_{R_{m-1}^{(2)}R_m^{(1)}}\right), \rho_{R_m^{(2)}T} \right] \\
=& \mathrm{Seri}_{\Phi,E}^{\pi_m'}\left[\rho_{SR_1^{(1)}}, \rho_{R_1^{(2)}R_2^{(1)}}, \dots, \rho_{R_m^{(2)}T}\right],
\end{align*}
where $\pi_m'=(1,2,\dots,m)$.

Case 2: $\pi_m(m) = n_0 \neq m$.
There exist a permutation $\pi_{n_0-1}$ of $(1,\dots,n_0-1)$ and a permutation $\pi_{m-n_0}$ of $\{n_0+1,\dots,m\}$ such that:
\begin{align*}
\mathrm{Seri}_{\Phi,E}^{\pi_m}\Big[
&\rho_{SR_1^{(1)}},  \rho_{R_1^{(2)}R_2^{(1)}},\dots, \rho_{R_m^{(2)}T}\Big]\\
= {\rm Seri}_{\Phi,E}\Biggl[
&\Phi^{\pi_{n_0-1}}\left(\rho_{SR_1^{(1)}}\otimes...\otimes\rho_{R_{n_0-1}^{(2)}R_{n_0}^{(1)}}\right), \Phi^{\pi_{m-n_0}}\left( \rho_{R_{n_0}^{(2)}R_{n_0+1}^{(1)}} \otimes...\otimes \rho_{R_m^{(2)}T}\right)\Biggr]
.
\end{align*}

By the induction hypothesis:
\begin{align}\label{eq-proof1}
{\rm Seri}_{\Phi,E}\Big[
&\Phi^{\pi_{n_0-1}}\left(\rho_{SR_1^{(1)}}\otimes...\otimes\rho_{R_{n_0-1}^{(2)}R_{n_0}^{(1)}}\right),\nonumber\\
&\Phi^{\pi_{m-n_0}}\left( \rho_{R_{n_0}^{(2)}R_{n_0+1}^{(1)}} \otimes...\otimes \rho_{R_m^{(2)}T}\right)\Big]\nonumber\\
= \mathrm{Seri}_{\Phi,E} \Big[
&\Phi^{(1,\dots,n_0-1)} \left(\rho_{SR_1^{(1)}}\otimes...\otimes\rho_{R_{n_0-1}^{(2)}R_{n_0}^{(1)}}\right),\nonumber\\ &\Phi^{(m,m-1,\dots,n_0+1)} \left( \rho_{R_{n_0}^{(2)}R_{n_0+1}^{(1)}} \otimes...\otimes \rho_{R_m^{(2)}T}\right)\Big]
\end{align}

Using the $E$-based order-independent property of $\Phi$, the right function in Eq.~\eqref{eq-proof1} equals to
\begin{align*}
{\rm Seri}_{\Phi,E}\Big[
&\Phi^{(1,2,...,n_0-1)}\left(\rho_{SR_1^{(1)}}\otimes...\otimes\rho_{R_{n_0-1}^{(2)}R_{n_0}^{(1)}}\right), \\
&\Phi^{(m,m-1,...,n_0+1)}\left( \rho_{R_{n_0}^{(2)}R_{n_0+1}^{(1)}} \otimes...\otimes \rho_{R_m^{(2)}T}\right)\Big]\\
={\rm Seri}_{\Phi,E}\Big[
&\Phi^{(1,2,...,n_0)}\left(\rho_{SR_1^{(1)}}\otimes...\otimes\rho_{R_{n_0}^{(2)}R_{n_0+1}^{(1)}}\right),\\
&\Phi^{(m,m-1,...,n_0+2)}\left( \rho_{R_{n_0+1}^{(2)}R_{n_0+2}^{(1)}} \otimes...\otimes \rho_{R_m^{(2)}T}\right)\Big]\\
\vdots \\
={\rm Seri}_{\Phi,E}\Big[&\Phi^{(1,2,...,m-1)}\left(\rho_{SR_1^{(1)}}\otimes...\otimes\rho_{R_{m-1}^{(2)}R_{m}^{(1)}}\right),  \rho_{R_m^{(2)}T}\Big]\\
={\rm Seri}_{\Phi,E}^{\pi_m'}\Big[&\rho_{SR_1^{(1)}},\rho_{R_{1}^{(2)}R_{2}^{(1)}},...,  \rho_{R_m^{(2)}T}\Big].
\end{align*}

\subsection{Impact of order-dependent swapping and concentration}\label{SI-simplification_sequence}
\setcounter{equation}{0}

Consider $\Phi$ or $\Psi$ does not $E$-based order-independent on $\mathcal{D}_{\mathcal{S}}$.
Now we provide two examples to demonstrate that the execution order of (${\rm {I}^2}$) and (${\rm {II}^2}$) may impact the final outcome.

\begin{figure}
    \centering
    \subfigure[]{
    \includegraphics[width=200pt]{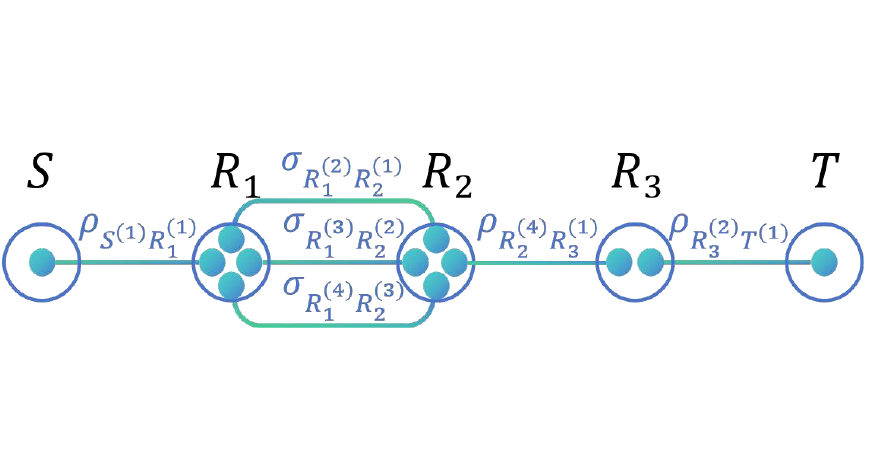}
    \label{fig-non_commutative_swapping}
    }
    \subfigure[]{
    \includegraphics[width=200pt]{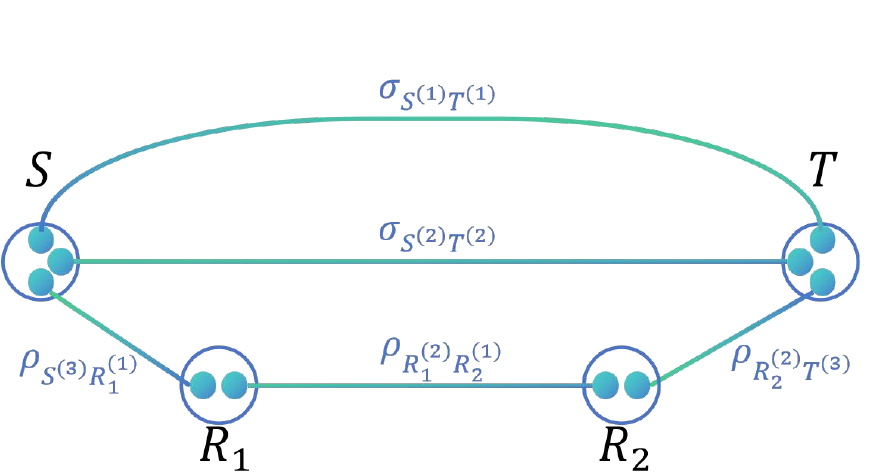}
    \label{fig-non_commutative_concentration1}
    }
    \caption{Two network topologies of series-parallel QNs.}
\end{figure}

\subsubsection{Order-dependent entanglement swapping}

Consider a ``parallel-then-series'' network topology [Fig.~\ref{fig-p_t_s}] shown in Fig.~\ref{fig-non_commutative_swapping}. Suppose nodes $S$ and $R_1$ share a state $\rho_{S^{(1)}R_1^{(1)}}$; $R_1$ and $R_2$ share three states $\sigma_{R_1^{(2)}R_2^{(1)}}$, $\sigma_{R_1^{(3)}R_2^{(2)}}$, and $\sigma_{R_1^{(4)}R_2^{(3)}}$; $R_2$ and $R_3$ share a state $\rho_{R_2^{(4)}R_3^{(1)}}$; and $R_3$ and $T$ share a state $\rho_{R_3^{(2)}T^{(1)}}$.
If $\Phi$ is not $E$-based order-dependent for $\mathcal{D}_{\mathcal{S}}$, we show in the following two cases how different simplification orders can affect the resulting entangled state between terminals.

(1) {Sequence [(${\rm {I}^2}$),(${\rm {II}^2}$)].} First performing (${\rm {I}^2}$) concentrates the three states between $R_1$ and $R_2$ into a single state
\begin{eqnarray}\label{eq-rho2}
    \rho_{R_1^{(2,3,4)}R_2^{(1,2,3)}} =\Psi^{\rm{opt}}_{R_1R_2} \left(\sigma_{R_1^{(2)}R_2^{(1)}}\otimes\sigma_{R_1^{(3)}R_2^{(2)}}\otimes\sigma_{R_1^{(4)}R_2^{(3)}}\right)
\end{eqnarray}
using the optimal concentration order of $\sigma_{R_1^{(2)}R_2^{(1)}}$, $\sigma_{R_1^{(3)}R_2^{(2)}}$, and $\sigma_{R_1^{(4)}R_2^{(3)}}$.
Next, executing (${\rm {II}^2}$) applies optimal entanglement swapping with permutation $\pi_3^{a}$ of $(1,2,3)$ to the four states $\rho_{S^{(1)}R_1^{(1)}}$, $\rho_{R_1^{(2,3,4)}R_2^{(1,2,3)}}$, $\rho_{R_2^{(4)}R_3^{(1)}}$, and $\rho_{R_3^{(2)}T^{(1)}}$.
Then one obtains the final state
\begin{eqnarray}\label{eq-rho_a}
    \rho_{S^{(1)}T^{(1)}}^{a}
    =\Phi^{\pi_3^{a}} \left(\rho_{S^{(1)}R_1^{(1)}}\otimes\rho_{R_1^{(2,3,4)}R_2^{(1,2,3)}}\otimes\rho_{R_2^{(4)}R_3^{(1)}}\otimes\rho_{R_3^{(2)}T^{(1)}}\right)
\end{eqnarray}
established between $S$ and $T$.

(2) {Sequence [(${\rm {II}^2}$),(${\rm {I}^2}$),(${\rm {II}^2}$)].}
In contrast, if (${\rm {II}^2}$) is performed first, the states $\rho_{R_2^{(4)}R_3^{(1)}}$ and $\rho_{R_3^{(2)}T^{(1)}}$ are first converted into a state
\begin{eqnarray}\label{eq-rho34}
    \rho_{R_2^{(4)}T^{(1)}}
    =\Phi_{R_3}^{R_2T} \left(\rho_{R_2^{(4)}R_3^{(1)}}\otimes\rho_{R_3^{(2)}T^{(1)}}\right)
\end{eqnarray}
between $R_3$ and $T$ via the entanglement swapping.
Then, executing (${\rm {I}^2}$) produces the state $\rho_{R_1^{(2,3,4)}R_2^{(1,2,3)}}$ between $R_1$ and $R_2$ as in Eq.~\eqref{eq-rho2} by concentrating $\sigma_{R_1^{(2)}R_2^{(1)}}$, $\sigma_{R_1^{(3)}R_2^{(2)}}$, and $\sigma_{R_1^{(4)}R_2^{(3)}}$.
Finally, performing (${\rm {II}^2}$) creates a final state
\begin{eqnarray}\label{eq-rho_a2}
    \rho_{S^{(1)}T^{(1)}}^{a'}
    &=&\Phi^{\pi_2}\left(\rho_{S^{(1)}R_1^{(1)}}\otimes\rho_{R_1^{(2,3,4)}R_2^{(1,2,3)}}\otimes\rho_{R_2^{(4)}T^{(1)}}\right)\nonumber\\
    &=&\Phi^{\pi_2}\Big[\rho_{S^{(1)}R_1^{(1)}}\otimes\Psi^{\rm{opt}}_{R_1R_2} \left(\sigma_{R_1^{(2)}R_2^{(1)}}\otimes\sigma_{R_1^{(3)}R_2^{(2)}}\otimes\sigma_{R_1^{(4)}R_2^{(3)}}\right)\nonumber\\
    &&\ \ \ \ \ \ \otimes\Phi_{R_3}^{R_2R_3} \left(\rho_{R_2^{(4)}R_3^{(1)}}
    \otimes\rho_{R_3^{(2)}T^{(1)}}\right)\Big]\nonumber\\
    &=&\Phi^{(3,\pi_2)}\Big[\rho_1\otimes\Psi^{\rm{opt}}_{R_1R_2} \left(\sigma_{R_1^{(2)}R_2^{(1)}}\otimes\sigma_{R_1^{(3)}R_2^{(2)}}\otimes\sigma_{R_1^{(4)}R_2^{(3)}}\right)\otimes\rho_{R_2^{(4)}R_3^{(1)}}\nonumber\\
    &&\ \ \ \ \ \ \ \ \ \ \otimes\rho_{R_3^{(2)}T^{(1)}}\Big]
\end{eqnarray}
between $S$ and $T$ via the optimal entanglement swapping of $\rho_{S^{(1)}R_1^{(1)}}$, $\rho_{R_1^{(2,3,4)}R_2^{(1,2,3)}}$, and $\rho_{R_2^{(4)}T^{(1)}}$ with permutation $\pi_2$ of $(1,2)$.

Because $\Phi$ is not $E$-based order-independent on state set $\mathcal{D}_{\mathcal
S}$, we cannot obtain that $E\left[\rho_{S^{(1)}T^{(1)}}^{a}\right]=E\left[\rho_{S^{(1)}T^{(1)}}^{a'}\right]$ when $\pi_3^{a}\neq(3,\pi_2)$.
However, if $\Phi$ is $E$-based order-independent on $\mathcal{D}_{\mathcal{S}}$, then the order of (${\rm {I}^2}$) and (${\rm {II}^2}$)--—regardless of $\Psi$’s order independence—--does not affect the final entanglement (i.e., $E\left[\rho_{S^{(1)}T^{(1)}}^{a}\right]=E\left[\rho_{S^{(1)}T^{(1)}}^{a'}\right]$).

\begin{figure}
    \centering
    \includegraphics[width=250pt]{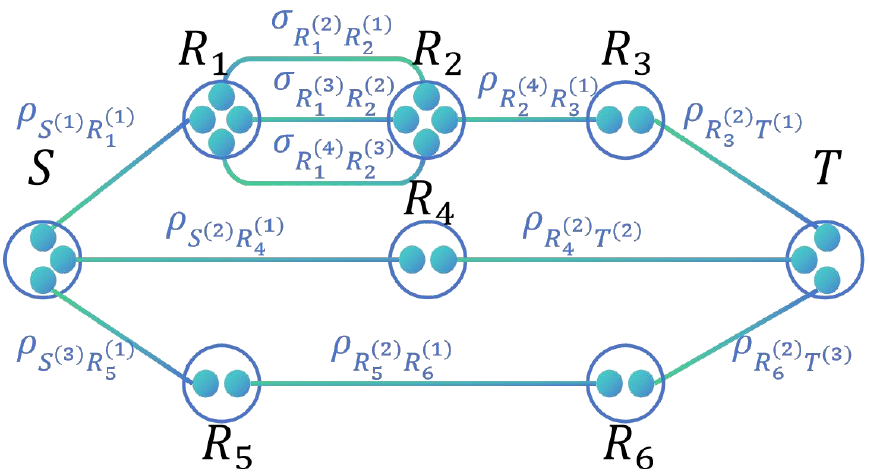}
    \caption{A series-parallel QN.}
    \label{fig-non_commutative_concentration}
\end{figure}

\subsubsection{Order-dependent entanglement concentration}

Let the entanglement concentration $\Psi$ fail to be $E$-based order-independent on $\mathcal{D}_{\mathcal{S}}$.
Consider a series-parallel QN with a ``parallel-then-series'' network topology [Fig.~\ref{fig-p_t_s}] depicted in Fig.~\ref{fig-non_commutative_concentration1}.
In this QN, nodes $S$ and $T$ share two states---denoted by $\sigma_{S^{(1)}T^{(1)}}$ and $\sigma_{S^{(2)}T^{(2)}}$---which together form a maximal parallel sub-network.
Additionally, there is a maximal series sub-network $(S,R_1,R_2,T)$: $S$ and $R_1$ share the state $\rho_{S^{(3)}R_1^{(1)}}$, $R_1$ and $R_2$ share $\rho_{R_1^{(2)}R_2^{(1)}}$, and $R_2$ and $T$ share $\rho_{R_2^{(2)}T^{(3)}}$.
Assume that all states in this QN lie within $\mathcal{D}_{\mathcal{S}}$.
We now compare the resulting state under two different simplification orders:

(1) {Sequence [(${\rm {I}^2}$), (${\rm {II}^2}$), (${\rm {I}^2}$)].}
First, by performing the parallel simplification (${\rm {I}^2}$), the two states $\sigma_{S^{(1)}T^{(1)}}$ and $\sigma_{S^{(2)}T^{(2)}}$ are mapped, via the entanglement concentration $\Psi_{ST}$, into a single state
\begin{eqnarray*}
    \sigma_{S^{(1,2)}T^{(1,2)}} =\Psi_{ST}\left[\sigma_{S^{(1)}T^{(1)}}\otimes\sigma_{S^{(2)}T^{(2)}}\right]
\end{eqnarray*}
shared between $S$ and $T$.
Next, the series simplification (${\rm {II}^2}$) is performed on the maximal series sub-network $(S,R_1,R_2,T)$, yielding a new state
\begin{eqnarray}\label{eq-b1_sigma3}
    \sigma_{S^{(3)}T^{(3)}}=\Phi^{\rm{opt}}\left[\rho_{S^{(3)}R_1^{(1)}}\otimes\rho_{R_1^{(2)}R_2^{(1)}}\otimes\rho_{R_2^{(2)}T^{(3)}}\right]
\end{eqnarray}
between $S$ and $T$ where $\Phi^{\pi_2}$ denotes the entanglement swapping applied on three states $\rho_{S^{(3)}R_1^{(1)}}$, $\rho_{R_1^{(2)}R_2^{(1)}}$, and $\rho_{R_2^{(2)}T^{(3)}}$ in the optimal order.
Now the QN is a parallel one with two states $\sigma_{S^{(1,2)}T^{(1,2)}}$ and $\sigma_{S^{(3)}T^{(3)}}$.
Finally, by (${\rm {I}^2}$) again, the two states are used to produced the final resource
\begin{eqnarray*}
    \rho_{ST}^b
    &=&\Psi_{ST}\left[\sigma_{S^{(1,2)}T^{(1,2)}}\otimes\sigma_{S^{(3)}T^{(3)}}\right]\nonumber\\
    &=&\Psi_{ST}\left[\Psi_{ST}\left[\sigma_{S^{(1)}T^{(1)}}\otimes\sigma_{S^{(2)}T^{(2)}}\right]\otimes\sigma_{S^{(3)}T^{(3)}}\right]\nonumber\\
    &=&\Psi_{ST}^{(1,2,3)}\left[\sigma_{S^{(1)}T^{(1)}}\otimes\sigma_{S^{(2)}T^{(2)}}\otimes\sigma_{S^{(3)}T^{(3)}}\right].
\end{eqnarray*}

(2) {Sequence [(${\rm {II}^2}$), (${\rm {I}^2}$)].}
Now performing (${\rm {II}^2}$) first, we then obtain $\sigma_{S^{(3)}T^{(3)}}$ [Eq.~\eqref{eq-b1_sigma3}] to replace the three states in the maximal series sub-network $(S,R_1,R_2,T)$.
The QN becomes a parallel QN with three states $\sigma_{S^{(1)}T^{(1)}}$, $\sigma_{S^{(2)}T^{(2)}}$, and $\sigma_{S^{(3)}T^{(3)}}$ which are converted, via the entanglement concentration $\Psi^{\pi_3^b}$ of these states with the optimal order denoted by $\pi_3^b$, into the final state
\begin{eqnarray*}
    \rho_{ST}^{b'}
    &=&\Psi^{\pi_3^b}\left[\sigma_{S^{(1)}T^{(1)}}\otimes\sigma_{S^{(2)}T^{(2)}}\otimes\sigma_{S^{(3)}T^{(3)}}\right].
\end{eqnarray*}

Since $\Psi$ is not $E$-based order-independent on $\mathcal{D}_{\mathcal{S}}$, the order $\pi_3^b$ need not coincide with $(1,2,3)$, and thus in general $E\left[\rho_{ST}^b\right]\neq E\left[\rho_{ST}^{b'}\right]$.
By contrast, if $\Psi$ is $E$-based order-independent for $\mathcal{D}_{\mathcal{S}}$, then regardless of whether $\Phi$ is $E$-based order-independent, one immediately obtains $\rho_{ST}^b=\rho_{ST}^{b'}$.

For example, consider the CV-based QN of TMSVSs with the network topology as Fig.~\ref{fig-non_commutative_concentration1} where the values of the ratio negativity are $\chi_{\mathcal{N}}\left[\sigma_{S^{(1)}T^{(1)}}\right]=0.5$, $\chi_{\mathcal{N}}\left[\sigma_{S^{(2)}T^{(2)}}\right]=0.5$, $\chi_{\mathcal{N}}\left[\rho_{S^{(3)}R_1^{(1)}}\right]=0.9$, $\chi_{\mathcal{N}}\left[\rho_{R_1^{(2)}R_2^{(1)}}\right]=0.9$, and $\chi_{\mathcal{N}}\left[\rho_{R_2^{(2)}T^{(3)}}\right]=0.9$.
Then the first operation sequence [(${\rm {I}^2}$), (${\rm {II}^2}$), (${\rm {I}^2}$)] yields the final ratio negativity is $\chi_{1}\approx0.788$. In contrast, the second operation sequence gives different ratio negativity $\chi_{2}\approx 0.818\neq\chi_1$.

\textbf{Remark:}
From the perspective of QN topology design, these two examples show that:

(1) If the entanglement swapping map $\Phi$ is $E$-based order-independent while the entanglement-concentrating map $\Psi$ is not, one should employ the ``parallel-then-series'' network topology such as the architecture in Fig.~\ref{fig-non_commutative_swapping} rather than the ``series-then-parallel'' network topology in Fig.~\ref{fig-non_commutative_concentration1}.

(2) Conversely, if $\Psi$ is $E$-based order-independent but $\Phi$ is not, the topology in Fig.~\ref{fig-non_commutative_concentration1} becomes preferable to that in Fig.~\ref{fig-non_commutative_swapping}.

\subsubsection{Order-dependent entanglement operation}

In fact, for a general series-parallel QN, if either $\Phi$ or $\Psi$ is not $E$-based order-independent, the choice of whether to apply simplification operation (${\rm {I}^2}$) or (${\rm {II}^2}$) first affects the final output state. We now present an example to illustrate this.

Consider the QN with topology shown in Fig.~\ref{fig-non_commutative_concentration}, which contains three parallel paths between nodes $S$ and $T$.
The first path $(S,R_1,R_2,R_3,T)$ is identical to the model in Fig.~\ref{fig-non_commutative_swapping}.
The second path $(S,R_4,T)$ is a maximal series sub-network with source states $\rho_{S^{(2)}R_4^{(1)}}$ and $\rho_{R_4^{(2)}T^{(2)}}$.
The third path $(S,R_5,R_6,T)$ is another maximal series sub-network with source states $\rho_{S^{(3)}R_5^{(1)}}$, $\rho_{R_5^{(2)}R_6^{(1)}}$, and $\rho_{R_6^{(2)}T^{(3)}}$.
Let either $\Phi$ or $\Psi$ be not $E$-based order-independent.
We now compare the resulting state under two different simplification orders:

(1) {Sequence [(${\rm {I}^2}$),(${\rm {II}^2}$),(${\rm {I}^2}$)].} First perform (${\rm {I}^2}$), which concentrates the three states $\sigma_1$, $\sigma_2$, and $\sigma_3$ between $R_1$ and $R_2$ into
$\rho_{R_1^{(2,3,4)}R_2^{(1,2,3)}}$ [Eq.~\eqref{eq-rho2}].
Next execute (${\rm {II}^2}$).
On the first path, convert the four states $\rho_{S^{(1)}R_1^{(1)}}$, $\rho_{R_1^{(2,3,4)}R_2^{(1,2,3)}}$, $\rho_{R_2^{(4)}R_3^{(1)}}$, and $\rho_{R_3^{(2)}T^{(1)}}$
 into
 \begin{eqnarray*}
     \sigma_{S^{(1)}T^{(1)}}'=\rho_{S^{(1)}T^{(1)}}^a
 \end{eqnarray*}
via the entanglement swapping $\Phi^{\pi_3^a}$ in Eq.~\eqref{eq-rho_a}.
On the second path, convert $\rho_{S^{(2)}R_4^{(1)}}$ and $\rho_{R_4^{(2)}T^{(2)}}$ into
\begin{eqnarray}\label{eq-sigma2}
    \sigma_{S^{(2)}T^{(2)}}'=\Phi_{R_4^{(1,2)}}^{S^{(2)}T^{(2)}}\left(\rho_{S^{(2)}R_4^{(1)}}\otimes\rho_{R_4^{(2)}T^{(2)}}\right)
\end{eqnarray}
via entanglement swapping $\Phi_{R_4^{(1,2)}}^{S^{(2)}T^{(2)}}$.
On the third path, convert $\rho_{S^{(3)}R_5^{(1)}}$, $\rho_{R_5^{(2)}R_6^{(1)}}$, and $\rho_{R_6^{(2)}T^{(3)}}$ into
\begin{eqnarray}\label{eq-sigma3}
   \sigma_{S^{(3)}T^{(3)}}'&=&\Phi^{\rm{opt}}\left(\rho_{S^{(3)}R_5^{(1)}}\otimes\rho_{R_5^{(2)}R_6^{(1)}}\otimes\rho_{R_6^{(2)}T^{(3)}}\right)
\end{eqnarray}
via the optimal entanglement swapping $\Phi^{\rm{opt}}$ of the three states.
Finally, perform entanglement concentration across the three parallel outputs $\sigma_{S^{(1)}T^{(1)}}'$, $\sigma_{S^{(2)}T^{(2)}}'$, and $\sigma_{S^{(3)}T^{(3)}}'$ with optimal concentration ordering $\pi_3'$ of these three states, the final state established between $S$ and $T$ is
\begin{eqnarray*}
    \sigma_{ST}'
    &=&\Psi^{\pi_3'}_{ST} \left[\sigma_{S^{(1)}T^{(1)}}'\otimes\sigma_{S^{(2)}T^{(2)}}'\otimes\sigma_{S^{(3)}T^{(3)}}'\right]\nonumber\\
    &=&\Psi^{\pi_3'}_{ST} \left[\rho_{S^{(1)}T^{(1)}}^a\otimes\sigma_{S^{(2)}T^{(2)}}'\otimes\sigma_{S^{(3)}T^{(3)}}'\right].
\end{eqnarray*}

(1) {Sequence [(${\rm {II}^2}$),(${\rm {I}^2}$),(${\rm {II}^2}$),(${\rm {I}^2}$)].}
Alternatively, start by performing  (${\rm {II}^2}$).
Specifically, map $\rho_{R_2^{(4)}R_3^{(1)}}$ and $\rho_{R_3^{(2)}T^{(1)}}$ are first converted into a state into $\rho_{R_2^{(4)}T^{(1)}}$ [Eq.~\eqref{eq-rho34}]
between $R_2$ and $T$ via entanglement swapping $\Phi_{R_3^{(1,2)}}^{R_2^{(4)}T^{(1)}}$.
Simultaneously, on the second path, convert $\rho_{S^{(2)}R_4^{(1)}}$ and $\rho_{R_4^{(2)}T^{(2)}}$ into $\sigma_{S^{(2)}T^{(2)}}'$ [Eq.~\eqref{eq-sigma2}] between $S$ and $T$ via the entanglement swapping $\Phi_{R_4^{(1,2)}}^{S^{(2)}T^{(2)}}$, and on the third path map $\rho_{S^{(3)}R_5^{(1)}}$, $\rho_{R_5^{(2)}R_6^{(1)}}$, and $\rho_{R_6^{(2)}T^{(3)}}$ into $\sigma_{S^{(3)}T^{(3)}}'$ between $S$ and $T$ via $\Phi^{\rm{opt}}$ as shown in Eq.~\eqref{eq-sigma3}.
Next, execute (${\rm {I}^2}$).
This yields a state $\rho_{R_1^{(2,3,4)}R_2^{(1,2,3)}}$ [Eq.~\eqref{eq-rho2}] between $R_1$ and $R_2$ by concentrating $\sigma_{R_1^{(2)}R_2^{(1)}}$, $\sigma_{R_1^{(3)}R_2^{(2)}}$, and $\sigma_{R_1^{(4)}R_2^{(3)}}$, and creates another state
\begin{eqnarray*}
    \sigma_{S^{(2,3)}T^{(2,3)}}=\Psi_{ST}(\sigma_{S^{(2)}T^{(2)}}'\otimes\sigma_{S^{(3)}T^{(3)}}')
\end{eqnarray*}
between $S$ and $T$ by concentrating $\sigma_{S^{(2)}T^{(2)}}'$ and $\sigma_{S^{(3)}T^{(3)}}'$.
Then, perform (${\rm {II}^2}$) to convert the three states $\rho_{S^{(1)}R_1^{(1)}}$, $\rho_{R_1^{(2,3,4)}R_2^{(1,2,3)}}$, and $\rho_{R_2^{(4)}T^{(1)}}$
in the first path to a single state
\begin{eqnarray*}
    \sigma_{S^{(1)}T^{(1)}}''=\rho_{S^{(1)}T^{(1)}}^{a'}
\end{eqnarray*}
between $S$ and $T$, where $\rho_{S^{(1)}T^{(1)}}^{a'}$ is defined in Eq.~\eqref{eq-rho_a2}.
Finally, apply (${\rm {I}^2}$) to
concentrate two outputs $\sigma_{S^{(1)}T^{(1)}}''$ and $\sigma_{S^{(2,3)}T^{(2,3)}}$.
The resulting state is
\begin{eqnarray*}
    \sigma_{ST}'' &=&\Psi_{ST}\left[\sigma_{S^{(1)}T^{(1)}}''\otimes\sigma_{S^{(2,3)}T^{(2,3)}}\right]\nonumber\\ &=&\Psi_{ST}^{(2,3,1)}\left[\sigma_{S^{(1)}T^{(1)}}''\otimes\sigma_{S^{(2)}T^{(2)}}'\otimes\sigma_{S^{(3)}T^{(3)}}'\right]\nonumber\\
    &=&\Psi_{ST}^{(2,3,1)}\left[\rho_{S^{(1)}T^{(1)}}^{a'}\otimes\sigma_{S^{(2)}T^{(2)}}'\otimes\sigma_{S^{(3)}T^{(3)}}'\right]
\end{eqnarray*}
where $(2,3,1)$ indicates that the three states $\sigma_{S^{(1)}T^{(1)}}''$, $\sigma_{S^{(2)}T^{(2)}}'$, and $\sigma_{S^{(3)}T^{(3)}}'$ are concentrated in order $\sigma_{S^{(2)}T^{(2)}}'$, $\sigma_{S^{(3)}T^{(3)}}'$, $\sigma_{S^{(1)}T^{(1)}}''$.
From the first example, if $\Phi$ being not $E$-based order-independent on $\mathcal{D}_{\mathcal
S}$,then one cannot conclude $E\left[\rho_{S^{(1)}T^{(1)}}^{a}\right]=E\left[\rho_{S^{(1)}T^{(1)}}^{a'}\right]$, and hence $E\left[\sigma_{ST}'\right]=E\left[\sigma_{ST}''\right]$ does not necessarily hold.
Now suppose instead that $\Phi$ is $E$-based order-independent while $\Psi$ is not for $\mathcal{D}_{\mathcal{S}}$.
Although this guarantees $E\left[\rho_{S^{(1)}T^{(1)}}^{a}\right]=E\left[\rho_{S^{(1)}T^{(1)}}^{a'}\right]$, we still cannot ensure $E\left[\sigma_{ST}'\right]=E\left[\sigma_{ST}''\right]$ unless two orders are same, $\pi_3'=(2,3,1)$.
In summary, for a general series-parallel network topology as shown in Fig.~\ref{fig-non_commutative_concentration}, if either $\Phi$ or $\Psi$ fails to be $E$-based order-independent for $\mathcal{D}_{\mathcal{S}}$, then the order in which operations (${\rm {I}^2}$) and (${\rm {II}^2}$) are applied first can affect the final output state.

When both maps $\Phi$ and $\Psi$ are not generalized $E$-based order-independent, a suitable QN topology—based on the two simplification operations—should satisfy one of the following criteria:

\textbf{Remark:} When entanglement swapping or entanglement concentration is not $E$-based order-independent for the state set $\mathcal{D}_{\mathcal{S}}$, constructing series–parallel QNs with states in $\mathcal{D}_{\mathcal{S}}$ requires not only selecting the operation order within each maximal serial or parallel sub-network, but also determining the sequence of the simplification operations (${\rm {I}^2}$) and (${\rm {II}^2}$) between sub-networks, which adds an extra layer of complexity to QN design.

Moreover, these findings suggest that, in the design of QNs, the order in which simplification operations (${\rm {I}^2}$) and (${\rm {II}^2}$) are applied can be disregarded. Specifically, assuming all source states in a QN belong to the set $\mathcal{D}_{\mathcal{S}}$, constructing a QN with topology as depicted in Fig.~\ref{fig-non_commutative_swapping} is not appropriate when entanglement swapping is not $E$-based order-independent, while establishing a QN with topology as shown in Fig.~\ref{fig-non_commutative_concentration} is not suitable when either the entanglement swapping or the entanglement concentration is not $E$-based order-independent.

\section*{Competing interests}
The Authors declare no Competing Financial or Non-Financial Interests.


\end{document}